\newcommand{\ket}[1]{|#1\rangle}
\newcommand{\bra}[1]{\langle#1|}
\definecolor{linkcol}{rgb}{0.0,0.5,0.65}
\crefname{section}{Section}{Section} 
\crefname{subsection}{Section}{Section}
\newtheoremstyle{mystyle}%
{4pt}
{0pt}
{\itshape}
{}
{\bfseries}
{.}
{.2em}
{}
\theoremstyle{mystyle}
\crefname{thm}{Theorem}{Theorems}
\Crefname{thm}{Theorem}{Theorems}
\newtheorem{thm}{Theorem}
\newtheorem{obs}{Observation}
\crefname{obs}{Observation}{Observations}
\Crefname{obs}{Observation}{Observations}
\newtheorem{lemma}[thm]{Lemma}
\crefname{lemma}{Lemma}{Lemmas}
\Crefname{lemma}{Lemma}{Lemmas}
\crefname{corll}{Corollary}{Corollaries}
\newtheorem{claim}{Claim}
\newtheorem{defn}{Definition}
\theoremstyle{remark}
\newenvironment{subtheorem}[1]{%
  \def\subtheoremcounter{#1}%
  \refstepcounter{#1}%
  \protected@edef\theparentnumber{\csname the#1\endcsname}%
  \setcounter{parentnumber}{\value{#1}}%
  \setcounter{#1}{0}%
  \expandafter\def\csname the#1\endcsname{\theparentnumber.\Alph{#1}}%
  \ignorespaces
}{%
  \setcounter{\subtheoremcounter}{\value{parentnumber}}%
  \ignorespacesafterend
}
\newcounter{parentnumber}
\newcommand{\btheta}{\bm{\theta}}
\newcommand{\supp}{\mathrm{supp}}
\DeclareMathOperator*{\argmin}{arg\,min}
\DeclareMathOperator*{\Var}{Var}
\DeclareMathOperator*{\poly}{poly}
\newcommand{\globalcolor}[1]{%
  \color{#1}\global\let\default@color\current@color
}
\begin{document}


\title{Dynamic parameterized quantum circuits: expressive and barren-plateau free}
\author{Abhinav Deshpande}
\affiliation{IBM Quantum, Almaden Research Center, San Jose, CA 95120, USA}
\author{Marcel Hinsche}
\affiliation{IBM Quantum, IBM Research Europe – Zurich}
\affiliation{Dahlem Center for Complex Quantum Systems, Freie Universit\"at Berlin, Germany}
\author{Khadijeh Najafi}
\affiliation{IBM Quantum, IBM T.J. Watson Research Center, Yorktown Heights, NY 10598, USA}
\affiliation{MIT-IBM Watson AI Lab, Cambridge, MA 02142, USA}
\author{Kunal Sharma}
\affiliation{IBM Quantum, IBM T.J. Watson Research Center, Yorktown Heights, NY 10598, USA}
\author{Ryan Sweke}
\affiliation{IBM Quantum, Almaden Research Center, San Jose, CA 95120, USA}
\author{Christa Zoufal}
\affiliation{IBM Quantum, IBM Research Europe – Zurich}


\begin{abstract}
Classical optimization of parameterized quantum circuits is a widely studied methodology for the preparation of complex quantum states, as well as the solution of machine learning and optimization problems. However, it is well known that many proposed parameterized quantum circuit architectures suffer from drawbacks which limit their utility, such as their classical simulability or the hardness of optimization due to a problem known as ``barren plateaus''.
We propose and study a class of \textit{dynamic} parameterized quantum circuit architectures. These are parameterized circuits containing intermediate measurements and feedforward operations. In particular, we show that these architectures:
\begin{enumerate}
\item Provably do not suffer from barren plateaus.
\item Are expressive enough to describe arbitrarily deep unitary quantum circuits.
\item Are competitive with state of the art methods for preparing ground states and facilitating the representation of nontrivial thermal states.
\end{enumerate}
These features make the proposed architectures promising candidates for a variety of applications.

\end{abstract}
\maketitle

\section{Introduction}\label{section:introduction}

Variational quantum algorithms (VQAs) are a broad class of highly studied quantum algorithms for solving a diverse range of problems~\cite{cerezo2021costfunctiondependentbarren}. Specifically, there are now VQA-based approaches for machine learning with classical data~\cite{benedetti2019parameterizedquantumcircuitsmachine}, classical optimization~\cite{zhou2020quantumapproximateoptimizationalgorithm}, learning quantum systems~\cite{caro2023outofdistributiongeneralizationlearningquantum} and the preparation of ground and thermal states of complex quantum systems~\cite{tilly2022variationalquantumeigensolverreview,cerezo2021costfunctiondependentbarren}. The basic idea of all variational quantum algorithms is as follows: one defines a class of parameterized quantum circuits, chooses an initial circuit structure from this class in some way, and then iteratively updates the circuit parameters or circuit structure using a classical optimization algorithm.
This iterative adjustment seeks to minimize a loss function, which is designed to evaluate the quality of a specific quantum circuit as a solution to the problem at hand.

Given the above, the first thing one needs to do when designing a variational quantum algorithm is choose an appropriate parameterized quantum circuit (PQC). At a high level, any good PQC for a specific problem should ideally satisfy all of the following criteria:

\begin{enumerate}
\item \textbf{Expressivity:} There should exist circuit parameter instances that correspond to good solutions to the problem of interest.
\item \textbf{Trainability:} One should be able to find the circuit instances corresponding to good solutions, via the chosen optimization method, using a polynomial amount of resources such as runtime.
\item \textbf{Classical hardness:} There should not exist a classical algorithm that can efficiently simulate the PQC, and hence, the variational algorithm.
\end{enumerate}
While a large variety of different PQC architectures have been proposed, there are unfortunately not many candidates which might satisfy all three criteria. In fact, it is not even clear how to rigorously formalize each criterion~\cite{gil-fuster2024relationtrainabilitydequantizationvariational}, or whether it is indeed possible to satisfy all three simultaneously for meaningful problems.
More specifically, for a large variety of architectures there is a known tradeoff between expressivity and trainability. In particular, one can show that expressivity often leads to ``barren plateaus'', which are an obstacle to trainability via gradient-based optimization algorithms~\cite{larocca2025barrenplateausvariationalquantuma}.
Often, this tradeoff also leads to another impediment.
In particular, one can also show that for a wide variety of architectures, decreasing expressivity sufficiently to mitigate barren plateaus can often lead to the existence of efficient algorithms for classical simulations~\cite{cerezo2023doesprovableabsencebarren}.

With this in mind, in this work we study \textit{dynamic} parameterized quantum circuits and argue that they help tackle the tradeoff between expressivity and trainability.
In doing so, we are free to restore expressivity, making these circuits classically hard to simulate in the worst case.
Specifically, we study parameterized quantum circuits that include nonunitary measurement and feedforward operations, which are known to provide a significant resource for quantum error correction~\cite{aharonov1996limitationsnoisyreversiblecomputation}, measurement-based quantum computing~\cite{briegel2009measurementbasedquantumcomputation}, and the preparation of interesting states~\cite{malz2024preparationmatrixproductstates,baumer2024efficientlongrangeentanglementusing}.
Indeed, the utility of dynamic circuit operations in quantum computing has already stimulated previous proposals of nonunitary parameterized quantum circuit architectures~\cite{cong2019quantumconvolutionalneuralnetworks,bondarenko2020quantumautoencodersdenoisequantum,beer2021dissipativequantumgenerativeadversarial,poland2020nofreelunchquantum,ilin2024dissipativevariationalquantumalgorithms,yan2024variationalloccassistedquantumcircuits,mele2024noiseinducedshallowcircuitsabsence}. Our contribution in this work is to provide a unifying framework for dynamic PQC architectures, and to analyze both analytically and numerically their potential for variational quantum algorithms, with respect to the criteria discussed above.

\subsection{Structure of this work}\label{ss:structure-of-this-work}

This work is structured as follows. We first give an overview of our contributions in \cref{ss:our_results}.
Next, in Section~\ref{section:VQA_overview} we give a high-level overview of variational quantum algorithms, in order to provide both context and notation. Readers who are familiar with variational quantum algorithms could safely skip this section. We then proceed in Section~\ref{section:ansatz} to introduce the dynamic parameterized circuit architectures we study here in more detail, and to discuss their relation to existing nonunitary parameterized quantum circuit proposals. With this established, we study the trainability of DPQC architectures in Section~\ref{section:trainability}. In particular, we begin with a discussion of what ``trainability'' actually means, and how this notion is related to barren plateaus. We then state our main analytical result, which provides sufficient conditions for the absence of barren plateaus in DPQC architectures. We then introduce the ingredients for the proof, namely a statistical mechanics model, and show how it can be used to derive existing barren plateau results.  Given these theoretical foundations, we then provide the aforementioned numerical experiments for both ground and thermal state preparation in Section~\ref{section:utility}. Finally, we discuss in Section~\ref{section:classical_hardness} the classical simulability of DPQC architectures.

\subsection{Our results and contributions}\label{ss:our_results}
\subsubsection{Dynamic parameterized quantum circuits}\label{sss:intro-dynamical-parameterized-quantum circuits}

\begin{figure}
\begin{center}
\resizebox{\columnwidth}{!}{\begin{quantikz}[row sep={1cm,between origins},wire types = {q,q,q,q,q,q}]
\lstick{$\ket{0}$} & \gate[2]{U(\btheta)}  & \gate[style={fill=blue!20}]{\mathcal{F}(\btheta)} \setwiretype{q} & \gate[2]{U(\btheta)} &  & \gate[2]{U(\btheta)}   & & \gate[2]{U(\btheta)}&\ground{}  \\
\lstick{$\ket{0}$} & & \gate[2]{U(\btheta)}  &  & \gate[2]{U(\btheta)}  &     & \gate[style={fill=blue!20}]{\mathcal{F}(\btheta)}  & & \meter{} \\
\lstick{$\ket{0}$} & \gate[2]{U(\btheta)}  & & \gate[style={fill=blue!20}]{\mathcal{F}(\btheta)}   & & \gate[2]{U(\btheta)}  &   &\gate[2]{U(\btheta)} &\ground{}   \\
\lstick{$\ket{0}$} & & \gate[2]{U(\btheta)} & \gate[style={fill=blue!20}]{\mathcal{F}(\btheta)}  &  \gate[2]{U(\btheta)} &   & \gate[style={fill=blue!20}]{\mathcal{F}(\btheta)} && \meter{} \\
\lstick{$\ket{0}$} & \gate[2]{U(\btheta)}  &  & \gate[2]{U(\btheta)} &  & \gate[2]{U(\btheta)} & &\gate[2]{U(\btheta)} & \ground{} \\
\lstick{$\ket{0}$} & & &  & &    & \gate[style={fill=blue!20}]{\mathcal{F}(\btheta)} & &\meter{}
\end{quantikz}}
\end{center}
\vspace{1em}
\begin{center}
\begin{quantikz}[row sep={1cm,between origins}]
\qw & \gate[style={fill=blue!20}]{\mathcal{F}(\btheta)} & \qw
\end{quantikz}
$:=$
\begin{quantikz}[row sep={1cm,between origins},transparent]
\lstick{$\ket{0}$} & \gate{R_X(\btheta)} & \ctrl[vertical wire=c]{1} & \ground{} \\
\qw & & \gate{\mathcal{F}} & \qw
\end{quantikz} \begin{quantikz}[row sep={1cm,between origins},transparent]
\qw & \gate{\mathcal{F}} & \qw
\end{quantikz}
$:=$
\begin{quantikz}[row sep={1cm,between origins},transparent]
\qw & \meter{} \gategroup[1,steps=2,style={dashed,rounded
corners},background]
\qw & \gate{\text{If 1, } U_1} \setwiretype{c} & \setwiretype{q}
\end{quantikz}
\end{center}
\caption{An illustration of the dynamic parameterized quantum circuit (DPQC) architectures that we consider in this work. These circuits consist of parameterized two-qubit unitary gates~$U(\btheta)$, as well as parameterized nonunitary single-qubit dynamic operations, which are denoted as $\mathcal{F}(\btheta)$-gates. Each such $\mathcal{F}(\btheta)$ operation is a probabilistic implementation of a feedforward operation $\mathcal{F}$.
}\label{fig:dpc-intro}
\end{figure}

First, we describe the dynamic parameterized quantum circuit (DPQC) architectures that we consider as variational ans\"atze in this work. They consist of layers of parameterized two-qubit gates $U(\btheta)$ interspersed with parameterized dynamic operations $\mathcal{F}(\btheta)$ (see also \cref{fig:dpc-intro}). We will denote the full parameterized dynamic circuit as a channel $\mathcal{C}(\bm{\theta})$ with parameters $\btheta$. Note that we generally work with operations that only depend on one or a few components of $\btheta$ though they are denoted as functions of the entire vector $\btheta$.
While dynamic operations can in principle be applied across many qubits, in this work, we choose to focus on single-qubit dynamic operations. More specifically, as illustrated in \cref{fig:dpc-intro}, here each parameterized dynamic operation $\mathcal{F}(\btheta_i)$ is a probabilistic implementation  of a \textit{feedforward} operation $\mathcal{F}$ with probability $\sin^2(\btheta_i/2)$.
That is,
\begin{align}
 \mathcal{F}(\btheta_i)(\cdot) = \cos^2(\btheta_i/2) (\cdot) + \sin^2(\btheta_i/2) \mathcal{F}(\cdot).
\end{align}
The feedforward operations themselves consist of a measurement on the respective qubit followed by a conditional gate: if the measurement outcome was 0, then apply $U_0=I$, if instead it was 1, apply $U_1$ given by
\begin{equation}
U_1 =
\begin{pmatrix}
\cos {\varphi}e^{-i\phi} & -i \sin \varphi \\
- i \sin \varphi & \cos{\varphi} e^{i \phi}
\end{pmatrix} \,.
\end{equation}
We note that the ancilla qubits that control the probability of implementing $\mathcal{F}$ operations are unentangled with the rest of the circuit, and can be simulated classically---i.e.\,one does not need an additional physical qubit for each $\mathcal{F}(\btheta)$ operation. At the end of the circuit, an observable supported on some subset of the qubits is measured in order to calculate the value of some loss function. We call the qubits on which this observable is supported \textit{system} qubits, and refer to the remaining qubits as ancilla qubits. We stress that the position of the dynamic operations in the circuit, the nature of the parameterized and conditional operations within the dynamic operations, and the number and allocation of system and ancilla qubits, are all design choices that one can make freely. Finally, given a parameterized dynamic quantum circuit $\mathcal{C}(\btheta)$, at initialization we draw the parameters of the two-qubit gates $U(\bm{\theta})$ from a locally scrambling ensemble, which refers to an ensemble that is invariant under conjugation via single-qubit unitaries drawn from a unitary 2-design.

\subsubsection{Expressivity}\label{sss:intro-expressivity}

Having defined the parameterized dynamic circuit architectures that we study in this work, we start with two simple observations concerning the expressivity of these architectures.

\begin{obs}[Expressivity of DPQC architectures with probabilistic feedforward---informal]\label{obs:expressivity_informal_probabilistic} Note that $\mathcal{F}(\btheta_i = 0)$ is the identity channel. Therefore, by setting all the circuit parameters that control the probability of implementing an $\mathcal{F}$ gate to 0, one obtains a purely unitary ansatz. With this in mind, given a DPQC architecture $\mathcal{C}(\btheta)$ with connectivity graph $\mathsf{G}$, let  $d$ be the depth of the architecture with all feedforward operations removed. This architecture can realize \emph{all} unitary operations of depth $d$ on $\mathsf{G}$.
\end{obs}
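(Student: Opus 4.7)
The plan is to observe that the dynamic architecture contains, as a parameter slice, a purely unitary sub-architecture of depth $d$ on $\mathsf{G}$, and then to invoke universality of the two-qubit parameterization $U(\btheta)$ on each edge.

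First I would expand the probabilistic feedforward channel, $\mathcal{F}(\btheta_i)(\cdot) = \cos^2(\btheta_i/2)(\cdot) + \sin^2(\btheta_i/2)\,\mathcal{F}(\cdot)$, and note that at $\btheta_i = 0$ one has $\cos^2(0) = 1$ and $\sin^2(0) = 0$, so $\mathcal{F}(0)$ is the identity channel on the corresponding qubit. Since the feedforward parameters are disjoint from the parameters appearing in the two-qubit gates (each component of $\btheta$ enters only in one or a few operations, as stated in the text), setting every feedforward-probability parameter to $0$ leaves the remaining parameters controlling the $U(\btheta)$-gates completely unconstrained. The resulting circuit template is obtained from $\mathcal{C}(\btheta)$ by replacing every $\mathcal{F}(\btheta)$ operation by a wire, which by the very definition of $d$ yields a unitary circuit of depth $d$ on the connectivity graph $\mathsf{G}$.

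It then remains to argue that the sliced unitary ansatz realizes every depth-$d$ unitary on $\mathsf{G}$. Any such unitary decomposes into $d$ layers, each layer being a product of two-qubit unitaries acting on a disjoint set of edges of $\mathsf{G}$. Under the standard assumption that the parameterization $U(\btheta)$ covers the full two-qubit unitary group (up to a global phase irrelevant for any observable), one can match each gate in this decomposition by a suitable choice of the corresponding block of parameters. Stacking these choices across layers and combining with the feedforward parameters set to $0$ produces a $\btheta$ that realizes the target unitary in $\mathcal{C}(\btheta)$.

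The statement is therefore essentially a direct consequence of the definitions, and no real technical obstacle arises. The only subtlety worth flagging is a modelling one: the observation implicitly assumes that the block of parameters inside each $U(\btheta)$ is universal over two-qubit unitaries (or at least over the subgroup one cares about for the target class of circuits), and that the same $\btheta$ vector simultaneously indexes the $U$- and $\mathcal{F}$-parameters independently. Once these conventions are made explicit---as is standard for PQC constructions---the observation follows, which is why the statement is flagged as informal.
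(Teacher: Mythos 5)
Your proposal is correct and follows essentially the same route as the paper: the paper's justification is precisely that $\mathcal{F}(\btheta_i=0)$ is the identity channel, so zeroing all feedforward-probability parameters leaves a purely unitary depth-$d$ ansatz on $\mathsf{G}$, which (under the implicit assumption that each $U(\btheta)$ is a universal two-qubit parameterization) realizes any depth-$d$ unitary on $\mathsf{G}$. Your explicit flagging of the universality assumption on $U(\btheta)$ is a reasonable caveat that the paper leaves implicit in labelling the observation informal.
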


Said another way, if one starts from a unitary parameterized quantum circuit and then adds parameterized feedforward operations, the resulting DPQC architecture is at least as expressive as the unitary architecture from which one started. This observation is helpful, as it ensures that one can in principle prepare interesting pure states using such an architecture. In the observation below, we highlight that even DPQC architectures with only \emph{deterministic} feedforward operations (i.e.\,$\mathcal{F}(\theta) = \mathcal{F}(\pi)$) can realize nontrivial pure states.

\begin{obs}[Expressivity of DPQC architectures with deterministic feedforward---informal]\label{obs:expressivity_informal_deterministic} Consider any depth-$d$ DPQC architecture $\mathcal{C}(\btheta)$, in which all feedforward operations are deterministic (i.e. happen with probability 1) and  only occur on ancilla qubits (qubits on which the loss function has no support). Denote the connectivity subgraph of the system qubits in $\mathcal{C}(\btheta)$ as $\mathsf{G}$. Then, for every unitary circuit $U_{(d,\mathsf G)}$ on $\mathsf{G}$ of depth $d$, there exists a setting of the parameters $\btheta$ such that $U_{(d,\mathsf G)}$ is implemented by $\mathcal{C}(\btheta)$.
\end{obs}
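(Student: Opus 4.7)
The plan is to exploit the freedom in choosing the two-qubit gate parameters of $\mathcal{C}(\btheta)$ to completely decouple the ancilla register from the system register, so that the deterministic feedforward operations---being supported only on ancillas by hypothesis---cannot influence the reduced state on the system. The system would then evolve purely under a depth-$d$ unitary circuit on $\mathsf{G}$, which by the standard universality of the parameterized two-qubit gate $U(\btheta)$ can be tuned to match any desired target $U_{(d,\mathsf G)}$.

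Concretely, I would first partition every two-qubit gate in $\mathcal{C}(\btheta)$ into three classes according to its support: (i) system--system, (ii) system--ancilla, and (iii) ancilla--ancilla. Given a target $U_{(d,\mathsf G)}$, I would set every system--ancilla and every ancilla--ancilla gate to the identity (using universality of $U(\btheta)$), and set each system--system gate lying on an edge $e$ in layer $\ell$ of $\mathcal{C}(\btheta)$ to the two-qubit unitary occupying edge $e$ in layer $\ell$ of $U_{(d,\mathsf G)}$ (again using universality). Under this assignment, the joint initial state $\ket{0}^{\otimes n}$ remains a tensor product between the system and ancilla factors at every time step: no gate couples the two registers, and each deterministic feedforward $\mathcal{F}(\btheta = \pi)$ acts on a single ancilla qubit only. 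Consequently, the induced channel on the system register is a pure unitary channel equal to $U_{(d,\mathsf G)}$, while the ancilla register evolves under some ancilla-only channel whose action is irrelevant to any loss function supported on system qubits.

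The main subtlety to check, and the only real technical point, is the layer-by-layer alignment between $\mathcal{C}(\btheta)$ and an arbitrary $U_{(d,\mathsf G)}$. By definition, $\mathsf{G}$ is the connectivity subgraph induced on the system qubits by $\mathcal{C}(\btheta)$, so every edge of $\mathsf{G}$ arises from at least one system--system two-qubit gate of $\mathcal{C}(\btheta)$. One needs to verify that the layered structure of $\mathcal{C}(\btheta)$ is rich enough to reproduce an arbitrary depth-$d$ circuit on $\mathsf{G}$; layers of $\mathcal{C}(\btheta)$ that happen to contain no system--system gate simply contribute the identity on the system register, which is consistent with an ``empty'' layer of $U_{(d,\mathsf G)}$. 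Beyond this bookkeeping, the nonunitary character of the feedforward operations is rendered entirely harmless by the decoupling argument, since a channel acting on an isolated register cannot affect the reduced state of the complementary register---this is what makes the observation a strict strengthening of Observation~\ref{obs:expressivity_informal_probabilistic} for architectures with deterministic ancilla feedforward.
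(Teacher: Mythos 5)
Your proposal is correct and follows essentially the same route as the paper: disconnect the ancillas by setting every gate coupling a system qubit to an ancilla to be non-entangling, so the deterministic feedforward operations act on an isolated register and the system evolves under a depth-$d$ parameterized circuit on $\mathsf{G}$ that can be tuned to any target $U_{(d,\mathsf G)}$. Your additional remarks on layer alignment and on gate universality make explicit some bookkeeping the paper leaves implicit, but the core argument is identical.
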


The proof of the above observation is straightforward: simply observe that one can ``disconnect'' all ancilla qubits on which a dynamic operation occurs by setting any parameterized gate between a system qubit and such a qubit to be non-entangling. One is then left precisely with a depth-$d$ parameterized circuit with connectivity graph $\mathsf{G}$. Said another way, if one starts from a unitary parameterized quantum circuit of a certain depth and connectivity, and then adds ancilla qubits and feedforward operations on these ancilla qubits, the resulting parameterized dynamic quantum circuit is at least as expressive as the original unitary parameterized quantum circuit.
We use parameterized dynamic circuit architectures with precisely such a structure for our ground state preparation experiments (See \cref{sss:intro-numerical-results} and \cref{section:utility}).
In these experiments, we indeed observe convergence of our model to pure states. While we have not investigated whether this is the mechanism through which the model reaches pure states, the observation above again guarantees us that this is at least possible in principle.

\subsubsection{Absence of barren plateaus}\label{sss:intro-absence-of-BP}
One of the primary contributions of this work is to provide sufficient conditions for the absence of barren plateaus in parameterized dynamic quantum circuit architectures. In purely unitary random circuits, the onset of barren plateaus is closely linked to the size of the observable’s \textit{backward lightcone} \cite{letcher2024tightefficientgradientbounds,larocca2025barrenplateausvariationalquantuma}. Specifically, the loss function's variance decays exponentially with the lightcone's size due to the scrambling effect of the random unitary layers, which make the observable increasingly insensitive to individual parameter changes. We show that inserting feedforward operations $\mathcal{F}$ fundamentally alters this behavior: each feedforward operation counteracts the scrambling, providing a mechanism to prevent barren plateaus without necessarily sacrificing expressivity. To quantify this, we introduce the \textit{feedforward distance} $f$, which measures an observable's distance to the nearest feedforward operation $\mathcal{F}$ within its backward lightcone (see \Cref{fig:ff_distance} for a schematic explanation). We show a lower bound on the variance of the loss function that decays exponentially with feedforward distance $f$, rather than the size of the backward light cone.

\begin{figure}[t]
\includegraphics[width = \columnwidth]{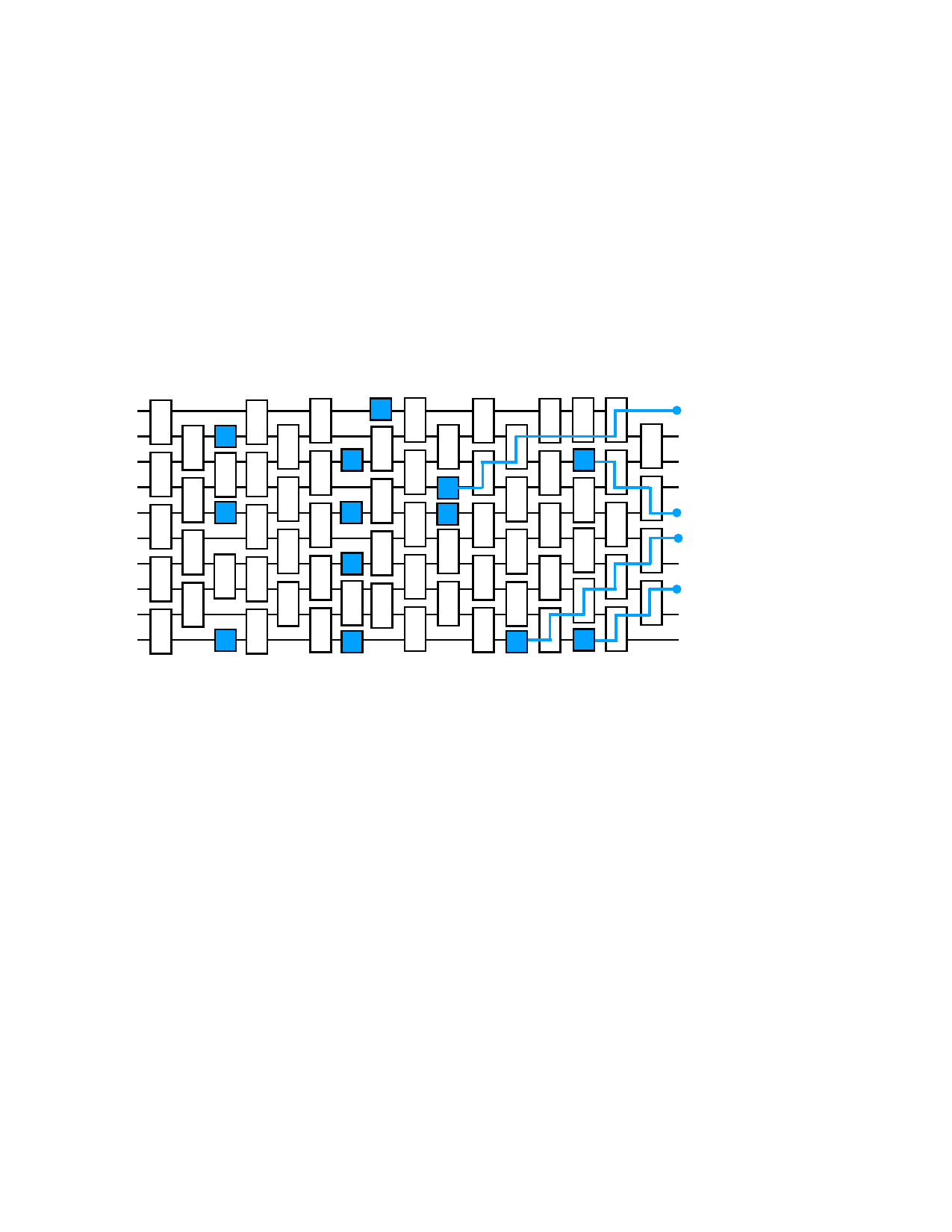}
\caption{An illustration of the shortest paths from a qubit measurement to a feedforward operation through the backwards light cone of the measurement. The \textit{feedforward distance} of an observable, with respect to a specific DPQC architecture, is the maximum length of such paths, over all qubits on which the observable is supported. Theorem~\ref{thm:BP_DPQC_informal} provides an upper bound on the variance of a local observable in terms of the feedforward distance.}\label{fig:ff_distance}
\end{figure}

\begin{thm}[Absence of barren plateaus in DPQCs for $k$-local Hamiltonians---informal]\label{thm:BP_DPQC_informal}
Let $\rho(\bm{\theta})=\mathcal{C}(\bm{\theta})\left(\ketbra{0^n}\right)$ be the output state of the parameterized circuit ensemble introduced above and let $H $ be a $k$-local Hamiltonian.
Then, the variance of the loss function $L = \Tr \rho(\bm{\theta}) H $ is lower bounded as
\begin{align}
\Var_{\btheta }{L} \geq \left( \frac{\alpha}{5} \right)^{k(f +1)} \cdot  \norm{H}^2_{HS},
\end{align}
where $\norm{H}_{HS} := \sqrt{\Tr H^2} $ is the Hilbert-Schmidt norm of $H$ and $\alpha$ is a constant that depends on the entangling power of the ensemble of two-qubit gates\footnote{For Haar-random two-qubit gates, we have $\alpha = 1$.}.
\end{thm}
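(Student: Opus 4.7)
My plan is to compute the variance via a statistical-mechanics representation of the second moment, in the same spirit as the stat-mech model the paper sets up to re-derive existing barren-plateau results, and then to exploit each feedforward operation as a local ``defect'' that truncates the exponential lightcone decay.

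First, I would decompose $H$ in the Pauli basis, $H=\sum_P h_P P$, where each $P$ has weight at most $k$ by $k$-locality. The locally scrambling property of the initialization---in particular, the single-qubit Pauli twirling induced by the first layer of two-qubit gates---should force $\mathbb{E}_{\btheta}[L]=0$ and kill all cross terms $\mathbb{E}_{\btheta}[\Tr(\rho(\btheta)P)\Tr(\rho(\btheta)P')]$ for $P\neq P'$. This reduces the problem to a uniform lower bound on $\mathbb{E}_{\btheta}[\Tr(\rho(\btheta)P)^2]$ for each Pauli $P$ appearing in $H$; summing these bounds weighted by $h_P^2$, with the appropriate doubled-space normalization, reassembles the prefactor $\norm{H}_{HS}^2$.

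Next, I would recast the per-Pauli second moment as a doubled-Hilbert-space contraction, $\Tr\bigl[(P^{\otimes 2})\cdot\mathbb{E}_{\btheta}[\mathcal{C}(\btheta)^{\otimes 2}](\ketbra{0^n}^{\otimes 2})\bigr]$. Averaging each two-qubit gate (at least a local $2$-design by the locally scrambling assumption) projects onto the two-qubit commutant $\mathrm{span}\{I^{\otimes 2},\mathrm{SWAP}\}$, so every qubit at every time step carries a binary ``spin'' and each gate becomes a local transfer matrix whose off-diagonal mixing is controlled by the entangling parameter $\alpha$. The observable $P^{\otimes 2}$ sets a SWAP boundary on the support of $P$ at the final time, and $\ketbra{0^n}^{\otimes 2}$ sets another boundary at the initial time, so the second moment becomes a partition function over spin histories on the circuit graph.

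The crucial step is analyzing the averaged channel $\mathbb{E}_{\btheta}[\mathcal{F}(\btheta)^{\otimes 2}]$ in this framework. Because $\mathcal{F}$ begins with a projective measurement that destroys the coherences between the two computational-basis sectors, and because the $\btheta$ average over $\cos^2(\btheta/2)$ and $\sin^2(\btheta/2)$ keeps both the identity branch and the feedforward branch with $\Theta(1)$ weight, the averaged feedforward should carry the SWAP sector into a state compatible with the $\ketbra{0}$ initial boundary with a constant amplitude, in stark contrast to the $O(1/D)$ amplitude produced by a generic Haar-random unitary. This ``pinning'' of the stat-mech spin is the main technical obstacle: one must quantify carefully that each feedforward contributes a dimension-independent constant to the partition function, so that inserting an $\mathcal{F}$ genuinely terminates the exponential lightcone decay rather than merely slowing it.

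Given these ingredients, the lower bound follows from a single-configuration estimate. For each of the at most $k$ qubits on which $P$ is supported, I would select the shortest path through the backward lightcone to a feedforward operation, which by definition of the feedforward distance has length at most $f+1$. Restricting the partition sum to the configuration carrying SWAP spins along these $k$ paths and $I^{\otimes 2}$ spins elsewhere, each of the at most $k(f+1)$ two-qubit transfer matrices along the paths contributes at least $\alpha/5$ (the constant $1/5$ absorbing the worst-case combinatorial loss from the path selection, the terminal feedforward pinning, and the neighboring identity-spin boundary). Multiplying these local contributions and summing over the Pauli decomposition of $H$ then produces $\Var_{\btheta} L\geq (\alpha/5)^{k(f+1)}\norm{H}_{HS}^2$.
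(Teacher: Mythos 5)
Your overall route is the same as the paper's: reduce to per-Pauli second moments via local scrambling, map to the $\{I,\mathrm{SWAP}\}$ stat-mech model with transfer matrices, treat each feedforward as a constant-weight source of the SWAP sector, and follow shortest paths of length at most $f$ from each supported qubit back to a feedforward. The identification of the feedforward's dimension-independent $\mathtt{I}\to\mathtt{S}$ amplitude as the key mechanism is exactly right (the paper computes this entry explicitly as $\sin^2\varphi/3$ from the purified feedforward gadget).

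However, your final step contains a genuine gap. A literal ``single-configuration estimate'' --- pinning SWAP spins along the chosen paths and $I^{\otimes 2}$ spins everywhere else, then multiplying the local transfer-matrix entries --- does not work. First, the initial two-copy boundary is $(\tfrac{2}{3}\mathtt{I}+\tfrac{1}{3}\mathtt{S})^{\otimes n}$, so pinning all off-path qubits to $\mathtt{I}$ at time zero already costs a factor exponentially small in $n$. Second, and more fatally, for Haar-random two-qubit gates ($\alpha=1$, $\beta=0$) the transfer matrix sends the input $\mathtt{SI}$ to $\mathtt{II}$ with probability $4/5$ and to $\mathtt{SS}$ with probability $1/5$, and to $\mathtt{SI}$ or $\mathtt{IS}$ with probability $1-\alpha-\beta=\beta=0$; so any spacetime configuration in which a path qubit carries $\mathtt{S}$ through a gate while its off-path partner stays at $\mathtt{I}$ has \emph{zero} weight. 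The bound must instead be organized around marginal events: one chains conditional probabilities of the form $\Pr[\bm{x}^{s+1}_{\ell(s+1)}=1\mid \bm{x}^{s}_{\ell(s)}=1]\geq \alpha/5$, summing freely over all spins off the path at every time step. Relatedly, for $k>1$ the $k$ shortest paths can intersect at a gate, so the per-path bounds are not independent; the paper has to check case by case that the joint conditional probability at an intersection is still at least $(\alpha/5)^2$, a step your argument omits entirely. With these two repairs your proof becomes the paper's.
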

We note that for $k,f=O(1)$, the variance is non-vanishing. That is, we prove absence of barren plateaus for local observables under the condition that the feedforward distance $f$ is constant. We emphasize that using this result, together with Observations~\ref{obs:expressivity_informal_probabilistic} and~\ref{obs:expressivity_informal_deterministic}, one can construct DPQC architectures which are both highly expressive and barren plateau free.

Additionally, we also show that the absence of barren plateaus in DPQC architectures is robust to noise present after every gate, including nonunital noise.
Specifically, we show that when there is a single-qubit noise channel after every operation in the circuit with an average infidelity of $\gamma/2$ and nonunitality\footnote{For a definition of these quantities, refer to \cref{eq_nonunital} in Appendix~\ref{app:SM}.} $\delta \leq \gamma \leq 1/2$, a similar lower bound on the variance holds.
\begin{thm}[Noise robustness of \cref{thm:BP_DPQC_informal}---informal]\label{thm:robustness-informal}
Let the output state of the noisy circuit be $\tilde{\rho}(\btheta) = \tilde{\mathcal{C}}(\btheta) \left( \ketbra{0^n} \right)$.
The variance of the loss function $L = \Tr \tilde{\rho}(\btheta) H$ for a $k$-local Hamiltonian is given by
\begin{align}
\Var_{\btheta }{L} \geq \left( \frac{\alpha'}{5} \right)^{k(f +1)} \cdot  \norm{H}^2_{HS},
\end{align}
where $\alpha' = \alpha (1-\gamma-\delta) + 5\delta$.
\end{thm}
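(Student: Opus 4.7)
The plan is to lift the statistical mechanics argument behind \cref{thm:BP_DPQC_informal} to the noisy setting by inserting the single-qubit noise channel $\mathcal{N}$ into the replica-space transfer matrix that underlies that proof. I first recall the noiseless structure: in the two-copy (replica) formalism, $\Var_{\btheta} L$ maps to a partition function on the circuit geometry in which each Haar-averaged two-qubit gate contributes a local weight whose nontrivial mode contracts by a factor of $\alpha$, while each feedforward operation $\mathcal{F}(\btheta)$ acts as a boundary source that pins the stat-mech configuration. The feedforward distance $f$ then controls the length of the product of contracting factors, yielding the $(\alpha/5)^{k(f+1)}$ bound. The entire program for \cref{thm:robustness-informal} is to show that adding a noise layer after every gate merely renormalizes $\alpha \mapsto \alpha'$, leaving the geometric light-cone/feedforward argument intact.

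To compute the renormalized factor, I would write $\mathcal{N}$ in the Pauli transfer representation and split it into a unital piece and a nonunital piece parameterized by the average infidelity $\gamma/2$ and the nonunitality $\delta$. The unital piece acts diagonally in the replica basis: on the trivial (identity/swap) modes it is stochastic and leaves the weight unchanged, whereas on the nontrivial modes it contracts by a factor at most $1-\gamma-\delta$, multiplying the noiseless gate weight $\alpha$ by that factor. The nonunital piece has nonzero overlap between the nontrivial and trivial replica modes and therefore contributes additively, not multiplicatively, to the effective transfer-matrix eigenvalue; a direct computation of this overlap in the single-qubit Pauli basis (using the definition of nonunitality in \cref{eq_nonunital}) yields a contribution of $5\delta$. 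Summing the two contributions gives exactly $\alpha' = \alpha(1-\gamma-\delta) + 5\delta$, and the constraints $\delta \leq \gamma \leq 1/2$ guarantee both that $\mathcal{N}$ is a valid channel and that $\alpha' \geq 0$. Once the per-gate factor is in hand, the rest of the argument---the reduction to a product over the $f$-neighborhood of the observable and the $\|H\|_{HS}^2$ normalization---is identical to the noiseless case, producing the claimed bound.

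The main obstacle I anticipate is the careful bookkeeping for the nonunital part of $\mathcal{N}$. Nonunitality breaks the permutation symmetry the clean stat-mech model relies on, so the transfer matrix acquires off-diagonal entries between the trivial and nontrivial replica sectors rather than simply rescaling the existing eigenvalue. Guaranteeing that these off-diagonal contributions combine into a \emph{lower} bound of the stated form (and not merely an upper bound) requires a positivity argument that leverages $\delta \leq \gamma$: essentially, one must check that every term generated by expanding the transfer matrix along the light cone is nonnegative, which I expect to follow from a Cauchy--Schwarz-style estimate on the single-qubit partition function with a nonunital source. Verifying this positivity uniformly over all circuit configurations, and confirming that the resulting constant is exactly $5$ rather than a weaker $O(1)$ factor, will be the principal technical step.
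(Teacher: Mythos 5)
Your proposal follows essentially the same route as the paper: compose the single-qubit noise transfer matrix with the two-qubit gate transfer matrix in the $\{\mathtt{I},\mathtt{S}\}$ replica basis, observe that the nonunital entry $\delta$ acts as an additional source of $\mathtt{S}$ mass so that the per-gate survival factor is renormalized to $\alpha'/5 = \tfrac{\alpha}{5}(1-\gamma-\delta)+\delta$, and then rerun the noiseless feedforward-distance argument verbatim. The one place your plan diverges is the positivity step you flag at the end: the paper handles it not by a Cauchy--Schwarz estimate but by writing out the full $4\times 4$ combined transfer matrix and verifying, case by case for the six ways two SWAP-active paths can meet at a gate, that each conditional probability is at least $(\alpha'/5)^2$ under the constraints $\delta\leq\gamma\leq 1/2$ and $0\leq\alpha\leq 1$ --- a direct algebraic check that accomplishes exactly what you anticipated would be the principal technical step.
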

Again, for $k,f=O(1)$ and $\gamma<1/2$, there is a nonvanishing variance of the loss function.
This illustrates that the nonunitary feedforward operations help fight against noise-induced barren plateaus \cite{wang2021noiseinducedbarrenplateausvariational}.

On the technical side, Theorems~\ref{thm:BP_DPQC_informal} and~\ref{thm:robustness-informal} are proven via the so called ``stat-mech'' model~\cite{hunter-jones2019unitarydesignsstatisticalmechanics,dalzell2022randomquantumcircuitsanticoncentrate,dalzell2021randomquantumcircuitstransform,napp2022quantifyingbarrenplateauphenomenon,ware2023sharpphasetransitionlinear}, which allows one to compute second moments of statistical quantities of ensembles of random quantum circuits. In addition to using the stat-mech model to prove the above theorems, we also show how a variety of previous barren plateau results can be obtained via the stat-mech model, which may be of independent interest.

We stress that while a barren plateau result for a specific architecture indicates a significant obstacle to trainability~\cite{larocca2025barrenplateausvariationalquantuma}, the absence of a barren plateau result does \textit{not} immediately imply that an architecture is trainable in a meaningful sense~\cite{gil-fuster2024relationtrainabilitydequantizationvariational}. We discuss this issue at length in \cref{section:trainability}.

Finally, we summarize the above insights into both expressivity and absence of barren plateaus with the following observation.

\begin{obs}[DPQC architectures: connecting expressivity and absence of BPs]\label{obs:tuning_between_expressive_and_BP_free}
Taking together \cref{thm:BP_DPQC_informal} on absence of BPs, and \cref{obs:expressivity_informal_probabilistic,obs:expressivity_informal_deterministic} on expressivity, we note that DPQC architectures allow one to interpolate smoothly between highly expressive unitary architectures and BP-free nonunitary architectures with a constant feedforward depth. 
\end{obs}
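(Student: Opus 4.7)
The plan is to exhibit a single DPQC family that simultaneously realizes both endpoints of the claimed interpolation, and then to argue that smoothly deforming the parameters moves between them. I would begin by fixing a ``backbone'' unitary parameterized architecture of depth $d$ on some connectivity graph $\mathsf{G}$, chosen to target whatever unitary expressivity class is desired. To turn this into a DPQC $\mathcal{C}(\btheta)$, I would insert a parameterized feedforward gate $\mathcal{F}(\btheta_i)$ on every qubit after every constant number of backbone layers. This regular placement ensures that the feedforward distance of any local observable, measured through its backward lightcone, satisfies $f = O(1)$, so \cref{thm:BP_DPQC_informal} yields a variance lower bound that is $\Omega(1)$ in system size for any $O(1)$-local Hamiltonian; the noisy variant follows the same way via \cref{thm:robustness-informal}.

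Next, I would verify the expressivity endpoint. Since $\mathcal{F}(\btheta_i=0)$ is the identity channel, setting every feedforward probability parameter to zero collapses $\mathcal{C}(\btheta)$ onto the unitary backbone, and \cref{obs:expressivity_informal_probabilistic} then ensures that the depth-$d$ unitary expressivity on $\mathsf{G}$ is preserved. If one instead prefers to restrict feedforward operations to ancilla qubits with deterministic activation ($\btheta_i = \pi$), I would route the $\mathcal{F}$ columns onto dedicated ancilla registers and invoke \cref{obs:expressivity_informal_deterministic}; disentangling the ancillas by tuning the ancilla-system two-qubit unitaries to be non-entangling again recovers the full unitary backbone on the system qubits. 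In either version, both endpoints live inside the same $\mathcal{C}(\btheta)$.

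The interpolation claim is then immediate from smoothness of the map $\btheta \mapsto \mathcal{C}(\btheta)$: tuning the feedforward probability parameters continuously from $0$ to $\pi$ moves smoothly between the fully unitary regime and the regime with active constant-distance feedforward, with intermediate settings mixing the two. The main difficulty I anticipate is a design tension rather than a technical one: the $\mathcal{F}$ spacing must be simultaneously small enough to guarantee $f = O(1)$ (so that \cref{thm:BP_DPQC_informal} produces a system-size-independent bound) and placed so that zeroing or deterministically triggering the feedforward parameters still leaves a unitary architecture matching the targeted expressivity class on $\mathsf{G}$. A uniform layered template—one $\mathcal{F}$ column per $O(1)$ backbone layers, optionally on ancilla qubits for the deterministic route—resolves this cleanly and makes the combined statement follow by inspection from the three cited results.
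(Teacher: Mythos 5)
Your proposal is correct and follows essentially the same route as the paper, which treats this observation as an immediate synthesis of \cref{thm:BP_DPQC_informal} and \cref{obs:expressivity_informal_probabilistic,obs:expressivity_informal_deterministic}: start from a universal unitary backbone, insert feedforward operations densely enough that the feedforward distance is $O(1)$ (guaranteeing the variance lower bound at initialization), and recover the unitary endpoint by setting the feedforward probabilities to zero or by disconnecting the ancillas. Your layered template and the remark on the design tension match the construction the paper itself sketches in its discussion of worst-case hardness, so no further comparison is needed.
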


\subsubsection{Numerical results}\label{sss:intro-numerical-results}

In order to explore the potential utility of DPQC architectures for practically relevant problems, we perform numerical experiments for both ground and thermal state preparation problems.

\textbf{Ground state preparation:} For ground state preparation 
we study the perturbed toric code Hamiltonian
\begin{align}
H_{\text{toric}} = (1-h)H_0-\sum\limits_{j=1}^nhZ_j
\end{align}
with open boundary conditions, as studied in Ref.~\cite{zhang2024absencebarrenplateausfinite}. The first term of the Hamiltonian corresponds to the unperturbed toric code $\textstyle{H_0 = -\sum\limits_vA_v-\sum\limits_pB_p}$,
where $v$ and $p$ runs over all vertices and plaquettes of a 2D square lattice, and $A_v$ and $B_p$ represent the standard vertex and plaquette operators of the toric code. 
As discussed in Ref.~\cite{zhang2024absencebarrenplateausfinite}, this system is of interest as a test case as the ground state contains long-range entanglement.
In Section~\ref{ss:ground_states} we provide the results of our experiments for a square lattice of 12 system qubits and 4 ancilla qubits for different values of the perturbation $h$. In particular, we find that parameterized dynamic quantum circuits perform competitively with state-of-the-art variational parameterized quantum circuit architectures, such as those based on finite local-depth parameterized quantum circuits~\cite{zhang2024absencebarrenplateausfinite}. We note that, despite what one might expect, optimization over DPQC architectures often leads to \textit{pure states}.

\textbf{Thermal state preparation:} We study both the transverse field Ising model
\begin{align}
H_{\text{TFI}} =  -\sum\limits_{j=1}^nX_jX_{j+1}-\frac{1}{2}\sum\limits_{j=1}^nZ_j,
\end{align}
and an XY model
\begin{align}
    H_{\text{XY}} = -\sum\limits_{j=1}^n\left[\frac{3}{4}X_jX_{j+1} + \frac{1}{4}Y_jY_{j+1}\right]-\frac{1}{2}\sum\limits_{j=1}^nZ_j,
\end{align}
with periodic boundary conditions, as has been studied in prior work on thermal state preparation with dissipative variational quantum algorithms~\cite{ilin2024dissipativevariationalquantumalgorithms}.
In Section~\ref{ss:thermal_states}, we show the results of numerical experiments aimed at preparing the thermal states of the above models at inverse temperature $\beta=2$, for a variety of circuit depths, and system sizes of up to 10 qubits.  These results provide evidence that DPQC architectures can indeed provide good approximations to the thermal states of interesting models.
We wish to highlight that the results corresponding to good approximations of the target were obtained by using the infidelity with respect to the target thermal state as a loss function, which is \textit{not} a scalable approach and is not covered by our results on the absence of BPs.
Our results provide evidence that DPQC architectures are capable of representing good approximations to the thermal states of interesting systems, but more work is required to understand the performance of such architectures with respect to other loss functions that can be measured efficiently on a quantum computer.

Finally, we note that for both the ground and thermal state preparation experiments above, the gradients were obtained by automatic-differentiation executed via classical simulation \cite{zhang2023tensorcircuitquantumsoftwareframework}. However, by considering the purification of the feedforward operations, one could also obtain a gradient estimator via a parameter shift rule which may be evaluated on a quantum computer, under some assumptions on the parameterized gates~\cite{wierichs2022generalparametershiftrulesquantuma}. Further details are provided in Sec.~\ref{sec:gradient}.

\subsubsection{Classical simulability}\label{sss:intro-classical-simulability}

Lastly, in order to understand the potential utility of any variational quantum algorithm, it is crucial to understand the extent to which this variational quantum algorithm can or cannot be efficiently classically simulated when used to solve relevant problems. For the DPQC architectures studied in this work, we find the following: 

\textbf{Worst-case hardness:} It follows as an immediate consequence of \cref{obs:expressivity_informal_probabilistic,obs:expressivity_informal_deterministic} that there are DPQC architectures that are both barren-plateau-free and worst-case hard to simulate classically. In particular, one can consider starting from any universal unitary circuit architecture, and then either introducing probabilistic feedforward operations, or deterministic feedforward operations on ancilla qubits, in such a way that the resulting DPQC architecture has constant feedforward depth. We stress that there are a variety of other strategies one could use for avoiding BPs, such as deep circuits with fixed or small-angle initializations, that would also lead to architectures which are BP-free and worst-case hard to simulate. However, intuitively, one expects randomness in initialization to be useful for optimization, and therefore for these alternative strategies to be disadvantageous from an optimization perspective.

\textbf{Average-case easiness:} For a wide class of DPQC circuit architectures $\mathcal{C}(\btheta)$, with high probability over the choice of $\btheta$ at initialization, the circuit $\mathcal{C}(\btheta)$ \textit{can} be efficiently classically simulated---in the sense that expectation values of Pauli observables can be efficiently classically estimated---via low-weight Pauli path propagation~\cite{angrisani2024classicallyestimatingobservablesnoiseless,mele2024noiseinducedshallowcircuitsabsence}.

\textbf{Average-case easiness does not rule out quantum utility:} Despite being average-case easy to simulate, DPQC architectures could still provide quantum utility if worst-case hard instances occur during training. Whether or not this is the case for DPQC architectures, when used to solve practically relevant problems, remains an open and interesting direction for future research.

We discuss all of these issues at length in Section~\ref{section:classical_hardness}.

\section{Variational quantum algorithms }\label{section:VQA_overview}
In this section, we provide an overview of \emph{variational quantum algorithms} (VQAs), in order to both set notation and provide a unifying framework for the results of this work. Readers familiar with variational quantum algorithms could skip this section and return to it when they encounter any notation which is defined here. The starting point for any variational quantum algorithm is a problem, defined by a loss function $L:\mathcal{X}\rightarrow\mathbb{R}$, for some set of objects $\mathcal{X}$. The solution to the problem $L$ is given by
\begin{equation}
x^* = \argmin_{x\in\mathcal{X}}L(x).
\end{equation}
Some concrete examples of problems we might be interested in are:
\begin{enumerate}
\item Ground state search: Let $\mathcal{S}_n$ represent the set of $n$-qubit quantum states. The ground state search problem for an $n$-qubit Hamiltonian $H$ is defined by the loss function $L_H:\mathcal{S}_n\rightarrow\mathbb{R}$, where $L_H(\rho) = \mathrm{Tr}(\rho H)$. The solution to the problem is given by the ground state of the Hamiltonian.
\item Thermal state preparation: For a given Hamiltonian $H$ and inverse temperature $\beta$, we aim to prepare the state that minimizes the free energy, i.e. the problem is defined by the loss function $L_{(H,\beta)}:\mathcal{S}_n\rightarrow\mathbb{R}$ where $L_{(H,\beta)}(\rho) = \text{Tr}\left[\rho H\right] + \frac{1}{\beta}\text{Tr}\left[\rho\log\rho\right]$. The solution to the problem is the Gibbs state $\rho_{\text{Gibbs}}=\frac{e^{-\beta H}}{\text{Tr}\left[e^{-\beta H}\right]}$. 
\item Distribution learning: Let $\mathfrak{D}_n$ represent the set of all discrete distributions over length-$n$ bit strings.
The distribution learning problem for a target distribution $\mathcal{D}\in\mathfrak{D}_n$ is defined by the loss function $L_{\mathcal{D}}:\mathfrak{D}_n\rightarrow \mathbb{R}$ where $L_{\mathcal{D}}(\mathcal{D}') = d(\mathcal{D}',\mathcal{D})$, where $d$ is some metric on distributions. The solution to the problem is the target distribution~$\mathcal{D}$.
\item Quantum process learning: Let $\mathfrak{T}_n$ represent the set of all $n$-qubit quantum channels, and let $d_{\diamond}$ be the distance induced by the diamond norm. The quantum process learning problem for a target channel $\mathcal{T}\in\mathfrak{T}_n$ is defined by the loss function $L_{\mathcal{T}}:\mathfrak{T}_n\rightarrow\mathbb{R}$ where $L_{\mathcal{T}}(\mathcal{T}',\mathcal{T}) = d_{\diamond}(\mathcal{T},\mathcal{T}')$. The solution to the problem is the target channel $\mathcal{T}$.
\end{enumerate}

\begin{algorithm}[H]
\caption{Variational quantum algorithm}\label{alg:variational_quantum_algorithm}
\begin{algorithmic}
\State \textbf{Given:} Loss function $L:\mathcal{X}\rightarrow\mathbb{R}$
\State 
\State \textbf{Choose:} PQC architecture $\mathfrak{C} = \{\mathcal{C}({\btheta})\,|\,\bm\theta\in\Theta\}$ 
\State \textbf{Choose:} Circuit-to-object map $M:\mathfrak{C} \rightarrow \mathcal{X}$ 
\State \textbf{Choose:} initialization strategy $\mathsf{IS}$ 
\State \textbf{Choose:} parameter update rule $\mathsf{PU}$
\State \textbf{Choose:} convergence criteria $\mathsf{CC}$
\State
\State Define $L_M:\Theta\rightarrow \mathbb{R}$ via $L_M(\btheta) = L\left(M\left(\mathcal{C}({\btheta})\right) \right)$
\State Use initialization strategy $\mathsf{IS}$ to set initial $\btheta_0\in\Theta$ \Comment Initialize parameters via $\mathsf{IS}$
\State $\mathrm{converged} \gets \mathrm{false}$
\State $i\gets 0$
\State
\While{not $\mathrm{converged}$}
\State Run circuit $\mathcal{C}({\btheta_i})$ (possibly multiple times) to evaluate $L_M(\btheta_i)$ \Comment{Evaluate loss function}

\State
\If{$L_M(\btheta_i)$ converged} \Comment{Evaluate convergence using $\mathsf{CC}$}
\State $\mathrm{converged}\gets \mathrm{true}$
\ElsIf{$L_M(\btheta_i)$ not converged}
\State Use parameter update rule $\mathsf{PU}$ to determine $\btheta_{i+1}\in\Theta$ \Comment{Update circuit parameters}
\State $i \gets i+1$
\EndIf
\EndWhile
\State
\State Return $\btheta_i$       \Comment{Output hypothesis circuit parameters}
\end{algorithmic}
\end{algorithm}

A standard approach to solving such problems is to perform optimization with respect to $L$, over some parameterized subset of $\mathcal{X}$, which we will refer to as a model class. Variational quantum algorithms follow this approach, using a {parameterized quantum circuit} (PQC) to define the model class. In this work, we will define a parameterized quantum circuit to be a circuit consisting of parameterized quantum channels.
We describe the set of circuits as $\mathfrak{C} = \{\mathcal{C}({\btheta})\,|\,\bm\theta\in\Theta\}$, where $\Theta$ represents the set of all possible channel parameters, and $\mathcal{C}({\btheta})$ represents the concrete circuit instance associated with parameters $\btheta$.
In this work, the class of circuits has both unitary gates and nonunitary channels, which may or may not be due to noise.

To define an appropriate model class from a parameterized quantum circuit $\mathfrak{C}$, we also require a ``circuit-to-object'' map $M:\mathfrak{C}\rightarrow \mathcal{X}$, which maps from quantum circuits to the set of objects $\mathcal{X}$ on which the loss function $L$ is defined. For the example problems given above, natural choices for the map $M$ would be:
\begin{enumerate}
\item Ground state search and thermal state preparation: $M:\mathfrak{C}\rightarrow \mathcal{S}_n$ via  $M\left(\mathcal{C}({\btheta}) \right) = \mathcal{C}({\btheta})\left(\ketbra{0^n}\right)$, i.e., $M\left(\mathcal{C}({\btheta}) \right)$ is the output state of the quantum circuit on the fixed input state $\ketbra{0^n}$.
\item Distribution learning: $M:\mathfrak{C}\rightarrow \mathfrak{D}_n$ via $M\left(\mathcal{C}({\btheta}) \right) = \mathcal{D}_{\btheta}$, where for all $x\in\{0,1\}^n$ one has $\mathcal{D}_{\btheta}(x) = \bra{x} \mathcal{C}({\btheta}) \left(\ketbra{0^n}\right)\ket{x}$---i.e., $M\left(\mathcal{C}({\btheta})\right)$ is the Born distribution associated with the output state of the quantum circuit on the fixed input state $\ketbra{0^n}$.
\item Quantum process learning: $M:\mathfrak{C}\rightarrow \mathfrak{T}_n$ via $M\left(\mathcal{C}({\btheta})\right) = \mathcal{C}({\btheta})$---i.e., $M\left(\mathcal{C}({\btheta})\right)$ is simply the channel associated with the circuit.
\end{enumerate}
Taking the parameterized quantum circuit together with the map $M$ we can then define the model class $\mathcal{M}_{(\mathfrak{C},M)}\subseteq\mathcal{X}$ via
\begin{equation}
\mathcal{M}_{(\mathfrak{C},M)} = \{M\left(\mathcal{C}({\btheta})\right)\,|\,\btheta\in\Theta\}.
\end{equation}
We would now like to find the optimal model $m^*\in\mathcal{M}_{(\mathfrak{C},M)}$ with respect to $L$. This is equivalent to identifying the optimal circuit parameters $\btheta\in\Theta$ with respect to the loss function $L_M:\Theta\rightarrow \mathbb{R}$ defined via $L_M(\btheta) = L\left(M\left(\mathcal{C}({\btheta})\right)\right)$. A natural way to do this is via Algorithm~\ref{alg:variational_quantum_algorithm}, which provides an abstract template for any {variational quantum algorithm}. We stress that Algorithm~\ref{alg:variational_quantum_algorithm} can be instantiated with a wide variety of different parameterized quantum circuit architectures, circuit-to-object maps~$M$, initialization strategies, parameter update rules and convergence strategies.  With this in hand, we proceed to introduce the class of {dynamic} parameterized quantum circuit architectures that we study in this work.

\section{Dynamic parameterized quantum circuit architectures}\label{section:ansatz}
In this section, we describe our proposed architectures and discuss their relation to existing nonunitary architectures.

\subsection{Dynamic quantum circuits}\label{ss:dynamic_circuits}
A \emph{dynamic} quantum circuit is one with intermediate measurements followed by one or more future operations conditional on the intermediate measurement results.
In principle, any intermediate measurement can always be deferred to the end of the circuit. However, the standard way to do so incurs a cost of extra ancilla lines, taking up quantum space, a precious resource on current devices. As we will discuss below, the ability to natively perform dynamic operations on an architecture without introducing ancilla qubits is extremely valuable in a practical setting where there are limitations on the space and time overhead that may be tolerated.

In the setting of noisy quantum computation, the ability to remove entropy through intermediate measurements is extremely valuable and is fundamentally what enables quantum error correction.
As shown in Ref.~\cite{aharonov1996limitationsnoisyreversiblecomputation}, when there is not an ability to pump in fresh ancilla qubits, depolarizing noise cannot be error-corrected.
Ben-Or et al.~\cite{ben-or2013quantumrefrigerator} characterize classes of single-qubit noise based on their expressivity in the setting where intermediate measurements are not allowed.

Another classic setting where measurements and feedforward operations play an important role is that of measurement-based quantum computation \cite{briegel2009measurementbasedquantumcomputation}.
In this model, we prepare a standard, fixed, entangled resource state, which is then measured qubit by qubit in a basis that depends on the desired computation to be performed as well as the past sequence of measurement results.

Lastly, we discuss the idea of using intermediate measurements to spread correlations faster than otherwise unitarily possible.
The crucial idea is that classical communication of measurement results from one part of the system to another in most hardware architectures is much faster than the time needed to apply a gate and can be assumed to be almost free.
As an example of the power this affords, an unbounded quantum fanout gate may be applied in constant depth, thus creating correlations between arbitrarily far qubits in $O(1)$ time \cite{baumer2024efficientlongrangeentanglementusing,baumer2024measurementbasedlongrangeentanglinggates,song2024realizationconstantdepthfanoutrealtime,baumer2024quantumfouriertransformusing,corcoles2021exploitingdynamicquantumcircuits}.
This would not be possible using only unitary operations, which would need $\Omega(\mathrm{diam}(\mathsf{G}))$ depth to create nonlocal correlations between two vertices of the architectural graph $\mathsf{G}$ separated by $\mathrm{diam}(\mathsf{G})$.
Recently, this topic has received heightened interest, specifically with the viewpoint of classifying long-range entangled states \cite{piroli2021quantumcircuitsassistedlocal,lu2022measurementshortcutlongrangeentangled,bravyi2022adaptiveconstantdepthcircuitsmanipulating,tantivasadakarn2023hierarchytopologicalorderfinitedepth,li2023symmetryenrichedtopologicalorderpartially} in many-body physics and strategies to prepare classes of states using adaptivity \cite{buhrman2024statepreparationshallowcircuits,malz2024preparationmatrixproductstates,smith2024constantdepthpreparationmatrixproduct,piroli2024approximatingmanybodyquantumstates}.
For experimental demonstrations, see \cite{corcoles2021exploitingdynamicquantumcircuits,verresen2022efficientlypreparingschrodingerscat,tantivasadakarn2023shortestroutenonabeliantopological,foss-feig2023experimentaldemonstrationadvantageadaptive,baumer2024efficientlongrangeentanglementusing,baumer2024measurementbasedlongrangeentanglinggates,song2024realizationconstantdepthfanoutrealtime,baumer2024quantumfouriertransformusing}.
\subsection{Dynamic parameterized quantum circuit architecture proposals}\label{ss:proposals}

We now present the class of parameterized dynamic circuits that we argue are a promising class of circuits to consider for variational quantum algorithms.
We present the full class of parameterized dynamic circuits for which our proofs hold in Appendix \ref{app:definitions}.

We work with quantum circuits composed of two-qubit gates over $n$ qubits with a total depth $d$.
The total number of gates is denoted $m$.
In this work, we consider circuits with nonunitary operations, which we describe through channels, denoted by calligraphic letters such as $\mathcal{U, C}$.
For unitary channels, $\mathcal{U}(\rho) := U \rho U^\dag$.

We begin with the notion of an architecture.
\begin{defn}[Architecture]
A quantum computing architecture is defined by a family of graphs whose vertices represent qubits.
Edges connect vertices whose representative qubits may participate in a two-qubit operation in a layer of the circuit.
\end{defn}

\begin{defn}[Dynamic operation]
A dynamic operation is a quantum channel that may be implemented via a projective measurement followed by a future operation conditioned on the measurement result.
\end{defn}
As an example, the simplest nontrivial dynamic operation is the following:

\begin{align}
\begin{quantikz}[row sep={1cm,between origins}]
\qw & \gate{\mathcal{F}} & \qw
\end{quantikz}
 = 
\begin{quantikz}[row sep={1cm,between origins},transparent]
\qw & \meter{} & \setwiretype{c} & \gate{\text{If 1, } U_1} & \setwiretype{q}
\end{quantikz}. \label{eq:basic_ff}
\end{align}

Here, the state is measured, followed by a conditional gate $U_1$ if the measurement result is $1$ and $I$ otherwise.
If we would like to perform a different operation $U_0$ when the result is a 0, we can account for it by adding a fixed gate after the feedforward operation as follows:
\begin{center}
\begin{quantikz}[row sep={1cm,between origins},transparent]
\qw & \meter{} & \setwiretype{c} & \gate{\text{If 1, } U_0^\dag U_1} & \gate{U_0} \setwiretype{q} & \qw
\end{quantikz}.
\end{center}
We can also simulate the probabilistic application of $\mathcal{F}$ through a parameter $\theta$ that controls the probability with which it is applied, written out in purified form as:

\begin{center}
\begin{quantikz}[row sep={1cm,between origins}]
\qw & \gate[style={fill=blue!20}]{\mathcal{F}(\theta)} & \qw
\end{quantikz}
$:=$
\begin{quantikz}[row sep={1cm,between origins},transparent]
\lstick{$\ket{0}$} & \gate{R_X(\theta)} & \ctrl[vertical wire=c]{1} & \ground{} \\
\qw & & \gate{\mathcal{F}} & \qw
\end{quantikz}.
\end{center}
The only role played by the parameter $\theta$ is to tune the state of the control qubit and hence the probability with which the target qubit is measured, given by $\sin^2(\theta/2)$.

In this work, we study the power of dynamic circuits with simple single-qubit feedforward operations $\mathcal{F}$.
In our theoretical analysis, we study the case where $U_0= I$ and $U_1$ is fixed to be
\begin{align}
U_1 =
\begin{pmatrix}
\cos {\varphi}e^{-i\phi} & -i \sin \varphi \\
- i \sin \varphi & \cos{\varphi} e^{i \phi}
\end{pmatrix}.
\end{align}
In some of our numerical results in \cref{section:utility}, we also allow for these gates $U_0$ and $U_1$ to be parameterized.

\begin{defn}[Parameterized dynamic circuit]\label{def:parameterized_dynamic_circuit}
A parameterized dynamic circuit $\mathcal{C}$ of depth $d$ on an architecture is a sequence of channels $(\mathcal{U}_1(\btheta), \mathcal{U}_2(\btheta), \ldots \mathcal{U}_d(\btheta))$, where each channel $\mathcal{U}_t(\btheta)$ is a completely positive, trace preserving map on $n$ qubits that describes the operations in time step $t$.
The operations in each layer $t$ can be written as a composition of single- and two-qubit operations such that the two-qubit operations act on non-overlapping qubits and only connect qubits with an edge in the associated graph describing the circuit architecture.
The parameters $\btheta$ come from a set $\Theta \in \mathbb{R}^{p}$ for $p \in \poly(n)$.
The action of the entire circuit is described by the channel $\mathcal{C}(\btheta) = \mathcal{U}_d(\btheta) \circ \ldots \circ \mathcal{U}_1(\btheta)$.
\end{defn}

Consider a loss function for which the circuit-to-object map defined earlier necessitates a circuit on $n'$ output qubits.
One way to define a parameterized dynamic circuit is as follows.
Let $n=n'+n_a$ for a suitable integer $n_a$ and let $\mathsf{G}$ be the architecture graph on $n$ qubits describing the hardware connectivity.
We consider edge colorings of the graph.
An edge coloring is described by a list of colors $i = \left\{0, 1,2,\ldots \right\}$ together with the set of edges assigned the color $i$.
The edge coloring defines a sequence of parameterized two-qubit unitary channels in the natural way: instantiate a parameterized two-qubit gate for every edge in a coloring and apply them in parallel.
Then cycle through all the colors in the order $1,2, \ldots$ (we choose not to apply gates on the edges assigned the color $0$).
For a given graph, there can be several valid edge colorings, and each of these could potentially lead to a different ansatz.
After every unitary layer ${U}_j(\btheta)$, we apply the feedforward operation $\mathcal{F}$ on a subset $F_j$ of the $n_a$ ancillary qubits.
The $j$'th channel of the parameterized dynamic circuit is given by $\mathcal{U}_j(\btheta)(\cdot) = ({\bigcirc_{i \in F_j}^n} \mathcal{F}_i) \circ \left( U_j(\btheta) (\cdot) U_j(\btheta)^\dag \right) $, where the symbol $\bigcirc_{i \in F_j}^n$ denotes a sequential composition of channels.

\begin{figure}
\begin{center}
\resizebox{\columnwidth}{!}{\begin{quantikz}[row sep={1cm,between origins}]
\lstick{$\ket{0}$} & \gate[6]{U_1(\btheta)}  & \gate[style={fill=blue!20}]{\mathcal{F}(\btheta)} \slice{layer 1} \setwiretype{q} & \gate[6]{U_2(\btheta)} & \slice{layer 2}  &  \ \ldots \ & \ground{}  \\
\lstick{$\ket{0}$} & & &  &  &   \ \ldots \  & & \meter{} \\
\lstick{$\ket{0}$} & & \gate[style={fill=blue!20}]{\mathcal{F}(\btheta)} \setwiretype{q}  & & \gate[style={fill=blue!20}]{\mathcal{F}(\btheta)} & \ \ldots \ &\ground{}   \\
\lstick{$\ket{0}$} & & &  &  & \ \ldots \  & & \meter{} \\
\lstick{$\ket{0}$} & & \setwiretype{q} &  & \gate[style={fill=blue!20}]{\mathcal{F}(\btheta)} &  \ \ldots \ & \ground{} \\
\lstick{$\ket{0}$} & & &  & &   \ \ldots \   &  & \meter{}
\end{quantikz}}
\end{center}
\caption{A schematic dynamic parameterized circuit on $6$ qubits with $3$ of them corresponding to ancillas.} \label{fig:dpc} 
\end{figure}

We illustrate in \cref{fig:dpc} a schematic dynamic circuit with feedforward operations.
Here, the qubit lines 1, 3, and 5 are the ancillas that we trace over at the end of the circuit.
The qubit lines 2, 4, and 6 are designated to be the system qubits on which the circuit-to-object map and the loss function are defined.
The specific pattern of when and where we apply the feedforward operations $\mathcal{F}(\btheta)$ is fixed beforehand during ansatz selection, along with the choice of an edge coloring and an order in which to apply gates in each block $U_1(\btheta), U_2(\btheta) \ldots U_d(\btheta)$.

Lastly, we define a property of parameterized dynamic circuits, in terms of which we will state our main result on barren plateaus. This definition will use the notion of distance of an observable from a feedforward operation, which we describe more formally in \cref{def_ff_distance} (Appendix \ref{app:definitions}).
Informally, for a site $j$, the \emph{feedforward distance} is the minimum number of entangling two-qubit gates that are encountered in the backwards light cone of qubit $j$ before hitting a feedforward operation or the initial state.
\Cref{fig:ff_distance} illustrates this definition.

\begin{defn}[Informal version of \cref{def_ff_distance}]
A parameterized dynamic circuit is said to have a worst-case feedforward distance of $f$ if, for every qubit $j$, the feedforward distance of $j$ is at most $f$.
\end{defn}
In \cref{ss:absence} we state our results on the sufficient conditions needed for a parameterized dynamic circuit to avoid barren plateaus.

\subsection{Relation to existing nonunitary architectures}\label{ss:relation}

As mentioned in the introduction, there already exist in the literature a variety of proposals for nonunitary circuit architectures \cite{cong2019quantumconvolutionalneuralnetworks,beer2020trainingdeepquantumneural,beer2021dissipativequantumgenerativeadversarial,mele2024noiseinducedshallowcircuitsabsence,heredge2025nonunitaryquantummachinelearning,ilin2024dissipativevariationalquantumalgorithms,yan2024variationalloccassistedquantumcircuits}. Here we discuss the relation of our work to these existing proposals. 

Perhaps the most prominent of existing nonunitary PQC architectures are Quantum Convolutional Neural Networks (QCNNs)~\cite{cong2019quantumconvolutionalneuralnetworks}. These are particularly interesting in light of their provable absence of barren plateaus~\cite{pesah2021absencebarrenplateausquantum}. QCNNs are designed around a very specific nonunitary operation, and as such can be seen as a particular subset of the PQC architectures that we study in this work.
QCNNs are effectively classically simulable and are not believed to be able to express arbitrarily deep unitary quantum circuits \cite{cerezo2023doesprovableabsencebarren,bermejo2024quantumconvolutionalneuralnetworks}.
Indeed, one of our contributions is showing that nonunitary circuit operations can be used much more generally to avoid barren plateaus, without sacrificing circuit expressivity.

Another prominent class of nonunitary PQC architectures are the so-called dissipative quantum neural networks~\cite{beer2020trainingdeepquantumneural,bondarenko2020quantumautoencodersdenoisequantum,beer2021dissipativequantumgenerativeadversarial,poland2020nofreelunchquantum}. The trainability of these architectures has also been studied to some extent~\cite{sharma2022trainabilitydissipativeperceptronbasedquantum}, with both positive and negative results for specific architectural choices. Special cases of dissipative quantum neural networks coincide with special cases of dynamic PQC architectures we formulate and study here.

A related work is also that of Ref.~\cite{mele2024noiseinducedshallowcircuitsabsence}, who study PQC architectures subject to \emph{nonunital} noise. Indeed, one can view these nonunital noise channels as a specific instance of the nonunitary operations we allow for in DPQC architectures. However it is crucial to note that in the DPQC architectures we consider, one has control over the density, type and probability of nonunitary operations in the architecture, which is not the case for the PQC architectures with nonunital noise studied in Ref.~\cite{mele2024noiseinducedshallowcircuitsabsence}. As we observed in Observation~\ref{obs:tuning_between_expressive_and_BP_free}, the DPQC architectures we consider allow one to continuously interpolate between expressive unitary architectures and BP-free architectures, which is not the case for circuits subject to environmental nonunital noise.
The extra control we allow for in our setup is a natural assumption to model current quantum hardware that has the capability of applying dynamic operations such as resets.
As a consequence of the extra control, there is also a straightforward worst-case hardness result we can claim, even in the presence of noise, utilizing error-correction arguments.

Very recently, Refs.~\cite{ilin2024dissipativevariationalquantumalgorithms,yan2024variationalloccassistedquantumcircuits} introduced nonunitary PQC architectures involving mid-circuit measurements, similar to those we explore here. However, their focuses differ from ours. Specifically, Ref.~\cite{ilin2024dissipativevariationalquantumalgorithms} focuses on the noise resilience of these architectures, providing numerical evidence for their utility in thermal state preparation under noise but leaving the question of trainability and barren plateaus unresolved—issues we directly address here.
Concurrently and independently, Ref.~\cite{yan2024variationalloccassistedquantumcircuits} explores state preparation using mid-circuit measurements to achieve shallow-depth circuits. To ensure absence of barren plateaus, they require constant circuit depth. By contrast, our result on barren plateau absence (see \Cref{thm:BP_DPQC_informal}) does not impose depth restrictions; rather, we characterize the absence of barren plateaus via the feedforward distance. Additionally, we apply a distinct technique, the statistical mechanics model, which may be of independent interest. Finally, we offer new insights into trainability beyond barren plateaus (c.f.~\cref{section:trainability}) and classical simulability (c.f.~\cref{section:classical_hardness}) of these architectures.

\section{Trainability}\label{section:trainability}

\subsection{What does it mean for a PQC architecture to be trainable?}\label{ss:what_is}

Having defined the dynamic parameterized quantum circuit (PQC) architectures we will study in this work, we now move on to a discussion of the ``trainability'' of such architectures within the context of variational quantum algorithms. 
Informally, and similarly to what is suggested in Ref.~\cite{gil-fuster2024relationtrainabilitydequantizationvariational}, we consider a PQC architecture $\mathcal{C}$ to be \textit{trainable} if with high probability Algorithm~\ref{alg:variational_quantum_algorithm} converges, in a reasonable amount of time, to a set of circuit parameters $\btheta$ which is almost as good as the optimal circuit parameters 
\begin{equation}
\btheta^* := \argmin_{\btheta\in\Theta} L_M(\btheta).
\end{equation}
Before trying to make the above notion more formal, there are a few points worth highlighting:
\begin{enumerate}
\item As the above notion of trainability relies on properties of Algorithm~\ref{alg:variational_quantum_algorithm}, it clearly depends not just on the PQC architecture $\mathcal{C}$, but also on the circuit-to-object map $M$, loss function $L$, parameter update rule $\mathsf{PU}$, initialization strategy $\mathsf{IS}$ and convergence criteria $\mathsf{CC}$. In particular, a PQC architecture $\mathcal{C}$ may be trainable with respect to some set of these choices, and not trainable with respect to another set.
\item Importantly, the notion of trainability we have given above only requires that the circuit parameters $\btheta$ are almost as good as the best possible parameters in $\Theta$. It \textit{does not} put any additional ``absolute'' requirement on the optimal parameters. Said another way, it could be that even though $\btheta^*$ are the best possible parameters for our model class, the corresponding model $M(C_{\btheta^*})$ is still a poor model. To get any stronger guarantee on the quality of the solution, we have to ensure that in addition to being trainable, $\mathcal{C}$ also contains good solutions.
\end{enumerate}
Essentially, we say that a PQC architecture $\mathcal{C}$ is trainable, if we can efficiently and reliably find a set of circuit parameters defining a circuit which is almost as good, with respect to $L$, as the best solution we could hope to find within $\mathcal{C}$. With this in hand, a natural way to formalize the notion that a candidate set of parameters $\btheta$ is ``almost as good'', would be if $L_M(\btheta) \leq L_M(\btheta^*) + \epsilon$ for some desired small $\epsilon$. Additionally, we could say that Algorithm~\ref{alg:variational_quantum_algorithm} converges ``efficiently'', if it requires at most $\mathrm{poly}(n,1/\epsilon,1/\delta)$ update steps, where $n$ is the size of the problem, $\epsilon$ is the desired accuracy with respect to the optimal parameters, and $1-\delta$ is the desired probability of success\footnote{We note that this definition of trainability looks extremely similar to the definition of agnostic learning~\cite{kearns1992efficientagnosticlearning}. However, here $L_M$ is the loss function which is actually evaluated and minimized during Algorithm~\ref{alg:variational_quantum_algorithm}---i.e., the \textit{empirical} risk---whereas in the agnostic learning setting the requirement is with respect to the \textit{true} risk.}.

Unfortunately, proving any formal trainability statement of the above type for a nontrivial PQC architecure $\mathcal{C}$, realistic problem $L$, and state-of-the-art parameter update rule $\mathsf{PU}$, is formidably difficult~\cite{gil-fuster2024relationtrainabilitydequantizationvariational}. On the other hand, it is often easier, at least for gradient-based parameter update rules, to prove \emph{negative} trainability results via barren plateaus~\cite{larocca2025barrenplateausvariationalquantuma,nietner2023unifyingquantumstatisticalparametrized}. At a high level, one says that a PQC architecture $\mathcal{C}$ admits a barren plateau under initialization strategy $\mathsf{IS}$ if, with high probability when choosing the circuit parameters according to $\mathsf{IS}$, one finds that the expectation value and variance of the gradient $\nabla L_M(\btheta_0)$ are exponentially close to zero. Intuitively, if this is the case then one expects the standard gradient descent parameter update rule $\btheta_{i+1} = \btheta_i - \alpha \nabla L_M(\btheta_i)$ to fail, as it can only lead to very small parameter changes in any polynomial number of update steps. We will discuss barren plateaus in much more detail in \cref{ss:BP}, however for our discussion here it is important to stress the following:
\begin{center}
\textit{For a given PQC architecture $\mathcal{C}$, absence of a barren plateau is not sufficient for trainability.}
\end{center}
More specifically, while the presence of a BP can be used to rule out trainability, the absence of a barren plateau does not prove trainability. To give an example, one could imagine a loss landscape filled with many local minima, none of which is $\epsilon$-close to the global minimum with respect to $L_M$. In such a landscape, we would expect a gradient based update rule to efficiently find a local minimum, but this local minimum may not be good enough to satisfy the requirements of trainability.

Given all of the above, and in particular the difficulty of formally proving rigorous trainability statements for meaningful problems, we do not in this work prove trainability of dynamic parameterized quantum circuits. Instead, our approach in this work is to provide as much well motivated \emph{evidence} for the trainability of dynamic parameterized quantum circuit architectures as possible. We do this by:
\begin{enumerate}
\item Proving an absence of barren plateaus for a natural distribution over circuit parameters---i.e. we show that the necessary but not sufficient condition for trainability is satisfied for dynamic parameterized circuit architectures that satisfy certain explicit criteria. We do this in Section~\ref{ss:absence}.
\item Providing numerical experiments in Section~\ref{section:utility} that show that, at least for one meaningful ground state search problem, gradient-based VQAs using dynamic parameterized quantum circuits can reach good solutions in a reasonable amount of time. We also leverage the nonunitary nature of DPQCs and additionally study problems where the target state is a mixed state, such as Gibbs states of certain Hamiltonians.
Importantly, we also acknowledge and discuss caveats and criticisms of these numerical experiments in Section~\ref{ss:criticisms}. 
\end{enumerate}
With this in hand, we proceed to provide a brief overview of barren plateaus, before proving our central absence of barren plateau result in Section~\ref{ss:absence}.

\subsection{Barren plateaus}\label{ss:BP}
Loss landscapes that are on average exponentially flat and gradients that are on average exponentially small in the number of system qubits---\emph{barren plateaus}---represent a substantial bottleneck in the applicability of variational quantum algorithms to a practically relevant problem (size)~\cite{mcclean2018barrenplateausquantumneural, cerezo2021costfunctiondependentbarren, cerezo2021higherorderderivativesquantum, holmes2022connectingansatzexpressibilitygradient, ortizmarrero2021entanglementinducedbarrenplateaus, wang2021noiseinducedbarrenplateausvariational, napp2022quantifyingbarrenplateauphenomenon, uvarov2021barrenplateauscostfunction}.
In fact, they are one of the central challenges that might block the scalability of these algorithms. The causes for the occurrence of this phenomenon are multifold: ansatz depth or expressivity~\cite{mcclean2018barrenplateausquantumneural, holmes2022connectingansatzexpressibilitygradient}, entanglement~\cite{ortizmarrero2021entanglementinducedbarrenplateaus}, unital hardware noise~\cite{wang2021noiseinducedbarrenplateausvariational}, and loss functions induced by global observables, namely, observables acting on most system qubits~\cite{cerezo2021costfunctiondependentbarren}.
More specifically, a loss function $\mathcal{L}({\btheta})$ is said to suffer from a barren plateau if, for all parameters $\btheta_k$, the variance of the gradient decays exponentially---i.e. $\Var_{{\btheta}}\left[ \partial_k \mathcal{L} \right] \in O(1/b^n)$---with respect to $\btheta$ drawn from some natural distribution over $\Theta$.
As proven in Ref.~\cite{arrasmith2022equivalencequantumbarrenplateaus}, under some assumptions, this gradient behavior is often directly equivalent to an exponential concentration of the loss function itself, namely, $\Var\left[\mathcal{L} \right] \in O(1/b^n)$ for some $b > 1$. 
Bounds on the concentration of both loss and gradients which may be computed efficiently with classical resources are given in Refs.~\cite{napp2022quantifyingbarrenplateauphenomenon,letcher2024tightefficientgradientbounds, uvarov2021barrenplateauscostfunction}.

While strategies to circumvent barren plateaus are known, most of them suffer from drawbacks that limit their utility in practice. Firstly, one may use smart parameter initialization strategies \cite{grant2019initializationstrategyaddressingbarren, rudolph2023synergisticpretrainingparametrizedquantum,  wang2023trainabilityenhancementparameterizedquantum, zhang2022escapingbarrenplateaugaussian}. These, however only hold for the beginning of the training and do not help to exclude the occurrence of barren plateaus throughout the training.
Moreover, parameter initialization according to distributions concentrated on small subsets of parameter space can restrict the ability to explore the space of good solutions.
Another strategy to avoid exponentially flat loss landscapes is to choose ansatz classes which have been proven to not suffer from barren plateaus~\cite{pesah2021absencebarrenplateausquantum, sharma2022trainabilitydissipativeperceptronbasedquantum}.
However, while being barren plateau free, the previously suggested ansatz classes suffer from restricted expressivity, e.g., due to a constant circuit depth or restricted dynamical lie algebra, which in turn can easily lead to classical simulability of the respective model~\cite{cerezo2023doesprovableabsencebarren}.
For a detailed review on barren plateaus, we refer the interested reader to Ref.~\cite{larocca2025barrenplateausvariationalquantuma}.

\subsection{Absence of barren plateaus in dynamic parameterized quantum circuits}\label{ss:absence}

We now state our first main result.
Let $\rho(\bm{\theta})=\mathcal{C}(\bm{\theta})\left( \ketbra{0^n} \right)$ be the output state of the parameterized DPQC ensemble. In particular, assume that the ensemble $\mathcal{C}(\bm{\theta})$ satisfies the following properties:
\begin{enumerate}
\item Every component $\btheta_i$ of $\btheta$ parameterizes only a single operation in $\mathcal{C}$.
\item $\mathcal{C}$ is locally scrambling, i.e., is invariant under single-qubit random gates from a unitary 2-design after every gate.
\item $\mathcal{C}$ has constant worst-case feedforward distance $f$.
\item $\mathcal{C}$ has an average entangling power of $\alpha \in [0,1]$ \cite{zanardi2000entanglingpowerquantumevolutions,wang2003entanglingpoweroperatorentanglement} and average swapping power $\beta$ with the conditions $\beta \in [0, 1-\alpha]$ (see, for example, Ref.~\cite{ware2023sharpphasetransitionlinear} for a definition of these parameters).
\end{enumerate}
Under these conditions, we state two results:
\begin{subtheorem}{thm}
\begin{thm}[Variance bound for $k$-local Hamiltonians---formal version of \Cref{thm:BP_DPQC_informal}] \label{thm:variance-bound-formal}
Let $H$ be a $k$-local Hamiltonian and suppose it has an expansion $\sum_{\bm{\alpha}} c_{\bm{\alpha}} \bm{\alpha} $ in the Pauli basis $\bm{\alpha} \in \mathbb{P}_n$.
Then, the variance of the loss function $L = \Tr \rho(\bm{\theta}) H $ with respect to the ensemble of circuits $\mathcal{C}$ is lower bounded as
\begin{align}
\Var_{\btheta }{L} \geq \sum_{\bm{\alpha}} c_{\bm{\alpha}}^2 \left( \frac{\sin^2 \varphi}{3} \right)^{\abs{\bm{\alpha}}}  \left( \frac{\alpha}{5} \right)^{kf}.
\end{align}
For Hamiltonians with locality at most $k$, this simplifies to 
\begin{align}
\Var_{\btheta }{L} \geq \norm{H}_{HS}^2 \left( \frac{\sin^2 \varphi}{3} \right)^{k}  \left( \frac{\alpha}{5} \right)^{kf},
\end{align}
where $\norm{H}_{HS} := \sqrt{{\Tr H^2}} $ is the Hilbert-Schmidt norm of $H$. \newline
\end{thm}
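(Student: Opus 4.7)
I would follow the stat-mech strategy foreshadowed in the paper: lift the variance of $L$ to a two-copy calculation, rewrite it as a partition function of a classical spin model on the circuit's space-time lattice, and lower bound that partition function by keeping one admissible configuration. Expanding $L=\sum_{\bm\alpha}c_{\bm\alpha}\Tr[\rho(\btheta)\bm\alpha]$ and using that $\mathcal{C}$ is locally scrambling on every output qubit, the single-qubit 2-design average of $U^{\dagger}\bm\alpha_j U \otimes U^{\dagger}\bm\alpha'_j U$ vanishes unless $\bm\alpha_j = \bm\alpha'_j$ on every qubit, and it also kills the one-point function $\mathbb{E}[\Tr\rho\bm\alpha]$ for $\bm\alpha \neq I^{\otimes n}$. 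The variance therefore collapses to
\begin{equation}
\Var_{\btheta} L = \sum_{\bm\alpha \neq I^{\otimes n}} c_{\bm\alpha}^2 \, \mathbb{E}\bigl[(\Tr\rho\bm\alpha)^2\bigr],
\end{equation}
and it suffices to lower bound each diagonal term separately.

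Rewriting in the Heisenberg picture as $\mathbb{E}[(\Tr\rho\bm\alpha)^2] = \Tr[(\ketbra{0^n})^{\otimes 2}\,\mathbb{E}[(\mathcal{C}^*)^{\otimes 2}(\bm\alpha\otimes\bm\alpha)]]$, I would propagate $\bm\alpha\otimes\bm\alpha$ backwards layer by layer. The same locally scrambling argument projects the two-copy content on each qubit worldline onto the two-dimensional spin space spanned by $I\otimes I$ and $\tau := \tfrac{1}{3}(X\otimes X + Y\otimes Y + Z\otimes Z)$, reducing each worldline to a classical spin $s \in \{I,\tau\}$. What remains is the partition function of a classical two-state spin model on the circuit's space-time lattice with non-negative weights: each entangling two-qubit gate is a $2\times 2$ transition vertex whose entries are fixed by the entangling and swapping powers $(\alpha,\beta)$; each feedforward operation $\mathcal F$ is a single-site vertex that can absorb $\tau \mapsto I$ with weight controlled by $\sin^2\varphi/3$ (the $1/3$ reflects the normalization of $\tau$, and $\sin^2\varphi$ arises from matrix elements of $U_1$ after the surrounding 2-design averages); and the initial state $(\ketbra{0^n})^{\otimes 2}$ provides a boundary condition at the bottom.

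Because all vertex weights are non-negative, keeping any single admissible configuration gives a lower bound on the partition function. I would select the minimal \emph{Pauli path}: seed $\tau$ on the $|\bm\alpha|$ qubits in $\supp(\bm\alpha)$ at the top, propagate each seed straight back along its own worldline until it meets a feedforward vertex, and set every other spin to $I$. By the definition of the worst-case feedforward distance, each seed terminates within $f$ layers, crossing at most $f$ entangling gates along the way, so the configuration involves at most $|\bm\alpha|f \le kf$ entangling-gate crossings and exactly $|\bm\alpha|$ feedforward absorptions. Each crossing costs a factor $\ge \alpha/5$ and each absorption costs $\ge \sin^2\varphi/3$, yielding $\mathbb{E}[(\Tr\rho\bm\alpha)^2] \ge (\alpha/5)^{kf}(\sin^2\varphi/3)^{|\bm\alpha|}$. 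Substituting into the variance recovers the first bound; the simplified bound follows using $|\bm\alpha|\le k$, $\sin^2\varphi/3\le 1$, and $\sum_{\bm\alpha} c_{\bm\alpha}^2 = \norm{H}_{HS}^2$ for the normalized Pauli basis.

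The main obstacle is in step two: rigorously deriving the stat-mech transition matrices for this specific ensemble and confirming the two constants that feed into the final bound. Concretely, one must verify that for any two-qubit gate drawn from an ensemble with entangling power $\alpha$ and swapping power $\beta \in [0,1-\alpha]$, the entry of the $2\times 2$ transition vertex relevant to the chosen path is bounded below by $\alpha/5$, and that after combining the feedforward vertex with the 2-design averages on the neighbouring sites the absorption weight is bounded below by $\sin^2\varphi/3$. Once these two building blocks are established, the path-counting in step three is essentially a combinatorial light-cone argument constrained only by $f$, and the variance reduction in step one is the standard locally-scrambling vanishing argument.
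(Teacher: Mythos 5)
Your overall strategy is the same as the paper's: reduce the variance to the diagonal two-copy quantities $\sum_{\bm\alpha}c_{\bm\alpha}^2\,\mathbb{E}[(\Tr\rho\bm\alpha)^2]$ via local scrambling (the paper's Lemma on the variance in terms of the marginal distribution), map to a two-state-per-qubit stat-mech model, and lower bound by following paths of length at most $f$ from feedforward operations, paying $\alpha/5$ per entangling gate and $\sin^2\varphi/3$ per feedforward. Your Heisenberg-picture ``keep one configuration of the partition function'' phrasing is the adjoint of the paper's Schr\"odinger-picture chaining of conditional probabilities along SWAP-active sequences; these are equivalent given non-negative transfer-matrix entries, which the paper also assumes.

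There is, however, a genuine gap in your handling of the multi-seed case, i.e.\ when $|\bm\alpha|>1$. You charge each path crossing an independent factor of $\alpha/5$ and frame the remaining verification as bounding an entry of a $2\times 2$ single-worldline vertex. But the elementary vertex of the model is the $4\times 4$ transfer matrix of a two-qubit gate, and when two of your back-propagating labels meet at the \emph{same} gate the weight of your chosen configuration at that vertex is a \emph{single} matrix element, not a product of two independent entries. One must check, for every way two active labels can enter and exit a gate (e.g.\ $\mathtt{SS}\to\mathtt{SS}$, $\mathtt{SI}\to\mathtt{SS}$, $\mathtt{SS}\to\mathtt{S}\star$, etc.), that the relevant element is at least $(\alpha/5)^2$; the paper does exactly this in its six-case analysis culminating in the multi-qubit lower bound lemma, and the analogous check is the bulk of the work in the noisy version of the theorem. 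Without it, the claim that the total cost factorizes as $(\alpha/5)^{kf}(\sin^2\varphi/3)^{|\bm\alpha|}$ is unjustified. A secondary slip: propagating each seed ``straight back along its own worldline until it meets a feedforward vertex'' contradicts the definition of the feedforward distance, which minimizes over paths that may hop between qubits through entangling gates --- and must do so in the architectures the paper actually uses, where feedforwards act only on ancilla qubits; the admissible configuration you want follows the shortest such space-time path, with the $\mathtt{S}$ label spreading or hopping at each entangling gate it traverses.
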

\begin{thm}[Formal version of \Cref{thm:robustness-informal}]
Under the same conditions as in \cref{thm:variance-bound-formal}, in the case of noisy quantum circuits where each gate is followed by a noise channel with nonunitarity $\gamma$ and nonunitality $\delta \leq \gamma \leq \frac{1}{2}$ (see Appendix~\ref{app:SM} for definitions), the lower bound on the variance is
\begin{align}
\Var_{\btheta }{L} \geq \norm{H}_{HS}^2 \left( \frac{\sin^2 \varphi}{3} \right)^{k}   \left(\frac{\alpha}{5}(1-\gamma-\delta) + \delta \right)^{kf}.
\end{align}
\end{thm}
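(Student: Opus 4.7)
The plan is to reuse the stat-mech transfer-matrix argument that underlies \cref{thm:variance-bound-formal} and to modify each per-gate transfer matrix to absorb the noise channel that sits after each unitary. In the noiseless proof, after locally scrambling averaging, the second moment of $L = \Tr \rho(\btheta)H$ reduces, Pauli component by Pauli component, to a contraction of single-site transfer matrices acting on a two-state stat-mech space (encoding whether the backward-propagated Pauli weight at a site is trivial or nontrivial). The entangling power $\alpha$ controls the off-diagonal weight in the two-qubit transfer matrix, feedforward operations $\mathcal{F}$ act as sinks, and the $(\alpha/5)^{kf}$ factor in the variance lower bound arises from the worst-case path of length $f$ that the nontrivial Pauli weight must traverse before being absorbed by an $\mathcal{F}$ or by the initial state.

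First, I would write each noisy layer as $\mathcal{N}^{\otimes} \circ \mathcal{U}(\btheta)$ and compute the per-site image of the stat-mech transfer matrix under the single-qubit noise channel $\mathcal{N}$. Using the nonunitarity/nonunitality decomposition of $\mathcal{N}$ given in Appendix~\ref{app:SM}, the unital part shrinks the nontrivial-Pauli component by a factor $1 - \gamma$ on average, while the nonunital part with weight $\delta$ routes nontrivial Pauli weight onto the identity component---structurally identical to the action of a feedforward sink. Substituting this into the two-qubit transfer matrix produced by locally scrambling averaging, the off-diagonal ``scrambling'' weight $\alpha$ is rescaled to $\alpha(1 - \gamma - \delta)$, while the ``sink'' weight contributed by the nonunital part adds $5\delta$ after accounting for the factor of $5$ that normalizes the single-qubit Pauli basis in the stat-mech representation. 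The net effective entangling power is therefore $\alpha' = \alpha(1 - \gamma - \delta) + 5\delta$, so that $\alpha'/5 = (\alpha/5)(1 - \gamma - \delta) + \delta$.

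Second, I would re-run the path-counting step of the noiseless proof unchanged: for each Pauli $\bm{\alpha}$ in the decomposition of $H$, the backward-propagation of weight from its support to the nearest $\mathcal{F}$ or the initial state crosses at most $f$ two-qubit layers per supported site, each contributing a factor of at least $\alpha'/5$. The initial state contributes the usual $(\sin^2\varphi/3)^{|\bm{\alpha}|}$ factor (unchanged by the noise modification, since this term is determined by the feedforward gate $U_1$ and the boundary). Summing over Pauli components and invoking $\|H\|_{HS}^2 = \sum_{\bm{\alpha}} c_{\bm{\alpha}}^2$ yields
\begin{equation}
\Var_{\btheta} L \;\geq\; \|H\|_{HS}^2 \left(\frac{\sin^2\varphi}{3}\right)^{k} \left(\frac{\alpha}{5}(1-\gamma-\delta) + \delta\right)^{kf}.
\end{equation}
The constraint $\delta \leq \gamma \leq 1/2$ enters exactly to guarantee that the noise-modified single-gate transfer matrix is a valid stochastic object with nonnegative entries, so that the termwise lower bounds used in the noiseless proof remain monotone.

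The main obstacle I expect is the careful bookkeeping of the nonunital contribution in the two-qubit transfer matrix, because in the noiseless proof the stat-mech ``state space'' is restricted to objects arising from permutation averages and does not naturally accommodate operators proportional to the identity generated by nonunital noise. The technical core is to verify that the nonunital part projects cleanly onto the trivial stat-mech state with effective weight $5\delta$ (rather than some smaller value that would fail to reproduce the claimed $\alpha'$), and to confirm that this identification is consistent when nonunital noise is chained across multiple layers so that the per-gate replacement $\alpha \mapsto \alpha'$ composes as a single effective parameter along every light-cone path. Once this local replacement is justified, everything downstream is a rerun of the noiseless argument.
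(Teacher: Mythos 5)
Your high-level strategy matches the paper's: compose the per-gate transfer matrix with the noise channel, extract an effective parameter $\alpha' = \alpha(1-\gamma-\delta)+5\delta$, and rerun the noiseless path-counting argument with $\alpha/5 \mapsto \alpha'/5$. (You phrase the stat-mech model as backward Pauli-weight propagation rather than the paper's forward random walk on $\{\mathtt{I},\mathtt{S}\}^n$ configurations of the two-copy average state, but that is just the dual bookkeeping and the transfer matrices are the same up to transpose.) However, two substantive pieces are missing or wrong. First, the value of $\alpha'$ is asserted rather than derived: your explanation that the nonunital part contributes ``$5\delta$ after accounting for the factor of $5$ that normalizes the single-qubit Pauli basis'' is not a derivation. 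In the paper the bound comes from explicitly writing out the full $4\times 4$ composed transfer matrix (two-qubit gate followed by local nonunital noise on each qubit) and taking the minimum over the three column entries relevant to survival of an $\mathtt{S}$, which gives $\min\{\delta+(1-\gamma-\delta)(\beta+\alpha/5),\ 1-\gamma-(1-\gamma-\delta)(\beta+4\alpha/5),\ 1-\gamma\} \geq \alpha(1-\gamma-\delta)/5+\delta$; the $5$ in $5\delta$ is just this inequality multiplied through, not a Pauli-normalization artifact. Relatedly, your claimed role for the hypothesis $\delta\leq\gamma\leq 1/2$ (``to guarantee the noise-modified transfer matrix is a valid stochastic object with nonnegative entries'') is incorrect --- the composition of two stochastic nonnegative matrices is automatically stochastic and nonnegative.

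Second, and more importantly, you cannot simply ``re-run the path-counting step unchanged.'' For $k$-local observables with $k\geq 2$, the noiseless proof (the analogue of the multi-qubit bound) must handle the fact that two SWAP-active paths can intersect at a single two-qubit gate, and it does so by checking six interaction cases and verifying that each joint conditional probability is at least $(\alpha/5)^2$, i.e.\ that conditional independence gives a valid lower bound. This entire case analysis has to be redone with the noisy composed transfer matrix, and it is exactly here that $\delta\leq\gamma\leq 1/2$ is consumed: one needs, e.g., $(1-\gamma)^2\geq (\alpha'/5)^2$, which requires $\alpha'/5\leq 1/2$, and a nontrivial algebraic verification that $\delta(1-\gamma)+\tfrac{\alpha}{5}(1-\gamma-\delta)(1-\gamma-4\delta)\geq(\alpha'/5)^2$ in the case where one incoming leg carries $\mathtt{S}$ and the other $\mathtt{I}$. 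Your ``main obstacle'' paragraph worries only about chaining the replacement $\alpha\mapsto\alpha'$ sequentially along a single light-cone path; the genuine difficulty is the interaction of multiple paths at a single gate, which your proposal does not identify.
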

\end{subtheorem}

In the above, the local scrambling condition implies that the distribution over circuits is identical to the distribution where, after every operation, one inserts random single-qubit gate from any 2-design ensemble.
The quantity $\alpha$ is related to the entangling power of the ensemble of 2-qubit gates \cite{ware2023sharpphasetransitionlinear}.
We may also fix the 2-qubit gates deterministically.

Note the crucial point that these bounds are \emph{independent} of the number of qubits $n$ and the depth $d$ of the circuit.
This fact is what enables us to consider deep circuits of this form, restoring expressivity.
Remarkably, even in the presence of noise, which causes noise-induced barren plateaus unless the noise is nonunital \cite{wang2021noiseinducedbarrenplateausvariational,mele2024noiseinducedshallowcircuitsabsence}, the feedforward operations help fight noise-induced barren plateaus.

The proof idea of these results relies on the stat-mech model mapping described in the following section.
We extend these techniques to the case of more general nonunitary operations.
The technique involves analyzing a biased random walk over a configuration space of identity and swap operators, interpreted as bitstrings $\{0,1\}^n$.
The walk is biased towards the 0 state.
On the other hand, the variance of the loss function is related to the probability of obtaining a significant number of 1s, or large Hamming weight, in the output of the random walk.

In the presence of nonunitary operations, such as the simple measurement and feedforward operation mentioned above, we uncover a mechanism that changes the operator walk dynamics.
Specifically, these operations add a new bias term in the reverse direction, causing a 0 to flip to a 1 with a certain nonzero probability.
As we show in \cref{lem_singlequbit_bound,lem_multiqubit_bound}, this mechanism is enough to lead to a lower bound on the probability of observing a nonzero Hamming weight at the output of the walk.
This, in turn, yields a lower bound on the variance of the loss function.

\subsection{The stat-mech model}\label{ss:SM}
We briefly review here the formalism of the statistical mechanical model (``the stat-mech model'') that allows us to compute second moments of statistical quantities over ensembles of random quantum circuits~\cite{hunter-jones2019unitarydesignsstatisticalmechanics,dalzell2022randomquantumcircuitsanticoncentrate,dalzell2021randomquantumcircuitstransform,napp2022quantifyingbarrenplateauphenomenon,ware2023sharpphasetransitionlinear}.
We review this more thoroughly in Appendix \ref{app:SM}.
Concretely, the stat-mech model helps evaluate quantities of the form $\mathbb{E}_U[\Tr \rho O]^2$ for an output state $\rho$ of a quantum circuit with potentially nonunitary elements, averaged over the choice of random unitary gates $U$.
This is done by observing that $\mathbb{E}_U[\Tr \rho O]^2 = \mathbb{E}_{U} [\Tr \rho^{\otimes 2} O^{\otimes 2}]$, which can be rewritten as $\Tr \mathbb{E}_{U}[\rho^{\otimes 2}] O^{\otimes 2}$.
This illustrates that the quantity $\mathbb{E}_{U}[\rho^{\otimes 2}]$, which we call the 2-copy average state, is the fundamental object of interest for calculating second moment quantities.
Henceforth, we will denote this state by $\bar{\rho}$.

We can derive the stat-mech model by observing that the average two-copy state $\bar{\rho}$ can be classically tracked using the well-known Weingarten calculus for computing moments of the Haar measure.
Specifically, if we assume that every 2-qubit gate or noise channel in the circuit is followed by a single-qubit Haar-random gate, then the state $\bar{\rho}$ lies in the symmetric subspace spanned by the operators $\{I, S\}^n$, where $I$ is the identity operation and $S$ the SWAP operation between the copies of the state.
Using the trace-1 normalized versions of these operations $\mathtt{I}:= \frac{I}{4}$, and $\mathtt{S}:=\frac{S}{2}$, we can write
\begin{align}
  \bar{\rho} = \sum_{\bm x \in \{0,1\}^n } c_{\bm x} \mathtt{I}^{1-\bm{x}_1} \cdot \mathtt{S}^{\bm{x}_1} \otimes \ldots \otimes \mathtt{I}^{1-\bm{x}_n} \cdot \mathtt{S}^{\bm{x}_n},
\end{align}
where $\sum_{\bm x} c_{\bm x} = 1$.
We thus see that the average two-copy state is characterized by a (quasi)-probability distribution over the space $\{\mathtt{I},\mathtt{S}\}^n$.
We will henceforth speak about the average two-copy state and its associated distribution $\mathcal{X}$ over $\{\mathtt{I},\mathtt{S}\}^n$ interchangeably.
We also denote by $\bm{x}^t$ the bitstring that specifies an operator in $\{\mathtt{I},\mathtt{S}\}^n$ at time $t$, viewed as a random variable, and by $\bar{\rho}^t$ the average two-copy state at time $t$.

In order to motivate the study of the stat-mech mapping in the context of barren plateaus in variational quantum algorithms, we state here a key lemma due to Napp \cite{napp2022quantifyingbarrenplateauphenomenon} relating the variance of the loss function to a quantity related to the average two-copy state.
In \cref{lem_ham_variance} (Appendix \ref{app:SM}), we derive a generalization of  Napp's lemma to a more general setting where the two-qubit gates need not be chosen from a 2-design.

\begin{lemma}[Informal version of \cref{lem_ham_variance}]
The variance of a $k$-local Pauli observable ${\bm \alpha} \in \mathbb{P}_n$ supported on a region $A = \supp({\bm \alpha})$ over a locally scrambling ensemble of quantum circuits of depth $d$ is
\begin{align}
\Var_{\btheta}[L] = \Pr_{\mathcal{X}_d} \left[\bm{x}_{A} = 11\ldots 1_A \right].
\end{align}
\end{lemma}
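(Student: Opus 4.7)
The plan is to reduce the variance of $L = \Tr(\rho(\btheta)\bm\alpha)$ to a single trace against the average two-copy state $\bar\rho$, and then evaluate that trace qubit by qubit using the elementary identities $\Tr(\mathtt{I}) = \Tr(\mathtt{S}) = 1$, $\Tr(\mathtt{I}\, P^{\otimes 2}) = 0$ and $\Tr(\mathtt{S}\, P^{\otimes 2}) = 1$ for any nontrivial single-qubit Pauli $P$.

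First I would show that $\mathbb{E}_{\btheta}[L] = 0$ whenever $A \neq \emptyset$. Since $\mathcal{C}(\btheta)$ is locally scrambling, its distribution is invariant under inserting a Haar-random single-qubit unitary $V = \bigotimes_i V_i$ immediately before the final measurement. Single-qubit Haar averaging kills any traceless operator on a given site, so $\mathbb{E}_V[V^\dagger \bm\alpha V] = 0$ as soon as $\bm\alpha$ has at least one nontrivial Pauli factor; the edge case $A = \emptyset$ is trivial because then $L$ is the constant $1$ and both sides of the claimed identity are $0$. With the mean gone, the standard doubling identity $\Tr(M)^2 = \Tr(M^{\otimes 2})$ yields
\begin{align*}
\Var_{\btheta}[L] \;=\; \mathbb{E}_{\btheta}\,\Tr\!\bigl(\rho(\btheta)^{\otimes 2}\,\bm\alpha^{\otimes 2}\bigr) \;=\; \Tr\!\bigl(\bar\rho\, \bm\alpha^{\otimes 2}\bigr).
\end{align*}

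Next, I would insert the stat-mech expansion
\begin{align*}
\bar\rho \;=\; \sum_{\bm x \in \{0,1\}^n} c_{\bm x}\, \bigotimes_{i=1}^{n} \mathtt{I}^{1-\bm x_i}\,\mathtt{S}^{\bm x_i}.
\end{align*}
Because both $\bar\rho$'s summands and $\bm\alpha^{\otimes 2}$ are tensor products over sites, the trace factorizes as a product of single-site factors $\Tr(\mathtt{I}^{1-\bm x_i}\mathtt{S}^{\bm x_i}\,(\bm\alpha_i)^{\otimes 2})$. For $i \notin A$ one has $\bm\alpha_i = I$ and both options give factor $1$. For $i \in A$, $\bm\alpha_i$ is traceless and squares to the identity, so $\Tr(\mathtt{I}\,P^{\otimes 2}) = \tfrac14\Tr(P)^2 = 0$ while $\Tr(\mathtt{S}\,P^{\otimes 2}) = \tfrac12\Tr(P^2) = 1$; hence the $i$-th factor simply equals $\bm x_i$. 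Only configurations with $\bm x_i = 1$ for every $i \in A$ survive, each with product $1$, so
\begin{align*}
\Var_{\btheta}[L] \;=\; \sum_{\bm x \,:\, \bm x_A = 1_A} c_{\bm x} \;=\; \Pr_{\mathcal{X}_d}\!\bigl[\bm x_A = 11\ldots1_A\bigr],
\end{align*}
as claimed.

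The one subtle point, and the step I expect to require the most care, is justifying the interpretation of $\{c_{\bm x}\}$ as a genuine probability distribution rather than a signed quasi-distribution so that the final marginal is nonnegative and equal to a probability. This is standard under the local scrambling assumption: after the single-qubit $2$-design twirls, every 2-qubit channel and every dynamic operation acts on the $\{\mathtt{I},\mathtt{S}\}^n$ subspace as a stochastic (substochastic in the nonunital case) matrix with nonnegative entries, so the forward propagation from the pure-$\mathtt{I}$ initial configuration produces nonnegative $c_{\bm x}$. Once this is in place, reading off the marginal over $A$ finishes the proof.
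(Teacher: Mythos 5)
Your proposal is correct and follows essentially the same route as the paper's proof of \cref{lem_ham_variance}: replica trick to reduce the variance to $\Tr(\bar\rho\,\bm\alpha^{\otimes 2})$, expansion of $\bar\rho$ in the $\{\mathtt{I},\mathtt{S}\}^n$ basis, and the site-wise orthogonality relations to pick out the all-$\mathtt{S}$ configurations on $\supp(\bm\alpha)$. The only cosmetic difference is that the paper Heisenberg-evolves $\bm\alpha^{\otimes 2}$ through a virtual final single-qubit twirl to obtain $(4/3)^{|\bm\alpha|}(\mathtt{S}-\mathtt{I})^{\supp(\bm\alpha)}$ before tracing (which also handles cross terms in the full Hamiltonian version), whereas you trace directly against $\bar\rho$; your handling of the nonnegativity of the $c_{\bm x}$ matches the paper's separate claim that the transfer matrices are stochastic.
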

In the above, $\mathcal{X}_d$ is the distribution over bitstrings at the end of the circuit.
We see, therefore, that the variance of the loss function directly depends on the distribution $\mathcal{X}_d$ and whether it has high probability weight on strings $\in \{\mathtt{I},\mathtt{S}\}^n$ that lead to the all $\mathtt{S}$ string on the subregion $A$.

\subsection{A unified derivation of barren plateaus from the stat-mech model}\label{ss:unified}
In this subsection, we give a unified derivation of all known sources of barren plateaus in the stat-mech picture.
This is done by examining the physical behavior of the stat-mech model on arbitrary geometries.
This exercise will enable us to form intuition on the causes of barren plateaus, which will inform strategies to mitigate against them.

Let the reduced two-copy average state on two qubits at time $t$ be $\bar{\rho}^t = a \mathtt{II} + b \mathtt{IS} + c \mathtt{SI} + d \mathtt{SS}$.
Then the reduced two-copy average state after application of a unitary channel, given by $\bar{\rho}^{t+1} = \mathbb{E}_{V_1, V_2}[(V_1 V_2 \otimes V_1 V_2) \mathcal{U} \otimes \mathcal{U}(\bar{\rho}^t) (V_1 V_2 \otimes V_1 V_2)^\dag]$, can be expressed as $a'\mathtt{II} + b' \mathtt{IS} + c' \mathtt{SI} + d' \mathtt{SS}$, where
\begin{align}
\begin{pmatrix}
  a' \\
  b' \\
  c' \\
  d'
\end{pmatrix} = T 
\begin{pmatrix}
a \\
b\\
c\\
d
\end{pmatrix}
\end{align}
for a $4 \times 4$ stochastic matrix $T$.
For a Haar-random two-qubit unitary, the transfer matrix is given by \cite{ware2023sharpphasetransitionlinear}
\begin{align}
T_\mathrm{Haar} = 
\begin{pmatrix}
1 & \frac{4}{5} & \frac{4}{5} & 0 \\
0 & 0 & 0 & 0 \\
0 & 0 & 0 & 0 \\
0 & \frac{1}{5} & \frac{1}{5} & 1
\end{pmatrix}.
\end{align}

From these facts, we see that for deep, random circuits on a well-connected architecture, the fixed point of $\bar{\rho}$ is given by $\frac{2^n}{2^n+1} \mathtt{II\ldots I} + \frac{1}{2^n+1} \mathtt{SS\ldots S}$.
This implies that the probability of seeing any substring $\mathtt{S\ldots S}$ is at most $\frac{1}{2^n+1}$, giving barren plateaus for Hermitian observables of any locality.
This observation also applies to circuits with fixed entangling gates, as long as the architecture is well-connected and every operation is followed by random single-qubit unitaries from a 2-design.

We can also easily derive in this formalism the statement on barren plateaus for random circuits of any depth with a global observable.
The key is that when the initial state is a product state across all $n$ qubits, its two-copy average state, $\bar{\rho} = \left(\frac{2}{3}\mathtt{I} + \frac{1}{3}\mathtt{S} \right)^n$ only has an inverse-exponential mass $\frac{1}{3^n}$ on the string $\mathtt{S\ldots S}$.
Also observe that even in the limit $d\to \infty$, the mass $\frac{1}{2^n+1}$ remains inverse-exponentially small.
Informally speaking, the distribution over $\{\mathtt{I},\mathtt{S}\}^n$ is highly constrained and does not allow for a significant probability mass on the $\mathtt{S\ldots S}$ string at any depth.
See Ref.~\cite{napp2022quantifyingbarrenplateauphenomenon} for a more complete proof.

We now turn our attention to circuits with noise, as studied by Ref.~\cite{wang2021noiseinducedbarrenplateausvariational}.
With unital noise, the stat-mech model has an additional update rule given by the transfer matrix
\begin{align}
T_\mathrm{noise} =
\begin{pmatrix}
1 & \gamma \\
0 & 1 - \gamma
\end{pmatrix},
\end{align}
as detailed in Appendix \ref{app:SM}.
This means that any $\mathtt{S}$ string has a probability $\gamma$ to ``decay'' into an $\mathtt{I}$ string, which is a fixed point.
In this case, as analyzed in Refs.~\cite{dalzell2021randomquantumcircuitstransform,ware2023sharpphasetransitionlinear}, the distribution rapidly converges to the $\mathtt{II\ldots I}$ fixed point.
The probability of seeing an $\mathtt{SS}$ string on any subregion decays exponentially in the number of gates applied on that subregion, or the local depth of the circuit.

Finally, we also discuss entanglement-induced barren plateaus \cite{ortizmarrero2021entanglementinducedbarrenplateaus}.
For this, we note that the relevant quantity we consider, the probability mass on any $\mathtt{SS \ldots}$ string, is exactly the expected purity of the reduced density matrix on those sites.
It often turns out for Lipschitz-continuous functions that one can get extremely good concentration bounds for random circuits.
From concentration bounds, one can show the average logarithm of the purity is very well approximated by the logarithm of the average purity.
A small mass on a $\mathtt{SS \ldots}$ string thus implies a large negative logarithm of the purity on average, which in turn is related to the average R\'enyi-2 entropy.

\subsection{Gradient evaluation}
\label{sec:gradient}

In this subsection, we show that the partial derivative of the loss function with respect to parameterized angles, as in \cref{def:parameterized_dynamic_circuit}, can be evaluated using the parameter-shift rule \cite{mitarai2018quantumcircuitlearning,schuld2019evaluatinganalyticgradientsquantum}. Recall that a feedforward operation, $\mathcal{F}(\btheta)$ can be parameterized using a unitary operation $R_X(\btheta)$ on an ancillary qubit, followed by an operation on a system qubit. Therefore, the overall loss function can be defined as a unitary circuit, followed by a measurement of a Hermitian observable. 

Let $H$ denote a Hermitian observable and let $\btheta_{j}$ denote a parameter such that the unitary circuit $\mathcal{C}(\btheta)$ can be expressed as  $\mathcal{C}(\btheta) = V_R(\btheta) \circ \exp(-i(\btheta_j/2)P)\circ V_L (\btheta)$, where $P$ 
is a Pauli operator, and $V_L(\btheta)$ and $V_R(\btheta)$ parameterized unitaries positioned on either side of $\exp(-i(\btheta_j/2)P)$. Note that we denote all parameterized angles together using $\btheta$. 

Given an input state $\sigma$, the loss function $L$ becomes 
\begin{align}
    L(\theta_j) = \Tr\left(H V_R \exp(-i \frac{\btheta_j}{2}P)V_L \sigma V^{\dagger}_L\exp(i \frac{\btheta_j}{2}P)V_R^{\dagger} \right),
\end{align}
where we denoted $L$ as a function of $\btheta_j$ alone, though it depends on other parameters $\btheta$ as well.

Define $B(\theta_j) :=  \exp(-i \frac{\btheta_j}{2}P)$, $\Xi := V_R^{\dagger}HV_R$ and $\zeta := B(\btheta_j) V_L \rho V_L^{\dagger}B(\btheta_j)^{\dagger}$. The partial derivative of $L$ with respect to $\btheta_j$ is given by 
\begin{align}
    \partial_j L: = \frac{\partial L}{\partial \btheta_j} &= -(i/2) \left[ \Tr\left( [P, \zeta]\Xi  \right) \right]\\
    & =  -(i/2) \left[ -i \Tr\left( \left[B^{\dagger}(\pi/2)\zeta B(\pi/2) \right. \right. \right. \nonumber \\
    & \left. \left. \left. - B^{\dagger}(-\pi/2)\zeta B(-\pi/2)\right]\Xi \right)  \right]\\
    & = \frac{1}{2} \left[L(\btheta_j+ \pi/2) - L(\btheta_j -\pi/2)\right].
\end{align}

Thus, the gradient with respect of $\btheta_j$ can be estimated by evaluating the loss function at two parameter-shifted values. When feedforward operations are modeled by a classical random variable $q(\phi_j)$ instead of a unitary on an ancillary qubit, the partial derivative involves updating the distributions $q(\phi_j +\pi/2)$ and $q(\phi_j -\pi/2)$. 

We now summarize the implications of our results on the variance of the gradient of the cost function.  
From \cref{thm:variance-bound-formal}, we know that under certain conditions, the variance of the cost function does not vanish exponentially. This directly implies that the variance of the gradient with respect to some parameters also does not vanish exponentially. The reason is that the variance of the loss function is upper bounded by a quantity that depends on the variances of the gradients with respect to each parameter. Specifically, from Ref.~\cite[Eq.~(C5)]{arrasmith2022equivalencequantumbarrenplateaus}, the variance of the loss function difference between two points is bounded above by $m^2 \max_i \int_0^M \int_0^M \operatorname{Var}_{\bm{\theta}_A}(\partial_i C(\bm{\theta}_A + \ell \hat{\ell})) \, d\ell \, d\ell'$, where $m$ is the number of parameters and $M$ is the distance in parameter space between the two points.
Since we have a $1/\poly(n)$ lower bound on the loss function's variance\footnote{Technically, we have not stated our theorems in this language, but our proofs can be slightly modified to give lower bounds in terms of these quantities.}, it follows that the variance of the gradient with respect to at least one parameter $i$ is also lower bounded by $1/\poly(n)$. In other words, the gradient retains sufficient signal in at least some directions in parameter space, enabling effective navigation toward a local minimum.

\section{Potential utility}\label{section:utility}

Here we give evidence that the parameterized circuit architectures we propose indeed contain good solutions to interesting problems \textit{and} that they can find these solutions in practice. In other words, we provide evidence that at scale they may provide ``quantum utility''. We focus on both ground state and thermal state preparation.
The setting that is explored in the context of ground state preparation almost\footnote{The numerical experiments for both ground states and thermal states feature a slightly different ensemble over single-qubit gates. Specifically, we consider single-qubit gates $U_3$ (see \cref{fig:vqe_setup} for example) to have the Euler form $U_3 = R_Y(\theta_1) R_Z(\theta_2) R_Y(\theta_3)$ with a uniform distribution over angles $\theta_{1,2,3} \sim \mathcal{U}[0,\pi)$. In order to have a single-qubit 2-design, the middle angle $\theta_2$ should instead be chosen from a distribution such that $\cos \theta_2 \sim \mathcal{U}[0,1]$.} satisfies the assumptions of Theorem \ref{thm:variance-bound-formal}, with deterministically applied feedforward operations at fixed locations in the circuit. 

For the setting of thermal state preparation, we perform numerical experiments based on both the minimization of a global loss function (the infidelity) and a local loss function (the gradient of parameters under variational quantum imaginary time evolution).
For the former, we do not expect \cref{thm:variance-bound-formal} to hold, while we do expect it to hold for the latter.
These experiments are meant to explore the expressivity of the ansatz, which can naturally express impure states.
Our numerical experiments can be replicated using the publicly available codebase \cite{zoufal2025dynparqcirclearning}.

\subsection{Ground state preparation}\label{ss:ground_states}

\begin{figure*}[t]
\begin{center}
    \includegraphics[width = \textwidth]{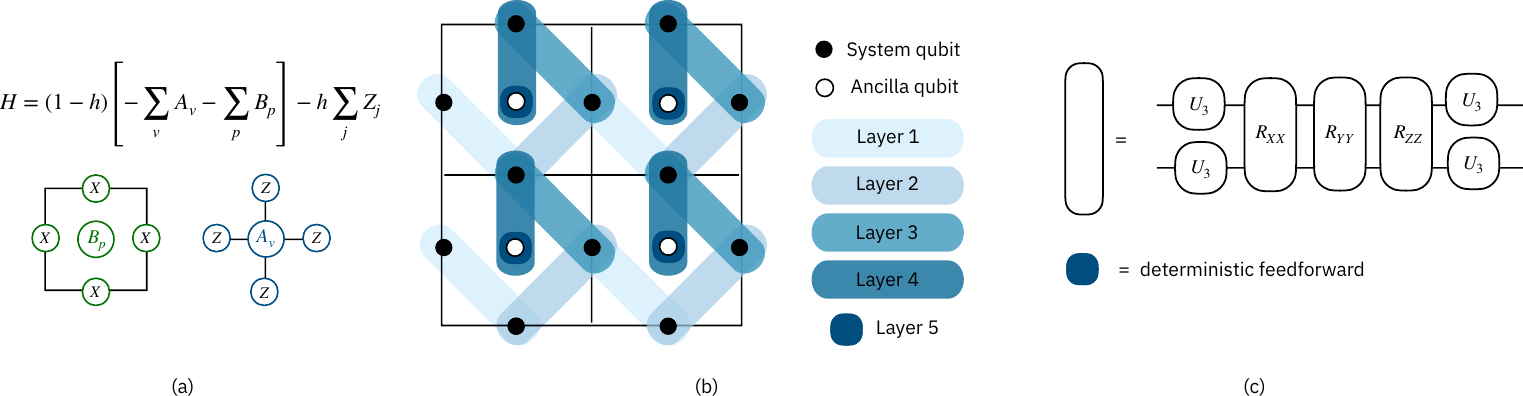}
    \caption{An illustration of the Hamiltonian and DPQC architecture used for ground state experiments. (a) The Hamiltonian is the perturbed toric code, for a system on a square lattice, with system qubits on the edges. (b) One layer of the ansatz consists of five sublayers. The first four are parameterized two-qubit unitary entangling gates, and the last sublayer consists of deterministic reset gates on all ancilla qubits. Gates are applied from lightest to darkest, so that all gates of the same opacity are applied in parallel. (c) Structure of the gates within each sublayer. $U_3$ denotes a generic single-qubit rotation gate with 3 Euler angles. }\label{fig:vqe_setup}
\end{center}
\end{figure*}

In the following we investigate the ground state problem for a Hamiltonian with $12$ qubits that corresponds to a perturbed toric code.
More specifically, the Hamiltonian is
\begin{align}
H_{\text{toric}} = (1-h)H_0-\sum\limits_{j=1}^nhZ_j,
\end{align}
where the first term corresponds to the unperturbed toric code
$\textstyle{H_0 = -\sum\limits_vA_v-\sum\limits_pB_p}$,
with $v$ and $p$ running over all vertices and plaquettes of the corresponding 2D square lattice and $A_v$ and $B_p$ representing  vertex and plaquette operators, respectively, which are given by products of $X$ and $Z$ Pauli operators.
Notably, this model has also been investigated in
Ref.~\cite{zhang2024absencebarrenplateausfinite}.
The ground state of this system exhibits long-range entanglement---a property that makes the representation, e.g., with tensor networks difficult. In order to entangle any two arbitary qubits using a circuit consisting of single and two-qubit gates---as most hardware efficient ans\"atze---one needs, even if all-to-all connectivity is given, at least $\mathcal{O}(\log(n))$ two-qubit gates.

We build our ansatz layers as shown in \cref{fig:vqe_setup}. In particular, each layer of the ansatz consists of five sublayers. The first three sublayers consist of parameterized unitary two-qubit entangling gates applied between system qubits, organized in such a way that in each sublayer all two-qubit gates are applied in parallel. The fourth sublayer consists of an entangling gate between the top qubit of each lattice block, and the ancilla in the centre of each lattice block. The structure of each two qubit gate is shown in \cref{fig:vqe_setup}(c). Finally, the fifth sublayer consists of deterministic resets applied to each ancilla qubit. In the presented experiments, we employed a total of two layers, of five sublayers each.
At this point it should be noted that our ansatz is different from the ansatz employed in Ref.~\cite{zhang2024absencebarrenplateausfinite} as 
all two-qubit gates in a sublayer are applied in parallel, hence reducing the ansatz depth. A deep version of this structure \emph{without ancillary resets} was tested in the above mentioned reference but did not perform well on account of barren plateaus. As illustrated in \cref{fig:training_ground}, we do not observe these performance issues, which we conclude to be circumvented via the insertion of ancilla qubits which are regularly reset.

\begin{figure}[t]
        \centering
        \subfloat[Difference between system and target energy density during training]{\includegraphics[width=0.48\textwidth]{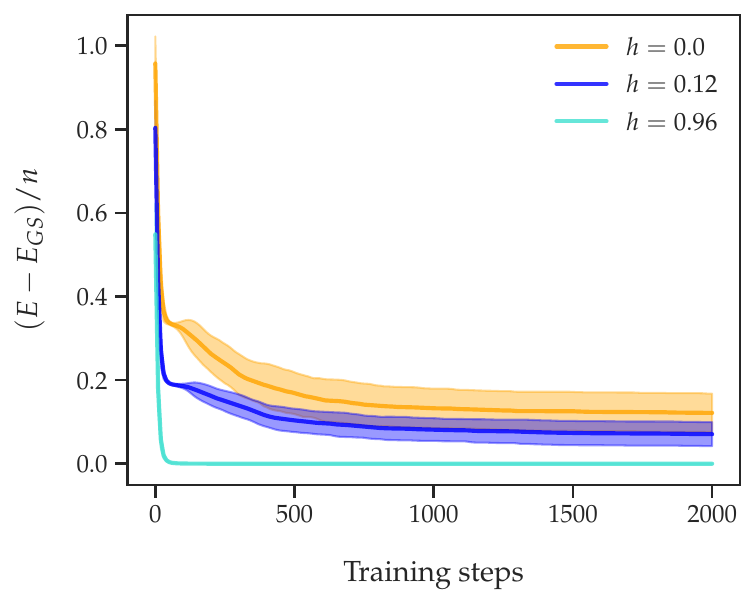}}
        
        \centering
        \subfloat[State purity during training]{\includegraphics[width=0.48\textwidth]{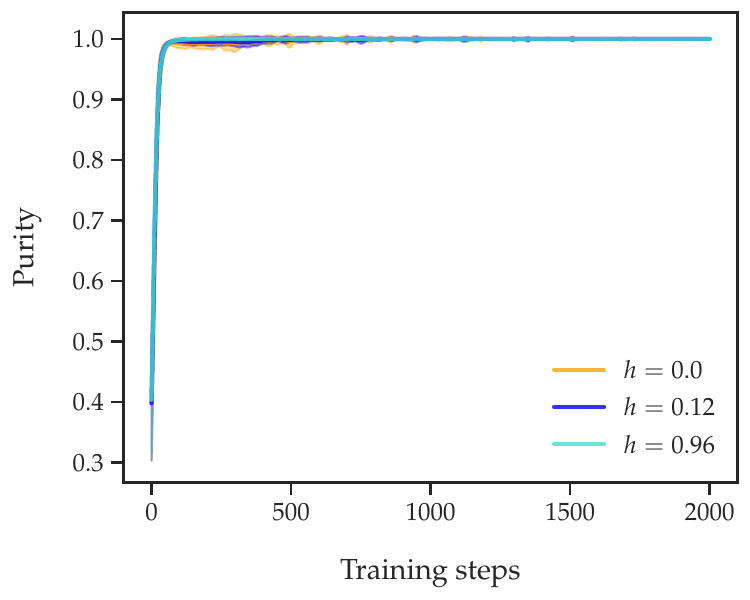}}
    \caption{
    Variational training of a DPQC architecture for a perturbed toric code Hamiltonian acting on 12 qubits. The perturbation strength is controlled by the parameter $h$. All settings were run for $100$ different seeds. The solid lines represent the average over these runs and the shaded, translucent lines illustrate the standard deviation.
    (a) The training dynamics of the loss function, for $h=0,0.12$ and $0.96$. The results indicate quick convergence for all trials. (b) The dynamics of the state's purity during training, depicted for the same values of $h$. Note that the purity does not explicitly enter into the loss function.}
    \label{fig:training_ground}
\end{figure}

The initial parameters are drawn at random from a uniform distribution $[0, \pi)$. We use the ADAM optimizer \cite{kingma2017adammethodstochasticoptimization} with a learning rate of $10^{-2}$ and at most 2000 iterations. The statistics for the different values of $h$ are evaluated from $100$ trial runs. The only exception of these settings are the experiments for $h=1.0$, which had already converged after 500 steps and with $25$ trials.
All experiments were simulated with a tensor network simulator \textit{TensorCircuit} \cite{zhang2023tensorcircuitquantumsoftwareframework}, using a purification on 24 qubits for simplicity of code integration.

The examples shown in \cref{fig:training_ground} illustrate that the training converges quickly for all trials and the different values of $h$. The instances with larger $h$ lead to faster convergence in terms of energy and purity.

\begin{figure}[t]
\subfloat[System and ground-state energy]{
\includegraphics[width=0.48\textwidth]{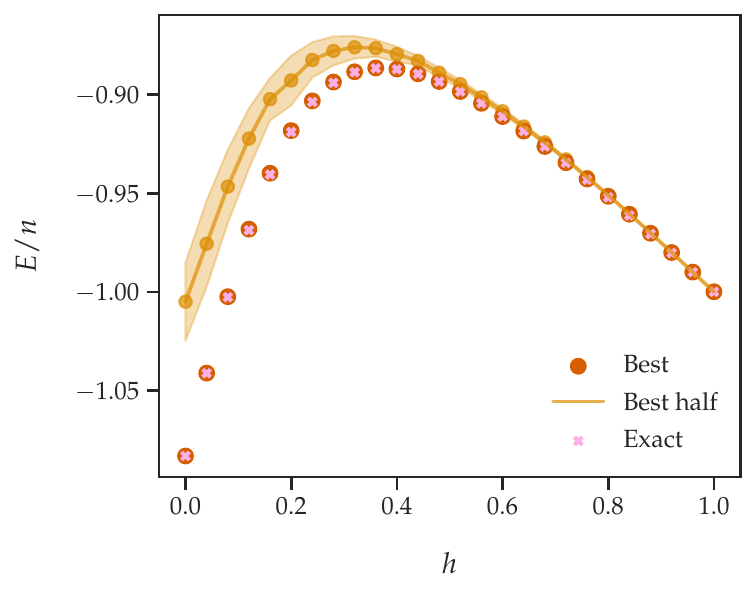}}
\newline
\subfloat[State purity after training]{
\includegraphics[width=0.48\textwidth]{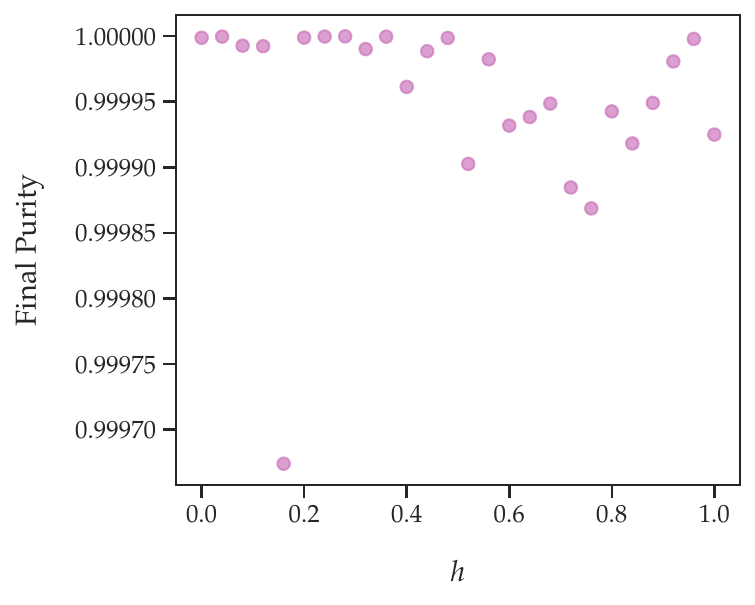}}
\label{fig:new_figure}
\caption{(a) The exact ground-state energy compared to that output by the variational algorithm, measured in terms of the best out of 100 trials (``best'') and the average over the best 50 out of 100 trials (``best half''). For each $h$, the best estimate coincides with the exact ground-state energy. (b) The purity of the final output state for different values of $h$, averaged over all trials.}
\label{fig:purity_energy}
\end{figure}

As illustrated in \cref{fig:purity_energy}(a), the average best found energy  matches the exact energy well for $h\geq 0.5$ with a standard deviation $\leq 0.022$. The best energies found per value for $h$ matches the exact energy quite well with differences $\leq 0.078$ for all tested values of $h$.
Furthermore, it is particularly interesting to note that Fig.~\ref{fig:purity_energy}(b) illustrates how the states at initialization are mostly mixed. However, the purity rapidly increases during the course of training.

\subsection{Thermal state preparation}\label{ss:thermal_states}

\begin{figure}[t]
\begin{center}
    \includegraphics[width = \columnwidth]{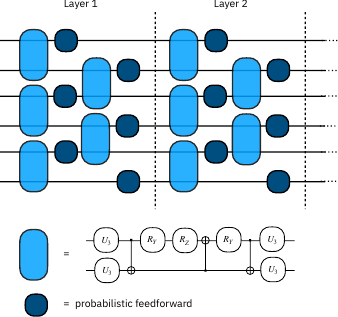}
    \caption{An illustration of the ansatz used for thermal state experiments, for a 6 qubit example. 
    $U_3$ denotes a generic single-qubit rotation gate with 3 Euler angles.
    }\label{fig:thermal_state_setup}
\end{center}
\end{figure}

Next, we investigate the applicability of our ansatz structure to thermal state preparation---which, unlike ground state preparation, typically leads to a mixed state. More specifically, we are looking to prepare a state of the form
\begin{align}
    \rho_\text{Gibbs} \left(H, \beta\right) = \frac{e^{-\beta H}}{\Tr\left[e^{-\beta H}\right]},
\end{align}
with $\beta$ representing the inverse temperature.
While it is possible to generate a mixed state in a unitary parameterized quantum circuit with auxiliary qubits, the DPQC architecture we work with is a natural fit for the task at hand.

\begin{figure*}[t]
\subfloat[$H_{\text{TFI}}$]{
\includegraphics[height=0.8\textwidth]{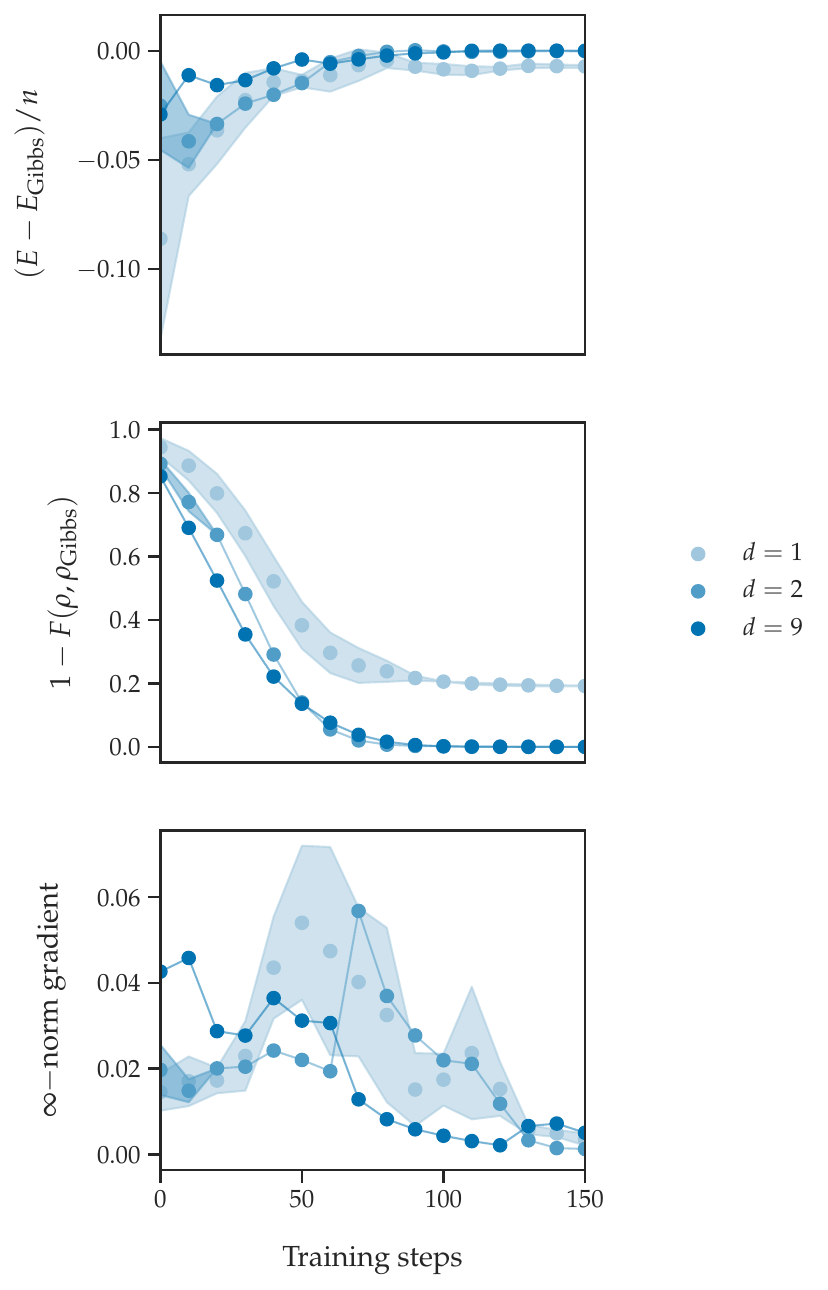}} 
\subfloat[$H_{\text{XY}}$]{
\includegraphics[height=0.8\textwidth]{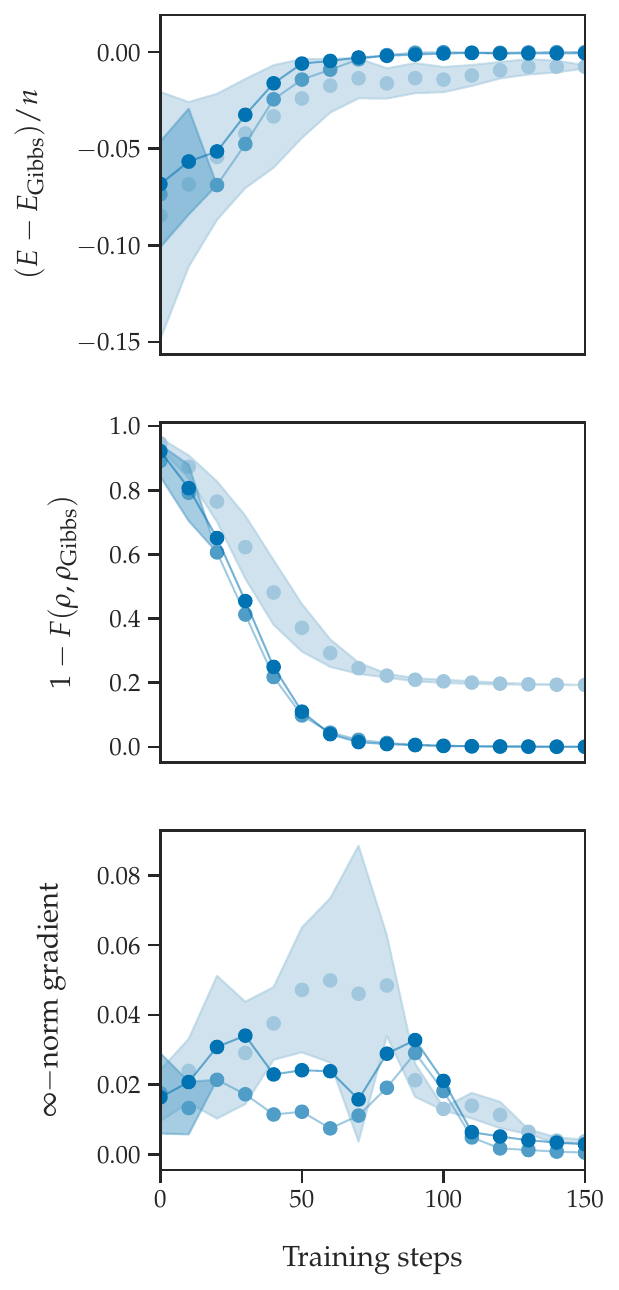}}
\caption{The behavior of the energy density difference with respect to the target state, infidelity to the target state, and the $\infty$-norm of the gradient throughout $150$ training iterations for (a) the transverse field Ising and (b) the XY model for  $n=10$ and number of layers $d\in \{1, 2, 9\}$. The dots mark the average over 5 runs and the filled lines represent one standard deviation.}
\label{fig:thermal_state_res}
\end{figure*}

\begin{figure}[t]
\subfloat[$H_{\text{TFI}}$]{
    \includegraphics[width=0.48\textwidth]{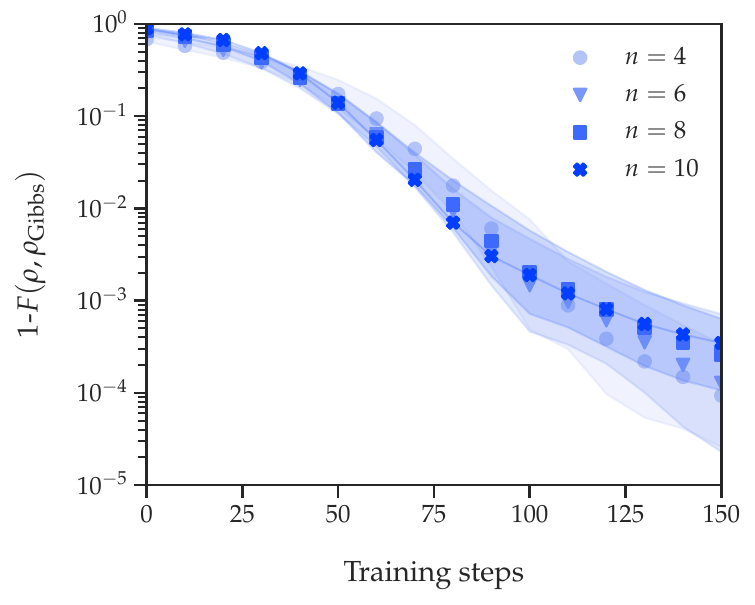}
} \newline
\subfloat[$H_{\text{XY}}$]{
\includegraphics[width=0.48\textwidth]{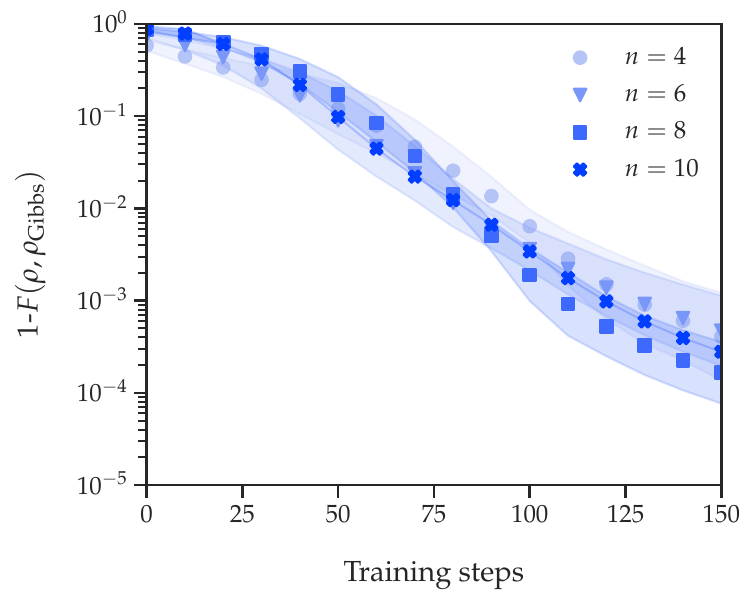}
}
\caption{The behavior of the infidelity with respect to the target state throughout $150$ training iterations for $d=2$ for the (a) transverse field Ising (b) XY model. The points are marked by $e^{\langle \log (1-F) \rangle}$, i.e.\ the exponential of the average log-infidelity; and the filled lines represent the standard deviation of this quantity.}
\label{fig:thermal_state_qubits}
\end{figure}

We consider a transverse field Ising model
\begin{align}
H_{\text{TFI}} =  -\sum\limits_{j=1}^nX_jX_{j+1}-\frac{1}{2}\sum\limits_{j=1}^nZ_j,
\end{align}
and an XY model
\begin{align}
    H_{\text{XY}} = -\sum\limits_{j=1}^n\left[\frac{3}{4}X_jX_{j+1} + \frac{1}{4}Y_jY_{j+1}\right]-\frac{1}{2}\sum\limits_{j=1}^nZ_j,
\end{align}
both on a periodic 1D chain for up to $10$ qubits.
These systems were also investigated for up to 6 qubits in Ref.~\cite{ilin2024dissipativevariationalquantumalgorithms}. 
Notably, the dissipative ansatz suggested in Ref.~\cite{ilin2024dissipativevariationalquantumalgorithms} is compatible with the model discussed in this work---provided that the dissipative gates are set to be probabilistic feedforward operations $\mathcal{F}(\btheta)$ where the application probability is controlled by a trainable parameter.

Notably, the action of $\mathcal{F}(\btheta)$ corresponds to applying the identity operation with probability $p\left(\btheta\right)$ and a reset on to the $\ket{0}$ state with probability $1-p\left(\btheta\right)$.
The reset on to $\ket{0}$ can be implemented straightforwardly with a feedforward operation by setting in \cref{eq:basic_ff} the correction operation $U_1$ after measurement to be the $X$ gate.
An illustration of the ansatz structure is depicted in \cref{fig:thermal_state_setup} for 6 qubits. More specifically, the figure illustrates the layers of the ansatz and their decomposition into sublayers. Notably, after the application of $d$ layers as illustrated in \cref{fig:thermal_state_setup}, a final layer of unitary gates is applied, which is a special case of a $d+1$-layer DPQC up to the removal of the feedforward operations.
The experiments shown in Ref.~\cite{ilin2024dissipativevariationalquantumalgorithms} demonstrate that this ansatz can prepare thermal states up to small errors with a number of layers equal to $n$ . We study the induced training behavior for layer numbers $d=\left\{1,\ldots, n-1\right\}$ which helps us understand whether a shallower form of this ansatz class is already sufficiently expressive to approximate the thermal states with good accuracy. 

First, we train the model $\tilde\rho(\btheta)$ by minimizing the infidelity to the target state $\rho\left(H, \beta\right)$, i.e.\,$1-F(\tilde\rho(\btheta), \rho_{\text{Gibbs}}(H,\beta))$.
While this type of loss function cannot be efficiently probed with small error on a quantum computer, it enables us to study the expressivity of our ansatz.
We choose the initial parameters at random from a uniform distribution $[0, 1]$
and the target inverse temperature $\beta=2$.
Similar to the example on ground state preparation, we optimize with ADAM using a learning rate of $10^{-2}$ and simulate the systems with the density matrix simulator of \textit{TensorCircuit} \cite{zhang2023tensorcircuitquantumsoftwareframework}.
All experiments were executed with $5$ different randomly chosen seeds. The results presented in Fig.~\ref{fig:thermal_state_res} show the average of those runs as well as the respective standard deviation thereof.
The plots illustrate that in both cases the training leads to very small infidelities for $d>1$ and the norm of the gradient magnitude for the model parameters does not decrease for larger $d$---despite the global form of the infidelity as a loss function.
In \cref{fig:thermal_state_qubits}, we examine the infidelity achieved at $d=2$ as a function of the number of training steps, for system sizes $n \in \{4,6,8\}$.
We see that infidelities of the order $10^{-4}$ can be achieved, illustrating that the ansatz has sufficient expressivity for representing the target thermal state.

Based on the observed convergence behavior and the magnitude of the loss function gradients, we may now test whether the ansatz can also work with a scalable loss function. More specifically, we employ 
McLachlan's variational principle \cite{mclachlan1964variationalsolutiontimedependentschrodinger} to realize an approximate imaginary time evolution \cite{zoufal2021variationalquantumboltzmannmachines, yuan2019theoryvariationalquantumsimulation} following $H_{\text{XY}}$ for times (which directly correspond to the inverse temperatures) $0.1$ and $0.25$.
The underlying method is described in more detail in Appendix~\ref{app:VarQITE}. 
At this point, we would only like to mention that the method is based on an ordinary differential equation (ODE) which is informed by McLachlan's variational principle and, as such, requires a resource scaling that is at most quadratic in the number of ansatz parameters.
Note further that in this set of experiments, each layer of the ansatz presented in Fig.~\ref{fig:thermal_state_setup} is supplemented by a parameterized local depolarizing channel
\begin{align}
    \mathcal{D}_\lambda: \rho \rightarrow (1-\lambda)\rho + \frac{\lambda}{2} I.
\end{align}

Now, all initial parameters are chosen at random from a uniform distribution $[0, 1]$, except for the parameters that control the application of the final depolarizing channels---these are chosen as $\lambda=1$ such that the initial state for variational imaginary time evolution corresponds to $\frac{I}{2^n}$.
\begin{figure}[t]
\subfloat[Infidelity]{
        \includegraphics[width=0.48\textwidth]{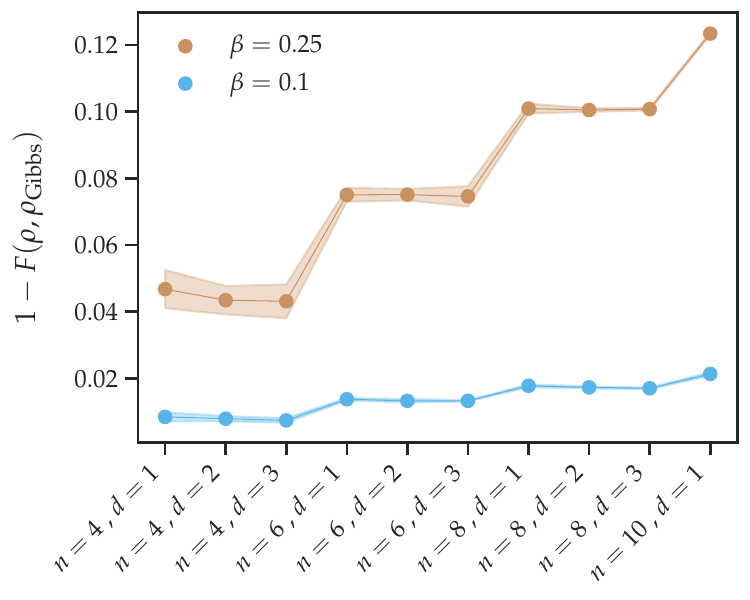}
}
\newline
 \subfloat[Energy Difference]{
        \includegraphics[width=0.48\textwidth]{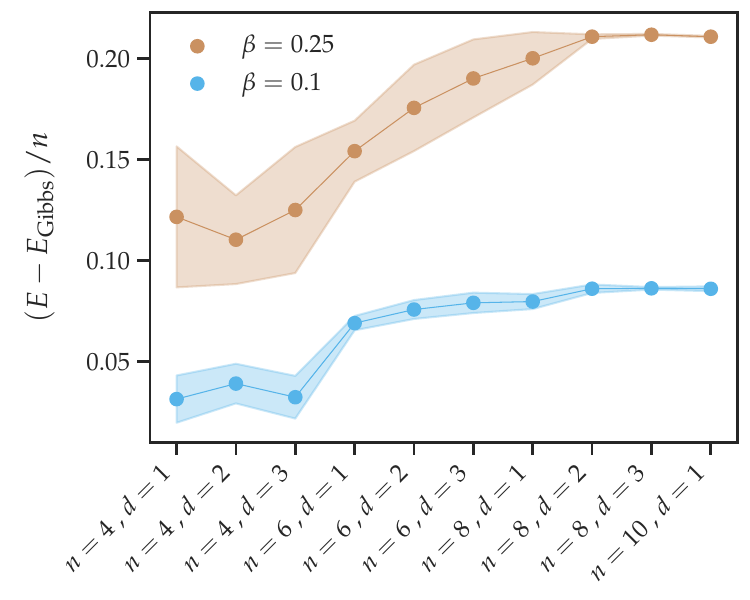}
}
\caption{The plots show the mean and standard deviation of (a) the infidelity and (b) the energy density difference with respect to the target state for the XY model using a scalable variational quantum imaginary time simulation for $\beta=0.1$ and $\beta=0.25$, for $n\in\left\{4, 6, 8, 10\right\}$ and $10$ random seeds per setting.}
\label{fig:thermal_state_qubits_vaQTE}
\end{figure}

\Cref{fig:thermal_state_qubits_vaQTE} presents the infidelity and the energy difference between the target Gibbs state and the state approximated with the DPQC ansatz using a forward Euler method for discrete time steps of size $0.01$.
The infidelity curve in Fig.~\ref{fig:thermal_state_qubits_vaQTE}(a)
illustrates that the method works better for the thermal state with $\beta=0.1$ compared to $\beta=0.25$---which is aligned with the fact that lower temperature states are usually more difficult to prepare. 
It is evident that the resulting infidelities are not as good as the ones achieved with the global infidelity loss function.
Given the performance of the infidelity based training results, we can conclude that this is not because of a lack of ansatz expressivity. 
Instead, it may be due to ODE-induced errors such as integration errors and time discretization as well as the fact that, unlike in the infidelity minimization, errors conducted at individual time steps directly accumulate.
Notably, one may further improve this methodology by using higher-order or implicit ODE solvers---which increases the required measurement resources---or by investigating the use of a regularization method targeted towards lowering the system energy.

\subsection{Caveats and criticisms}\label{ss:criticisms}
We note some caveats on these numerical experiments that may limit their applicability for other practical situations.
\begin{enumerate}
\item \textbf{Small-scale experiments:} We study relatively small system sizes, with a number of system qubits at most 12. In particular, this means that even an exponentially small gradient might not be very small.
\item \textbf{No shot noise:}
It should be noted that since we used a tensor network simulator to simplify the evaluation of expectation values, the results are devoid of shot noise. Hence, we could in principle resolve exponentially small gradients faithfully should they occur throughout the training.
\item \textbf{No hardware noise:} Using numerical simulations with tensor networks means that the results are not affected by the noise typically present in actual quantum hardware. If we were to run our experiments on quantum hardware, we would expect the results to be influenced by this hardware-induced noise.
\item \textbf{Scalability vs. performance trade-off in thermal state preparation: } The thermal state preparations are either based on a loss function that requires the evaluation of the fidelity or an ODE-based approach. Given that the former might require exponential measurement resources to estimate this quantity, the training pipeline is not scalable in its current form. While the latter approach is scalable in the sense that it may be realized with polynomial measurement resources, the resulting infidelities to the target states are significantly larger than the ones achieved with the infidelity training. This might also be due to the fact that there is no theoretical guarantee on closeness of the obtained state to the Gibbs state.
\end{enumerate}

We note a few reasons to be optimistic about DPQCs despite these caveats.
For example, note that Ref.~\cite{zhang2024absencebarrenplateausfinite} studied the perturbed toric code model at the same system size. 
Even at these system sizes, they noted that linear-depth architectures for the preparation of the toric code state failed on account of barren plateaus.
Furthermore, we note that as shown in \cref{fig:thermal_state_res}, the norm of the gradient for the DPQC architecture is not very small in general, raising the possibility that it can be meaningfully estimated.
There is also hope that in implementations of variational quantum algorithms, if hardware noise is consistent across runs, then its effect may be surmounted when implementing the algorithm \cite{sharma2020noiseresiliencevariationalquantum}.
Lastly, finding a suitable and scalable loss function for thermal state preparation with the property that the optimum is close to the target Gibbs state is still an open problem.
Improving the hyper-parameters in the investigated ODE based approach or testing additional scalable loss functions such as those suggested in Ref.~\cite{wang2021variationalquantumgibbsstate} could help make conclusive statements about the capabilities of our ansatz for thermal state preparation.
Therefore, despite the caveats we point out, our results in this section may be viewed as promising first steps towards establishing the general utility of DPQC architectures in practical situations.

\section{Classical hardness}\label{section:classical_hardness}

In the section above, we have seen evidence that, at least for small problem sizes, variational quantum algorithms using dynamic parameterized quantum circuits can feasibly provide meaningful results for interesting problems. In this section we address the question when the dynamic parameterized quantum circuits could be classically simulated.

\begin{figure}[htbp]
\includegraphics[width = 0.9\columnwidth]{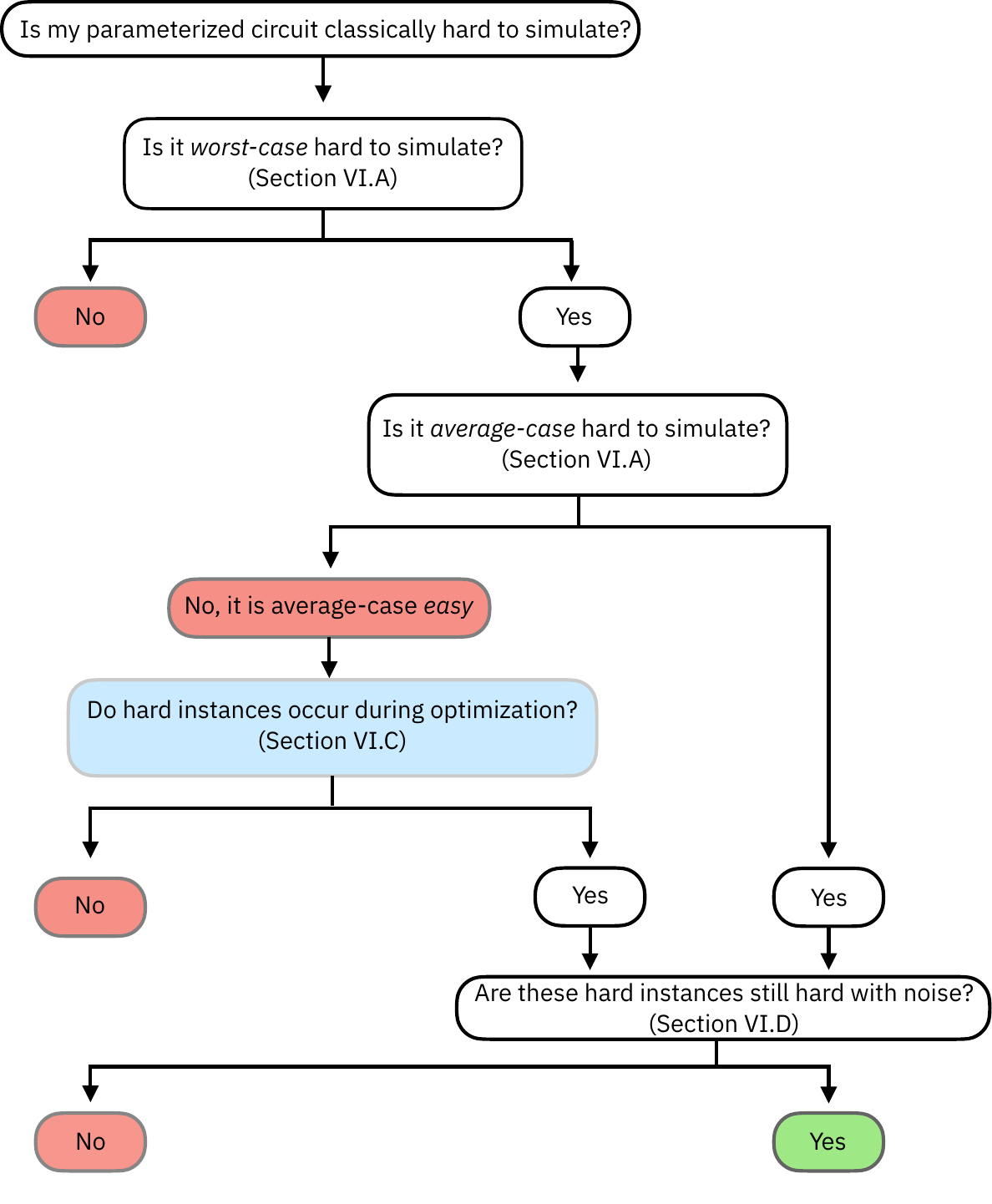}
\caption{\label{fig:classical_hardness} A methodology for determining whether a parameterized circuit class is classically hard to simulate. For the dynamic parameterized circuits considered in this work, we attempt to provide answers to the necessary questions in the sections indicated.
}
\end{figure}

In order to classically simulate \textit{a single iteration} of Algorithm~\ref{alg:variational_quantum_algorithm}, it is sufficient to be able to classically simulate:
\begin{enumerate}
\item The evaluation of the loss function $L_M(\btheta_i)$ for current circuit parameters $\btheta_i\in\Theta$.
\item The parameter update rule $\mathsf{PU}(\btheta_i) = \btheta_{i+1}$.
\end{enumerate}
Since for several gradient based parameter update rules, it is sufficient to be able to evaluate the loss function $L_M$ at $\btheta_i$ and small perturbations of $\btheta_i$~\cite{wierichs2022generalparametershiftrulesquantuma}, we restrict our attention to classical simulation of the loss function $L_M:\Theta\rightarrow\mathbb{R}$.
Moreover, we focus on the setting where the loss function can be calculated from the expectation value of local observables after running the circuit $\mathcal{C}({\btheta})\in\mathcal{C}$. Therefore, the question of classical simulation of the loss function reduces to whether one can obtain the expectation value of $O$, with respect to the output state of the circuit $\mathcal{C}({\btheta})$ given some local observable $O$ and circuit parameters $\btheta$.  

With this in mind, we proceed to analyze the classical simulability question, by following the methodology illustrated in Figure~\ref{fig:classical_hardness}. More specifically, we begin by first exploring in Section~\ref{ss:worst_case} whether or not DPQCs can be simulated in a \textit{worst-case} sense. As per the decision tree in Figure~\ref{fig:classical_hardness}, if the DPQC is \textit{not} worst-case hard to simulate, then by the arguments discussed above this simulation algorithm can be used to simulate any variational quantum algorithm using the DPQC. However, in Section~\ref{ss:worst_case} we leverage the expressivity of DPQC's to show that they are indeed worst-case hard to efficiently classically simulate, and that they therefore pass this first obstacle for classical simulations.

In light of this worst-case classical hardness, we then proceed to study the \textit{average-case} hardness of classical simulations, with respect to circuit instances drawn from some natural distribution over $\Theta$. Here we will show that, contrary to what one might hope, DPQCs are in fact \textit{easy} to simulate on average, via the low weight Pauli-path algorithm recently studied in Ref.~\cite{angrisani2024classicallyestimatingobservablesnoiseless} (see Ref.~\cite{rudolph2025paulipropagationcomputationalframework} for an overview). This average-case easiness of DPQCs then forces us to consider in Section~\ref{ss:hardness_in_optimization} the subtle question of whether or not instances which are hard to simulate might occur during the execution of variational quantum algorithms using DPQCs, for interesting and relevant problems. In particular, here we argue that there could exist problems for which hard-instances for classical simulation might occur during the execution of a DPQC-based variational quantum algorithm.

\subsection{Worst-case hardness}\label{ss:worst_case}

In this section we observe that the DPQC architectures are worst-case hard to simulate classically (under standard complexity theoretic assumptions). To this end, consider a DPQC architecture constructed in either one of the following two ways:

\begin{enumerate}
\item Start with any universal unitary parameterized circuit architecture, and add \emph{probabilistic} feedforward operations in such a way that ensures the feedforward distance of the resulting DPQC architecture is constant. 
\item Start with any universal unitary parameterized circuit architecture, and add \emph{deterministic} feedforward operations on ancilla qubits, together with entangling operations between system and ancilla qubits, in such a way that ensures the feedforward distance of the resulting DPQC architecture is constant.
\end{enumerate}

From Observations~\ref{obs:expressivity_informal_probabilistic} and~\ref{obs:expressivity_informal_deterministic} both DPQC architectures above will be at least as expressive as the unitary architecture from which one started, and therefore worst-case hard to simulate classically unless $\mathsf{BPP} = \mathsf{BQP}$. Additionally, under some additional easy to satisfy assumptions on the parameterized gates, both architectures will also be barren-plateau free via Theorem~\ref{thm:BP_DPQC_informal} and the assumed constant feedforward distance. Taken together we have:

\begin{center}
\textit{There exist DPQC architectures that are \emph{both} worst-case hard to classically simulate efficiently, \emph{and} barren-plateau free.}
\end{center}

We stress that one could straightforwardly make a similar statement about worst-case hardness and absence of BPs by simply considering universal unitary architectures with a fixed (or highly constrained) initialization strategy---eg, initializing to the identity. However, as optimization with such an initialization strategy would always start from the same region (or point) in the cost landscape, one would expect such a strategy to be disadvantageous from an optimization perspective. Ultimately, however, large-scale numerical experiments are necessary to distinguish the practical utility of these two approaches to balancing expressivity and trainability.

\subsection{Average-case hardness (easiness)}\label{ss:average_case}

As illustrated in \cref{fig:classical_hardness}, in order to have any potential of providing utility via quantum devices, it is necessary but not sufficient for a parameterized quantum circuit architecture to be worst-case hard to classically simulate efficiently. Indeed, it could be the case that there exist hard circuit instances, but that these are never encountered during the execution of a variational quantum algorithm, which can therefore be efficiently classically simulated despite the worst-case hardness. 

Unfortunately, contrary to what one might hope, a large class of DPQC architectures---including the ones we use for numerical experiments in Section~\ref{section:utility}---are in fact average-case \textit{easy} to efficiently classically simulate, via the low weight Pauli-path algorithm recently studied in Refs.~\cite{gao2018efficientclassicalsimulationnoisy,aharonov2022polynomialtimeclassicalalgorithmnoisy,shao2023simulatingquantummeanvalues,fontana2023classicalsimulationsnoisyvariational,schuster2024polynomialtimeclassicalalgorithmnoisy,angrisani2024classicallyestimatingobservablesnoiseless}. More specifically, we can make the following observation:

\begin{obs}[Average-case easiness of DPQC simulation via low-weight Pauli paths] \label{obs:avg_case_easy} Consider DPQC architectures in which (a) the only nonunitary operations are single-qubit feedforward operations as studied in \cref{section:ansatz}, (b) there are at most a polynomial number of such feedforward operations, and (c) the parameterized gates are ``locally scrambling'' as defined in Ref.~\cite{angrisani2024classicallyestimatingobservablesnoiseless}, these DPQC architectures can be efficiently classically simulated with high probability with respect to the distribution of parameters via the low weight Pauli paths algorithm of Ref.~\cite{angrisani2024classicallyestimatingobservablesnoiseless}.
\end{obs}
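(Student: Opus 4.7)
The plan is to show that the Heisenberg-picture expansion of the observable $O$ over Pauli paths can be truncated to paths of bounded Pauli weight, with small truncation error on average over circuit parameters, and then to invoke the main theorem of Ref.~\cite{angrisani2024classicallyestimatingobservablesnoiseless} essentially as a black box. Concretely, I would first rewrite
\begin{equation}
\mathrm{Tr}\!\left[\mathcal{C}(\btheta)(\ketbra{0^n}) O\right] = \mathrm{Tr}\!\left[\ketbra{0^n}\,\mathcal{C}^*(\btheta)(O)\right],
\end{equation}
and expand the back-propagated operator $\mathcal{C}^*(\btheta)(O)$ in the Pauli basis, yielding a sum over Pauli paths $\bm{s} = (s_0, s_1, \ldots, s_d)$ indexing the intermediate Paulis between layers. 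For the purely unitary two-qubit layers, this is exactly the expansion on which the low-weight Pauli-path algorithm of Ref.~\cite{angrisani2024classicallyestimatingobservablesnoiseless} operates.

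Second, I would handle the feedforward operations. Each $\mathcal{F}(\btheta_i)$ is a single-qubit channel, so its adjoint $\mathcal{F}(\btheta_i)^*$ maps any single-qubit Pauli $P$ to a linear combination of single-qubit Paulis supported on the same site. In particular, $\mathcal{F}(\btheta_i)^*$ can never enlarge the support of a Pauli string; its action is a purely local reshuffle of coefficients on one qubit. Consequently, each feedforward operation slots into the Pauli propagation framework as a one-site transition that preserves or decreases path weight, and the path sum structure used by the low-weight algorithm is unchanged.

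Third, I would invoke the average-case low-weight concentration guarantee of Ref.~\cite{angrisani2024classicallyestimatingobservablesnoiseless}: under the locally scrambling assumption on the parameterized two-qubit gates, the expected squared contribution of any Pauli path of weight $w$ decays exponentially in $w$. This implies that truncating the path sum to paths of weight at most $\ell = O(\log(n/\epsilon))$ yields an $\epsilon$-accurate estimator of the expectation value with probability at least $1-1/\mathrm{poly}(n)$ over $\btheta$. Since there are at most $n^{O(\ell)}$ such paths and each can be evaluated in polynomial time, this produces an efficient classical simulation algorithm.

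The main obstacle, and the only place where a careful argument beyond citing Ref.~\cite{angrisani2024classicallyestimatingobservablesnoiseless} is needed, is to verify that interspersing polynomially many nonunitary single-qubit feedforward channels does not destroy the variance bound that drives the low-weight truncation. Two facts should suffice: (i) the adjoint of each $\mathcal{F}(\btheta_i)$ is a contraction on the Pauli basis that acts site-locally and cannot increase Pauli weight, so its effect is to multiply path amplitudes by bounded single-site factors; and (ii) the local scrambling randomness of the two-qubit gates immediately preceding and following each feedforward operation reinstates the isotropy used in the variance computation of Ref.~\cite{angrisani2024classicallyestimatingobservablesnoiseless}. With these in hand, the weight-decay estimate goes through with at most a $\mathrm{poly}(n)$ multiplicative overhead arising from the polynomial number of feedforward operations, which is absorbed into the choice of $\ell$.
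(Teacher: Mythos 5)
Your proposal is correct and rests on the same two pillars as the paper's justification: the single-qubit feedforward channels cannot increase the weight of a back-propagated Pauli string, and the locally scrambling two-qubit gates supply the average-case weight decay that lets one truncate the Pauli-path sum at weight $O(\log(n/\epsilon))$ and invoke Ref.~\cite{angrisani2024classicallyestimatingobservablesnoiseless}. The one place you diverge is in how the feedforward is represented. The paper purifies each $\mathcal{F}$ into a unitary ``feedforward gadget'' acting on at most two fresh ancilla qubits, applies the algorithm of Ref.~\cite{angrisani2024classicallyestimatingobservablesnoiseless} verbatim to the enlarged unitary circuit, and then argues that Pauli weight spreading onto the ancillas is harmless because nothing acts on them afterwards; this is why hypothesis (b) (polynomially many feedforwards) is invoked --- to keep the number of added ancillas polynomial. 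You instead stay in the Heisenberg picture and act with the channel adjoint $\mathcal{F}(\btheta_i)^*$ directly on the Pauli basis, observing that it is a site-local, weight-nonincreasing contraction (e.g.\ it sends $X,Y,Z$ to combinations of $I$ and $Z$ on the same qubit, possibly terminating a path into the identity as with nonunital noise). Your route is arguably cleaner: it avoids the ancilla bookkeeping entirely, and as a consequence hypothesis (b) does essentially no work for you --- since each feedforward contributes factors bounded by one in magnitude, there is in fact no $\poly(n)$ multiplicative overhead to absorb, and your final remark on that point is more cautious than necessary. Both arguments share the same (acknowledged) soft spot, namely the assertion that interleaving fixed single-qubit channels between locally scrambling layers preserves the path-orthogonality underlying the variance computation of Ref.~\cite{angrisani2024classicallyestimatingobservablesnoiseless}; the paper asserts this no more rigorously than you do.
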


While the DPQC architectures from Observation~\ref{obs:avg_case_easy} are not explicitly considered in Ref.~\cite{angrisani2024classicallyestimatingobservablesnoiseless}, one can extract the observation above from the following reasoning. Firstly, as noted in Appendix~\ref{app:definitions}, we can replace each nonunitary operation with an ancilla qubit and some unitary interactions between this ancilla qubit and the wire on which the feedforward operation takes place---i.e. we can replace the feedforward with a unitary ``feedforward gadget'' in the purified picture, at the cost of at most two ancillas per feedforward operation. Given that the local scrambling property is satisfied by assumption, we can now apply the algorithm of Ref.~\cite{angrisani2024classicallyestimatingobservablesnoiseless} to this circuit. The requirement that there are at most a polynomial number of feedforward operations is to ensure that at most polynomially many ancilla qubits are added. Next we note that each feedforward gadget does not increase the weight of a propagated Pauli string on the system qubits. While the Pauli strings can spread on to the ancilla qubits, since only identity operations happen on ancilla qubits after a gadget, they can be easily handled.
Also see Ref.~\cite{bermejo2024quantumconvolutionalneuralnetworks} for a discussion.

At first glance, one might think that Observation~\ref{obs:avg_case_easy} renders DPQC architectures unsuitable for offering some advantage over classical methods. However, this is not the case. More specifically, as shown in \cref{fig:classical_hardness}, even if a circuit architecture is average-case easy to simulate, it could still be the case that there exist meaningful and relevant problems for which variational quantum algorithms encounter hard instances during execution, and therefore \emph{cannot} be efficiently classically simulated, despite the average-case simulability.

Finally, before moving on to the question of whether or not hard instances occur during optimization, we make some brief comments on the limitations of the low-weight Pauli paths algorithm from Ref.~\cite{angrisani2024classicallyestimatingobservablesnoiseless}. In particular, we note that this algorithm works in the Heisenberg picture, by backwards evolving an observable through the quantum circuit (while keeping track of a suitably truncated Pauli expansion of the observable). As such, while this algorithm can provide expectation values of observables, it cannot provide a succinct representation of the output state of the quantum circuit.

\subsection{Do hard instances occur during optimization?}\label{ss:hardness_in_optimization}
In this section, we are particularly interested in the case when:
\begin{enumerate}
\item The PQC architecture is barren-plateau free.
\item The PQC architecture is also efficiently classically simulable on average.
\item The VQA is only allowed to make a polynomial number of measurements and run for polynomial time.
\end{enumerate}

We note that as a consequence of the third assumption, we can also restrict ourselves to VQAs which only ever encounter instances with non-negligible gradients. In particular, if a VQA reaches a circuit instance with a negligible gradient, then there are two options: The first option is to take enough measurements to resolve the gradient to an accuracy which allows for a meaningful update, but this is not possible with the polynomial constraint. The second option is to use only a polynomial number of measurements to estimate the gradient. In this case however the estimated gradient is essentially random, and the subsequent optimization step is essentially a guess. VQAs with such behaviour are unlikely to succeed in optimization, and we can exclude them from our analysis.

As such, in order to understand whether or not a meaningful polynomially constrained VQA can reach hard-to-simulate instances, we would like to characterize the classical simulability of those circuit instances with non-negligible gradients.
In particular, if \emph{all} such instances are efficiently classically simulable, then clearly, under our assumptions on the VQA, no classically hard instances can occur during polynomial-time optimization.

\begin{figure}[t]
\includegraphics[width = \columnwidth]{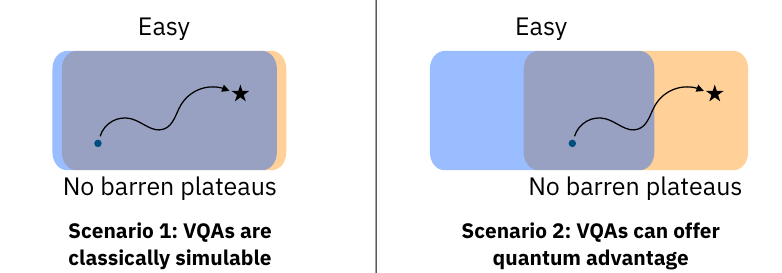}
\caption{\label{fig:optimization_possibilities} An illustration of two different scenarios for the simulability of variational quantum algorithms. The set of instances visited by a quantum algorithm are depicted by an arrow.
In the first scenario, every instance visited by the quantum algorithm remains easy to simulate, providing no quantum advantage.
In the second scenario, the variational quantum algorithm visits some instances that are hard to classically simulate, potentially offering a quantum advantage.}
\end{figure}

For the case of interest in which the PQC architecture is both barren-plateau free, and efficiently classically simulable on average, the situation is illustrated in \cref{fig:optimization_possibilities}.
In particular, the absence of barren plateau result implies that \emph{some} large fraction of instances do have non-negligible gradients, and are therefore reachable under our assumptions. Simultaneously, the average-case simulation result implies that \textit{some other} large fraction of instances can be classically efficiently simulated.
The important question is then how these fractions of instances are related to each other, and there are two distinct possibilities:

\begin{enumerate}
\item \textbf{Possibility 1:} All instances that have non-negligible gradients (and are therefore reachable) are easy to efficiently simulate classically.
\item \textbf{Possibility 2:} There exist instances that have non-negligible gradients (and are therefore reachable), which are hard to efficiently simulate classically.
\end{enumerate}

Note that if the first possibility is true, then this rules out the utility of running variational quantum algorithms with these architectures for any problem where local expectation values of the target state are the desired outcome. If possibility 2 is true, this implies that VQAs \textit{could} potentially reach hard instances during optimization, but \textit{does not} imply that they {will} always reach hard instances in practice. Which of these two possibilities is the case?

On the one hand, one could reasonably conjecture that possibility 1 is the case, because the mechanism that makes a particular instance efficient to classically simulate (low-weight Pauli path) is very similar to the mechanism which leads to non-negligible gradients. We view formalizing this connection as an important and immediate open problem. Additionally, we note that Ref.~\cite{cerezo2023doesprovableabsencebarren} has accumulated evidence for the truth of this conjecture for certain PQC architectures, by reasoning about properties of an architecture's dynamical lie algebra (DLA), which is also intimately linked to both non-negligible gradients and efficient classical simulations. However, this DLA picture is not immediately applicable to the dynamic parameterized quantum circuit architectures we study in this work.
Understanding the relation between barren plateaus in the stat-mech picture and the ensembles of average-case easy circuits for classical algorithms is crucial to the larger project of understanding the extent to which ``absence of barren plateaus implies classical simulability''~\cite{cerezo2023doesprovableabsencebarren}.

On the other hand, the numerical experiments we presented in Section~\ref{section:utility} provide some evidence that possibility 2 might be true. More specifically, in the discussion of worst-case hardness given in Section~\ref{ss:worst_case}, we argued that DPQC architectures are worst-case hard to simulate because they contain poly-depth \textit{unitary} architectures (which produce pure states). A skeptic might argue that, because DPQC architectures are generically nonunitary by design, VQAs using DPQC  architectures never converge in practice to unitary circuits, and therefore never have a chance of reaching the instances we used to prove worst-case hardness. However, we have seen in Section~\ref{section:utility} that for a meaningful problem, VQAs with DPQC architectures \textit{can} indeed converge to unitary circuits which prepare pure states! 
This constitutes evidence that VQAs with DPQCs might be able to reach instances that are structurally more similar to the (pure) worst-case hard instances than (impure) generic states for which average-case easiness holds\footnote{Of course, not all unitary instances are hard to simulate.}.

In summary, it is as of yet unclear whether possibility 1 or 2 is the case when using variational quantum algorithms with dynamic parameterized quantum circuit architectures for meaningful problems. Above, we have sketched two potential routes for resolving this issue, which we believe are concrete and important avenues for future research.

\section{Outlook and conclusions}\label{section:outlook}
This work focuses on a \textit{dynamic} parameterized quantum circuit class that is constructed by unitary gates, intermediate measurements, and feedforward operations. 
The study was motivated by the question of whether this circuit class provides a \textit{good} ansatz for variational quantum algorithms. This presents a particularly important question given that most known variational quantum ansatz classes suffer from drawbacks that result in them being either 
untrainable, classically simulable, or insufficiently expressive.
We present an ansatz that may offer an avenue for training scalable variational quantum algorithms, capable of finding good solutions to interesting problems, and challenging for classical computers.

Our theoretical analysis indicates that the studied DPQC class corresponds to a promising model for variational quantum algorithms. Evidence for this intuition comes from  the fact that DPQC models can be both expressive and free from barren plateaus. Specifically, we have shown that one can construct DPQC models which contain arbitrarily deep unitary quantum circuits---and are therefore worst case hard to simulate classically---while at the same time not suffering from exponentially vanishing gradients.  These models can interpolate smoothly between being highly expressive and barren-plateau free, making them a convenient design choice. We stress, however, that absence of barren plateaus does not imply trainability, and it remains an open question whether worst-case hard instances for classical simulation can be encountered during the training of DPQC models.

The potential of the DPQC architecture is additionally supported by numerical experiments on ground state and thermal state preparation problems.
While the numerical results demonstrate the capability to learn interesting states, it remains an open task for future research to investigate whether the observed good training behavior persists for larger system sizes.
Additionally, in order to reliably argue about the capabilities of our ansatz for thermal state preparation, it remains to be investigated whether the experiments may be reproduced with high fidelity using scalable loss functions \cite{zoufal2021variationalquantumboltzmannmachines, wang2021variationalquantumgibbsstate}.

The DPQC architecture provides a promising model that is worth studying in future research. Answering the open questions about trainability and classical simulability outlined above are crucial steps towards understanding its \textit{quantum utility}.

\textbf{Author contributions}.
AD: Conceptualization, Software, Formal analysis, Investigation, Writing--Original Draft, Supervision.
MH: Formal analysis, Writing--Review and Editing.
SN: Writing--Review and Editing.
KS: Writing--Review and Editing.
RS: Conceptualization, Writing--Original Draft, Writing--Review and Editing, Supervision.
CZ: Conceptualization, Software, Investigation, Writing--Review and Editing.

\textbf{Acknowledgments.}
We acknowledge discussions with Zo\"e Holmes, Manuel S.~Rudolph, and Armando Angrisani. We are also grateful to Marius Krumm for technical discussions of our proofs.

\appendix

\addcontentsline{toc}{section}{Appendix}
\section{Definitions and Notation}\label{app:definitions}
We set up here a few notations and recap some definitions from the main text.
We work with quantum circuits composed of two-qubit gates over $n$ qubits with a total depth $d$.
The total number of gates is denoted $m$.
We denote the space of linear operators acting on $n$-qubits $\mathcal{L}(n)$, and the space of Hermitian observables on $n$ qubits $\mathrm{Herm}(n)$.
The space of valid density matrices on $n$ qubits, or equivalently, the space of positive semidefinite trace 1 Hermitian matrices, is $\mathrm{Dens}(n) \subset \mathrm{Herm}(n)$.
We consider circuits with nonunitary operations, which we describe through channels, denoted by calligraphic letters such as $\mathcal{U}(\rho_{\mathrm{init}}): \mathrm{Dens}(n) \to \mathrm{Dens}(n)$.
Denote by $\mathbb{P}_n$ the set of all Pauli observables on $n$ qubits and by $\bm{\alpha}$ a particular Pauli observable.

We denote bitstrings in boldface, e.g.~$\bm{x} \in \{0,1\}^n$.
We use subscripts to denote subsets of bitstrings, for example $\bm{x}_j$ denotes a single component of $\bm{x}$, while $\bm{x}_A$ denotes the bitstring restricted to components in a subset $A \subseteq [n]$.

\begin{defn}[Haar measure]
    The Haar measure $\mathcal{H}$ on the unitary group $\mathrm{U}(N)$ is the unique probability measure that is both left and right invariant under the group action. That is, for any integrable function $f$ and for all $V\in \mathrm{U}(N)$, it holds that
    \begin{align}
        \int_{U\in \mathrm{U}(N)} f(U)\, d \mathcal{H} (U) = \int_{U\in \mathrm{U}(N)} f(UV)\, d \mathcal{H} (U) \nonumber\\
        = \int_{U\in \mathrm{U}(N)} f(VU) \,d \mathcal{H} (U) \,.
    \end{align}
\end{defn}

In this work, we are interested in systems of $n$ qubits such that we consider $N=2^n$. 

\begin{defn}[Global unitary $t$-design]
    Let $\mathcal{E}$ be an ensemble of $n$-qubit unitaries. Then, $\mathcal{E}$ is a unitary $t$-design if and only if for all $O\in \mathcal{L}(n)^{\otimes t}$, it holds that
    \begin{align}
    \label{eq:def-t-design}
        \mathbb{E}_{V\sim \mathcal{E}} [V^{\otimes t} O V^{\dagger \otimes t} ] =  \mathbb{E}_{V\sim \mathcal{H}} [V^{\otimes t} O V^{\dagger \otimes t} ] \,.
    \end{align} 
\end{defn}

In this work, we study a more relaxed notion of designs called \emph{local} designs.
Here, we only require that each $k$-local operation is drawn randomly from a $k$ design.
In the following, for any Haar average like the one on the RHS of \Cref{eq:def-t-design}, we usually omit explicitly mentioning the measure $\mathcal{H}$ and simply write $\mathbb{E}_{V}$.

\begin{defn}[Locally scrambling ensemble.]
Consider a distribution over quantum circuits $\mathcal{D}$.
The ensemble $\mathcal{D}$ is \emph{locally scrambling} if the distribution is invariant to the insertion of random single-qubit gates $\bm V = (V_1, V_2, \ldots V_m)$ drawn from a 2-design $\mathcal{E}$.
Mathematically, 
\begin{align}
 \Pr_{\mathcal{D}} [C] = \Pr_{\mathcal{D}} [C_{\bm{V}}], 
\end{align}
where $C_{\bm{V}}$ is the circuit obtained by interspersing single-qubit gates $V_1 \sim \mathcal{E}, V_2 \sim \mathcal{E}, \ldots V_m \sim \mathcal{E}$ after each gate of $C$.
\end{defn}

An oft-occurring calculation in the study of random circuits is that of $t$ copies of a state, which means studying the object where the initial state is $\rho_{\mathrm{init}}^{\otimes t}$ and applying copies of the channel on the initial state: $\mathcal{U}^{\otimes t}(\rho_{\mathrm{init}}^{\otimes t})$.
Let $\mathbb{S}_t$ be the permutation group on $t$ objects labeled by integers $[t]$, with group elements $\sigma : [t] \to [t]$.
Consider a representation of $\mathbb{S}_t$ where each permutation $\sigma$ is associated with the map that permutes copies of quantum states through conjugation, i.e.\ $R(\sigma) (\cdot) R(\sigma)^\dag : \mathrm{Dens}(n \times t) \to \mathrm{Dens}(n \times t)$.
The symmetric subspace $P^t$ over operators on $n\times t$ qubits is defined by operators $\{O: R(\sigma) O R(\sigma)^\dag = O\}$ invariant under permutations $\sigma \in \mathbb{S}_t$.

For $t=2$, the group $\mathbb{S}_2$ has the elements identity $e$ and SWAP $s$, satisfying
\begin{align}
e(1) &= 1; \quad s(1) = 2 \\
e(2) &= 2; \quad s(2) = 1.
\end{align}
The representation of these elements for $1$-qubit density matrices is $R: \mathrm{Dens}(2) \to \mathrm{Dens}(2)$, which has elements we denote through tensor network diagrams as:
\begin{align}
R(e) &= I  = 
\begin{quantikz}
 & & &\\
 & & &
\end{quantikz} \\
R(s) &= S =
\begin{quantikz}
 &\permute{2,1} \\
 &
\end{quantikz}.
\end{align}

The first observation that enables the stat-mech model mapping is the ``replica trick'':
\begin{align}
(\Tr \rho O)^t &= \Tr \rho^{\otimes t}O^{\otimes t} \nonumber \\
\implies \mathbb{E}_\mathcal{B}\left[(\Tr \rho O)^t\right] &= \mathbb{E}_\mathcal{B} \left[\Tr \rho^{\otimes t}O^{\otimes t}\right].
\end{align}
Exchanging the order of the expectation and the trace, we get
\begin{align}
\mathbb{E}_\mathcal{B}\left[(\Tr \rho O)^t\right] = \Tr \mathbb{E}_\mathcal{B} \left[\rho^{\otimes t} \right] O^{\otimes t}.
\end{align}
Thus, it suffices to know the average $t$-copy density matrix $\mathbb{E}_\mathcal{B} \left[\rho^{\otimes t} \right]$.
For any state $\rho$ and some distribution over unitaries $V$, we call the quantity $\mathbb{E}_{V}[\rho^{\otimes 2}]$ the 2-copy average state corresponding to $\rho$.
This is the fundamental object of interest for calculating second moment quantities, which we denote by $\bar{\rho}$.

The next basic fact we need is that performing the Haar-average over single-qubit gates $\mathbb{E}_V[V^{\otimes t} A^{\otimes t} {V^\dag}^{\otimes t}] $ for any single-qubit operator $A \in \mathrm{Herm}(1)$  projects it down to the symmetric subspace over $t$ copies:
\begin{align}
\mathbb{E}_V[V^{\otimes t} A^{\otimes t} {V^\dag}^{\otimes t}] \in P^t,
\end{align} 
where $P^t$ is the symmetric subspace $\{O: \sigma O \sigma^\dag = O\}$ defined by operators invariant under permutations $\sigma \in \mathbb{S}_t$.
When $t =1$, the above reduces to
\begin{align}\label{trace_example}
\mathbb{E}_V[VA {V^\dag}] = \Tr A \frac{I}{2}.
\end{align}
For $t=2$, we use as basis elements for the symmetric subspace the $4\times 4$ identity gate $I$ and the SWAP gate
\begin{align}
S =
\begin{pmatrix}
    1 & 0 & 0 & 0 \\
    0 & 0 & 1 & 0 \\
    0 & 1 & 0 & 0 \\
    0 & 0 & 0 & 1
\end{pmatrix}.
\end{align}
Note that $S^2 = I$ and $\Tr I = 4$, $\Tr S = 2$.
We also use the trace-1 normalized versions of this basis set, denoted in typewriter font: $\mathtt{I}:= \frac{I}{4}$, and $\mathtt{S}:=\frac{S}{2}$.
We obtain
\begin{align}
  \mathbb{E}_V[V^{\otimes 2} A^{\otimes 2} {V^\dag}^{\otimes 2}] = a I + b S, \label{eq_statmech_span_i_s}
\end{align}
where $a$ and $b$ can be obtained by solving the linear equations
$\Tr A^{\otimes 2} = (\Tr A)^2 = 4 a + 2 b$, $\Tr A^{\otimes 2} S = \Tr (A^2) = 2a + 4b$, giving:
\begin{align}
  \mathbb{E}_V[V^{\otimes 2} A^{\otimes 2} {V^\dag}^{\otimes 2}] = 
 \frac{(\Tr A)^{2} - \frac{1}{2} \Tr (A^2)}{3} I + \nonumber \\
 \frac{\Tr (A^2)- \frac{1}{2} (\Tr A)^{2}}{3} S. \label{eq_singleqbthaar}
\end{align}

For a single-qubit Pauli observable $\bm{\alpha} \neq I$, this simplifies to:
\begin{align}
 \mathbb{E}_V[V^{\otimes 2} \bm{\alpha}^{\otimes 2} {V^\dag}^{\otimes 2}] = \frac{2S}{3} - \frac{I}{3} = \frac{4}{3}\left(\mathtt{S} - \mathtt{I}\right). \label{eq_singleqbtpauli}
\end{align}

We also note the relations for $n$-qubit Pauli observables:
\begin{align}
\Tr (\bm{\alpha} \otimes \bm{\beta}) = \delta_{\bm{\alpha},\bm{I}} \delta_{\bm{\beta},\bm{I}} \\
\Tr (\bm{\alpha} \otimes \bm{\beta} \cdot S) = \delta_{\bm{\alpha},\bm{\beta}}, \label{eq_product_paulis}
\end{align}
where $\bm{I}$ denotes the $n$-qubit identity Pauli word.

\section{The stat-mech model}\label{app:SM}
We derive the basics of the technique here.
Our discussion closely follows that of Refs.~\cite{hunter-jones2019unitarydesignsstatisticalmechanics,dalzell2022randomquantumcircuitsanticoncentrate,dalzell2021randomquantumcircuitstransform,napp2022quantifyingbarrenplateauphenomenon}.
For second moment observables, the stat-mech model involves computing and keeping track of the two-copy average state.
For an $n$-qubit state $\rho$, the two-copy average state resides in $\text{span}(\{\mathtt{I},\mathtt{S}\}^n)$.
Let us denote by a bitstring $\bm{x}$ whether we pick out an operator $\mathtt{I}$ ($\bm{x}_j = 0$) or $\mathtt{S}$ ($\bm{x}_j=1$) at site $j$.
In other words, let $\mathtt{T}_{x_j} := \mathtt{I}^{1-\bm{x}_j} \cdot \mathtt{S}^{\bm{x}_j}$ and $\mathtt{T}_{\bm{x}} := \prod_{j}^{n} \mathtt{T}_{x_j}$.
We can then write the two-copy average state as 
\begin{align}
  \bar{\rho} &= \sum_{\bm{x} \in \{0,1\}^n } c_{\bm{x}} \mathtt{I}^{1-\bm{x}_1} \cdot \mathtt{S}^{\bm{x}_1} \otimes \ldots \otimes \mathtt{I}^{1-\bm{x}_n} \cdot \mathtt{S}^{\bm{x}_n} \nonumber 
  \\
  & = \sum_{\bm{x}} c_{\bm{x}} \mathtt{T}_{\bm{x}},
\end{align}
where $\sum_{\bm{x}} c_{\bm{x}} = 1$.
In this way, we can also associate a two-copy average state with a (quasi)-probability distribution $\mathcal{D}$ over bitstrings $\bm{x}\in \left\{ 0,1 \right\}^n$.

Lastly, we use the fact that the channels are drawn independently of each other, which lets us perform the average over many channels in sequence:

\begin{defn}
For a circuit with initial state $\rho_0$, consider $\rho^t(\btheta) = \mathcal{U}_t(\btheta) \dots \mathcal{U}_2(\btheta) \circ \mathcal{U}_1(\btheta)(\rho_0)$

the state of the system at time step $t$ when fixing the parameters of the circuit $\btheta$.
The two-copy average state at time $t$, denoted $\bar{\rho}^t$, is the average two-copy state of $\rho^t(\btheta)$ according to the distribution $\btheta \sim \mathcal{D}_p$ over parameters, or equivalently, according to the distribution $\mathcal{B}$ over channels.
\end{defn}
It is easy to see that this average $\mathbb{E}_{\mathcal{B}}$ does not depend on the operations occurring after time step~$t$ since they are not in the backwards light cone of $\rho^t$.

\begin{claim}
The average 2-copy state $\bar{\rho}^t$ at time $t$ can be obtained from the average $\bar{\rho}^{t-1}$ at time $t-1$.
\end{claim}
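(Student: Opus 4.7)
The plan is to use the factorization of the average over circuit parameters into independent averages over each layer, together with the fact that the local scrambling property keeps us in the symmetric subspace throughout. In short, I want to establish the recursive formula
\begin{equation}
\bar{\rho}^t \;=\; \mathbb{E}_{\mathcal{U}_t}\!\left[\,\mathcal{U}_t^{\otimes 2}\!\left(\bar{\rho}^{\,t-1}\right)\right],
\end{equation}
which depends on $\bar{\rho}^{t-1}$ alone and not on the full history.

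First, I would unfold the definition $\bar{\rho}^t = \mathbb{E}_{\mathcal{B}}\bigl[\rho^t(\btheta)^{\otimes 2}\bigr]$ using the replica identity $\mathcal{U}(\rho)^{\otimes 2} = \mathcal{U}^{\otimes 2}(\rho^{\otimes 2})$, so that
\begin{equation}
\bar{\rho}^t = \mathbb{E}_{\mathcal{U}_t,\mathcal{U}_{t-1},\ldots,\mathcal{U}_1}\!\left[\mathcal{U}_t^{\otimes 2}\circ\mathcal{U}_{t-1}^{\otimes 2}\circ\cdots\circ\mathcal{U}_1^{\otimes 2}\bigl(\rho_0^{\otimes 2}\bigr)\right].
\end{equation}
The ensemble $\mathcal{B}$ is a product distribution across layers (different parameter components feed different layers, and in the locally scrambling ensemble the interleaved single-qubit 2-design gates after each operation are drawn independently). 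By Fubini, the outer average over $\mathcal{U}_t$ can be pulled out, leaving exactly $\mathbb{E}_{\mathcal{U}_t}\bigl[\mathcal{U}_t^{\otimes 2}(\bar{\rho}^{t-1})\bigr]$.

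Second, I want to verify that the right-hand side is indeed well-defined as an update rule on the symmetric-subspace expansion coefficients, i.e.~that $\bar{\rho}^t$ stays in $\mathrm{span}(\{\mathtt{I},\mathtt{S}\}^n)$. This is where local scrambling is essential: since $\mathcal{U}_t$ includes (or is equivalent in distribution to) random single-qubit gates drawn from a 2-design on each wire after the nontrivial operation, the image $\mathbb{E}_{V}\bigl[V^{\otimes 2} (\cdot) V^{\dagger\otimes 2}\bigr]$ applied qubit-wise projects onto the permutation-invariant subspace on each site, as recorded in Eq.~\eqref{eq_statmech_span_i_s}. Therefore if $\bar{\rho}^{t-1}\in\mathrm{span}(\{\mathtt{I},\mathtt{S}\}^n)$, so is $\bar{\rho}^t$, and the update acts as a linear transfer operator on the coefficient vector $(c_{\bm{x}})_{\bm{x}\in\{0,1\}^n}$.

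Finally, I would observe that the base case $\bar{\rho}^{\,0} = \rho_0^{\otimes 2}$ can always be rewritten in the $\{\mathtt{I},\mathtt{S}\}^n$ basis after one application of the first layer's single-qubit scrambling (or, for product inputs like $\ketbra{0^n}$, directly via the one-qubit identity $V\ketbra{0}V^\dagger$ averaged over a 2-design, which yields $\tfrac{2}{3}\mathtt{I}+\tfrac{1}{3}\mathtt{S}$ on each site), so the recursion is initialized in the symmetric subspace. The only mild subtlety, and the step I would pay most attention to, is justifying the independence-based factorization cleanly in the presence of shared parameter vectors $\btheta$: one has to point out that distinct components of $\btheta$ drive distinct channels, and that the locally scrambling layer of single-qubit 2-design unitaries after each operation (which is freely insertable without changing the induced distribution over circuits) decorrelates the layers so that conditioning on $\bar{\rho}^{t-1}$ suffices. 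With that remark in place the claim follows.
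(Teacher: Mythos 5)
Your proposal is correct and follows essentially the same route as the paper: both unfold the definition via the replica identity $(\mathcal{U}_t(\rho))^{\otimes 2}=\mathcal{U}_t^{\otimes 2}(\rho^{\otimes 2})$ and use the independence of the channels across layers to pull the average over $\mathcal{U}_t$ outside, yielding $\bar{\rho}^t=\mathbb{E}_{\mathcal{U}_t}\bigl[\mathcal{U}_t^{\otimes 2}(\bar{\rho}^{t-1})\bigr]$. The extra material on the symmetric-subspace closure and the base case is sound but goes beyond what this particular claim requires; the paper treats those points separately.
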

This is because we have
\begin{align}
\mathbb{E}_{\mathcal{B}} \left[ (\mathcal{C}^t (\rho_0))^{\otimes 2} \right] &= \mathbb{E}_{\mathcal{U}_1, \mathcal{U}_2,\ldots \mathcal{U}_t} \left[ ( \mathcal{U}_t \ldots \mathcal{U}_2 \circ \mathcal{U}_1(\rho_{0}))^{\otimes 2} \right] \\
& = \mathbb{E}_{\mathcal{U}_t} \mathbb{E}_{\mathcal{U}_1, \mathcal{U}_2,\ldots \mathcal{U}_{t-1}} \left[ \left(\mathcal{U}_t (\rho^{t-1})\right)^{\otimes 2} \right] \\
& = \mathbb{E}_{\mathcal{U}_t} \left[ \mathcal{U}_t^{\otimes 2} (\bar{\rho}^{t-1}) \right].
\end{align}
Therefore, in order to get a handle on properties of $\bar{\rho}_d$, the 2-copy average of the circuit output, it suffices to keep track of the average state $\bar{\rho}^t$ in time.

\begin{claim}[Linearity]
For a two-copy average state $\bar{\rho}$, we can define the reduced density matrix in the usual way by tracing out the appropriate subregion over both copies.
By linearity, this coincides with the two-copy average of the reduced density matrix, i.e.\
$\mathbb{E}_{\mathcal{B}} [(\Tr_A \rho)^{\otimes 2}] = \Tr_{A\times A} \mathbb{E}_\mathcal{B} \left[ \rho^{\otimes 2} \right]$.
\end{claim}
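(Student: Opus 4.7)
The plan is to prove this identity in two small steps: first, a purely algebraic identity relating the partial trace over two copies to the tensor square of the partial trace; second, a routine commutation of the (finite-dimensional, linear) partial trace with the expectation over the circuit ensemble.

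For the first step, I would recall that the partial trace $\Tr_A : \mathcal{L}(n) \to \mathcal{L}(n - |A|)$ is defined on elementary tensors by $\Tr_A(X_A \otimes Y_{A^c}) = (\Tr X_A)\,Y_{A^c}$ and extended linearly. The partial trace $\Tr_{A \times A}$ acting on the two-copy Hilbert space is, by definition, the partial trace with respect to the subsystem $A$ in copy $1$ composed with the partial trace with respect to the subsystem $A$ in copy $2$ (these commute since they act on disjoint tensor factors). Applied to a product state $\rho \otimes \rho$, this gives
\begin{equation}
\Tr_{A \times A}(\rho \otimes \rho) \;=\; (\Tr_A \rho) \otimes (\Tr_A \rho) \;=\; (\Tr_A \rho)^{\otimes 2},
\end{equation}
which is just the multiplicativity of the partial trace on product operators.

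For the second step, I would use linearity of expectation. Since $\Tr_{A\times A}$ is a linear map on the finite-dimensional space $\mathcal{L}(2n)$, and since the expectation $\mathbb{E}_{\mathcal{B}}$ is a finite or countable convex combination (or an integral over a compact parameter space against a probability measure) with values in the same finite-dimensional operator space, we may exchange the two operations:
\begin{equation}
\mathbb{E}_{\mathcal{B}}\bigl[\Tr_{A \times A}(\rho^{\otimes 2})\bigr] \;=\; \Tr_{A \times A}\bigl(\mathbb{E}_{\mathcal{B}}[\rho^{\otimes 2}]\bigr).
\end{equation}
Combining this with the first step applied inside the expectation gives
\begin{equation}
\mathbb{E}_{\mathcal{B}}\bigl[(\Tr_A \rho)^{\otimes 2}\bigr] \;=\; \mathbb{E}_{\mathcal{B}}\bigl[\Tr_{A\times A}(\rho^{\otimes 2})\bigr] \;=\; \Tr_{A\times A}\bigl(\mathbb{E}_{\mathcal{B}}[\rho^{\otimes 2}]\bigr) \;=\; \Tr_{A\times A}\bar{\rho},
\end{equation}
which is the claimed identity.

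There is essentially no hard part: the only thing to watch is that the ``state'' $\rho$ here refers to the random output $\rho(\btheta)$ of the circuit, so one should be slightly careful that one is exchanging $\mathbb{E}_{\mathcal{B}}$ with $\Tr_{A\times A}$ on the random operator $\rho(\btheta)^{\otimes 2}$, not on $\rho(\btheta)$ itself (since $(\Tr_A \rho)^{\otimes 2}$ is quadratic in $\rho$ and hence does \emph{not} commute with the expectation in general). Once one works at the level of the two-copy object, as the claim does, the argument is immediate from finite-dimensional linearity.
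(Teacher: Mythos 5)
Your proposal is correct and follows exactly the route the paper intends: the paper simply asserts the identity ``by linearity,'' and your two steps --- the multiplicativity $\Tr_{A\times A}(\rho\otimes\rho)=(\Tr_A\rho)^{\otimes 2}$ on product operators followed by commuting the linear map $\Tr_{A\times A}$ with $\mathbb{E}_{\mathcal{B}}$ --- are precisely the details being elided. Your remark that the exchange must happen at the level of the two-copy object $\rho^{\otimes 2}$ (since the map is quadratic in $\rho$) is the one genuinely worthwhile observation, and it is handled correctly.
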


These two properties imply that we can obtain a description of the 2-copy average state at time $t+1$ from the one at time $t$ using \emph{local} update rules.
Crucially, it suffices to understand the map $\bar{\rho}^t \mapsto \mathbb{E}_{V_1, V_2}[(V_1 V_2 \otimes V_1 V_2)\ \mathcal{U} \otimes \mathcal{U}(\bar{\rho}^t) (V_1 V_2 \otimes V_1 V_2)^\dag]$ for Haar-random single-qubit gates $V_1 V_2$ and various channels $\mathcal{U}$.
We will restrict our attention to channels acting on at most two qubits at a time.
Suppose the channel $\mathcal{U}$ has the Kraus form $\mathcal{U}(\cdot) = \sum_i K_l(\cdot) K_l^\dag $.
We write this out in tensor notation as
\begin{align}\label{eq:local_update_rule}
\bar{\rho}^{t +1} = 
\mathbb{E}_{V_1, V_2} \sum_{l,m}
\resizebox{0.7\columnwidth}{!}{\begin{quantikz}[row sep={1cm,between origins},transparent]
& \gate[][8mm][8mm]{V_1}  &\gate[2][1cm][9mm]{K_l} & \gate[4]{\bar{\rho}^t} & \gate[2][1cm][9mm]{K_l^\dagger} & \gate[][8mm][8mm]{V_1^\dagger} & \\
& \gate[][8mm][8mm]{V_2}  &   &  &  & \gate[][8mm][8mm]{V_2^\dagger} & \\
& \gate[][8mm][8mm]{V_1}  &  \gate[2][1cm][9mm]{K_m} & & \gate[2][1cm][9mm]{K_m^\dag}   &\gate[][8mm][8mm]{V_1^\dagger} & \\
& \gate[][8mm][8mm]{V_2}   &  & & &   \gate[][8mm][8mm]{V_2^\dagger} &
\end{quantikz},}
\end{align}
where the expressions are read left to right (i.e.\ time flows from center out to each side).
The third and fourth qubit lines are the copies of the first and second, and $K_l$ acts as a two-qubit operator on qubits 1 and 2 or their copies 3 and 4.

\begin{claim}[Stat-mech update rule]
Consider a two-qubit operation $\mathcal{U}$.
Let the reduced two-copy average state on the qubits at time $t$ be $\bar{\rho}^t = a \mathtt{II} + b \mathtt{IS} + c \mathtt{SI} + d \mathtt{SS}$.
Then the reduced two-qubit average state after application of the channel satisfies  $\bar{\rho}^{t+1} = \mathbb{E}_{V_1, V_2}[(V_1 V_2 \otimes V_1 V_2)\ (\mathcal{U} \otimes \mathcal{U})(\bar{\rho}^t) (V_1 V_2 \otimes V_1 V_2)^\dag] = a'\mathtt{II} + b' \mathtt{IS} + c' \mathtt{SI} + d' \mathtt{SS}$, where
\begin{align}
\begin{pmatrix}
  a' \\
  b' \\
  c' \\
  d'
\end{pmatrix} = T 
\begin{pmatrix}
a \\
b\\
c\\
d
\end{pmatrix}
\end{align}
for a $4 \times 4$ stochastic matrix $T$.
We call $T$ the transfer matrix.
\end{claim}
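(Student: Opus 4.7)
The goal is to establish four properties of the map
\begin{equation}
\Lambda(\sigma):=\mathbb{E}_{V_1,V_2}\!\left[(V_1V_2)^{\otimes 2}(\mathcal U\otimes \mathcal U)(\sigma)(V_1V_2)^{\dagger\otimes 2}\right]
\end{equation}
acting on operators over two copies of two qubits: $\Lambda$ sends $\mathrm{span}\{\mathtt{II},\mathtt{IS},\mathtt{SI},\mathtt{SS}\}$ into itself, is linear, is trace preserving (giving columns summing to $1$), and in this basis has non-negative matrix representation.

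For the closure property, I would observe that the single-qubit Haar twirl $\mathbb{E}_V[V^{\otimes 2}(\cdot)V^{\dagger\otimes 2}]$ projects onto the commutant of $\{V^{\otimes 2}\}$, which by Schur--Weyl duality for two copies of a qubit is exactly $\mathrm{span}\{I,S\}$. Applying this qubit-wise (for $V_1$ and $V_2$ independently), the image of $\Lambda$ lies in $\mathrm{span}\{I,S\}\otimes\mathrm{span}\{I,S\}=\mathrm{span}\{\mathtt{II},\mathtt{IS},\mathtt{SI},\mathtt{SS}\}$ regardless of the form of $\mathcal U$. This is already implicit in \eqref{eq_statmech_span_i_s} applied once per qubit.

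Linearity of $\Lambda$ is immediate from linearity of the channel, unitary conjugation, and Haar integration, so it induces a well-defined $4\times 4$ matrix $T$ on the coefficient vector $(a,b,c,d)$, with $T_{ij}$ extracted by expanding $\Lambda(\mathtt{T}_j)$ with respect to the (non-orthonormal) basis $\{\mathtt{T}_{\bm x}\}$ via its Gram matrix. Column normalization then falls out of trace preservation: $\mathcal U$ being CPTP and unitary conjugation preserving trace give $\Tr\Lambda(\sigma)=\Tr\sigma$. Each $\mathtt{T}_{\bm x}$ has unit trace by the normalizations $\mathtt{I}=I/4$ and $\mathtt{S}=S/2$, so $\Tr(a\mathtt{II}+b\mathtt{IS}+c\mathtt{SI}+d\mathtt{SS})=a+b+c+d$, and equating this on input and output forces $\sum_i T_{ij}=1$ for every column $j$.

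The main obstacle is non-negativity of the entries of $T$. This is subtle because the basis operators are not all positive semidefinite ($\mathtt{S}$ has a negative eigenvalue), so complete positivity of $\mathcal U^{\otimes 2}$ does not directly imply non-negativity in this coefficient basis. My plan is to compute $T_{ij}$ explicitly by applying \eqref{eq_singleqbthaar} twice (once per qubit twirl) after expanding $\mathcal U$ in Kraus form $\mathcal U(\rho)=\sum_l K_l\rho K_l^\dagger$. Reorganizing the resulting terms diagram-by-diagram from \eqref{eq:local_update_rule} yields a sum over pairs $(l,m)$ of quantities of the form $|\Tr[P_1 K_l^\dagger P_2 K_m]|^2$ (for $P_1,P_2\in\{I,S\}^{\otimes 2}$) divided by positive dimensional prefactors, which are manifestly non-negative and coincide with the standard Boltzmann weights of the stat-mech mapping in Refs.~\cite{hunter-jones2019unitarydesignsstatisticalmechanics,dalzell2022randomquantumcircuitsanticoncentrate}. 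The bookkeeping---in particular inverting the Gram matrix when extracting coefficients---is where the bulk of the remaining work lies, but no new conceptual ingredient beyond qubit-wise application of \eqref{eq_singleqbthaar} is needed.
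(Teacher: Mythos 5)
Your treatment of closure, linearity, and column normalization is sound and matches what the paper does (mostly implicitly): the qubit-wise Haar twirl projects onto $\mathrm{span}\{I,S\}\otimes\mathrm{span}\{I,S\}$ via \cref{eq_statmech_span_i_s} applied once per qubit, linearity yields the $4\times 4$ matrix, and trace preservation of $\mathcal{U}^{\otimes 2}$ plus the unit-trace normalizations $\mathtt{I}=I/4$, $\mathtt{S}=S/2$ force each column of $T$ to sum to one. The paper itself offers no more than this; it simply notes that the parameters of $T$ are ``derived from a calculation analogous to'' the single-qubit twirl and inferred by linearity.

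The genuine gap is your non-negativity step. The assertion is false in general, and the paper knows it: the two-qubit transfer matrix in \cref{eq_twoqubit_transfermat} contains entries equal to $\beta$ with allowed range $\left[-\frac{\alpha}{5},\,1-\frac{4\alpha}{5}\right]$, so negative entries genuinely occur for some fixed two-qubit unitaries followed by single-qubit twirls. Accordingly the appendix explicitly says ``We will restrict our attention to transfer matrices $T$ that have nonnegative entries,'' treating non-negativity as an added hypothesis rather than a consequence, and the text elsewhere calls the associated object a ``(quasi)-probability distribution.'' Your proposed mechanism---reorganizing the Kraus-expanded twirl into terms $\abs{\Tr[P_1 K_l^\dagger P_2 K_m]}^2$ over positive prefactors---cannot deliver non-negativity, because extracting coefficients in the non-orthogonal basis $\{I,S\}$ (equivalently, the Weingarten weights for $\mathbb{S}_2$, or the inverse Gram matrix you mention) introduces signs: by \cref{eq_singleqbthaar} the $\mathtt{I}$-coefficient is proportional to $(\Tr A)^2-\tfrac12\Tr(A^2)$, which is already negative for a traceless Pauli, as \cref{eq_singleqbtpauli} shows. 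So if ``stochastic'' is read as requiring non-negative entries, the claim needs the paper's explicit restriction, and your route to proving it would fail; the rest of your proposal is correct and coincides with the paper's argument.
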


We now state some results on the transfer matrices for some common operations.
Reference \cite{ware2023sharpphasetransitionlinear} derived general stat-mech rules for fixed two-qubit unitaries in terms of their local unitary invariants, the entangling power and the swapping power.

\begin{claim}[Transfer matrices of some gates]
For a two-qubit gate, $T$ takes the form
\begin{align}
T =
\begin{pmatrix}
1 & \frac{4\alpha}{5} & \frac{4\alpha}{5} & 0 \\
0 & 1-\alpha - \beta & \beta & 0 \\
0 & \beta & 1-\alpha  - \beta & 0 \\
0 & \frac{\alpha}{5} & \frac{\alpha}{5} & 1
\end{pmatrix}, \label{eq_twoqubit_transfermat}
\end{align}
with $\alpha \in \left[ 0, \frac{10}{9} \right]; \beta \in \left[ -\frac{\alpha}{5}, 1- \frac{4\alpha}{5} \right]$, subject to the constraint $\beta + \frac{\alpha}{5} \leq \left( \beta + \frac{\alpha}{2} \right)^2$.
Furthermore, for Haar-random two-qubit gates, we have $\alpha = 1, \beta=0$.
\end{claim}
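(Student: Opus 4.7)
The plan is to compute the four columns of $T$ separately, identify the two free real parameters in the middle $2\times 2$ block with the local-unitary invariants $\alpha,\beta$ of Ref.~\cite{ware2023sharpphasetransitionlinear}, and then derive the parameter constraints from realizability of $T$ by a physical unitary. The preparatory fact is that single-qubit Haar averaging on each site projects the two-copy operator on that site onto $\mathrm{span}\{\mathtt{I},\mathtt{S}\}$, so the map $\bar{\rho}\mapsto \mathbb{E}_{V_1,V_2}[(V_1V_2)^{\otimes 2}(\mathcal{U}\otimes\mathcal{U})(\bar{\rho})(V_1V_2)^{\dag\otimes 2}]$ preserves $\mathrm{span}\{\mathtt{II},\mathtt{IS},\mathtt{SI},\mathtt{SS}\}$ and admits a $4\times 4$ representation. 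The first and fourth columns are immediate from symmetry: $\mathtt{II}\propto I_{16}$ is fixed by any unitary conjugation and by Haar averaging, so $Te_1=e_1$; and $\mathtt{SS}\propto S\otimes S$ commutes with $U^{\otimes 2}$ for any two-qubit unitary $U$ and already lies in the symmetric subspace, so $Te_4=e_4$. Trace preservation on $\bar{\rho}$ (every $\mathtt{T}_{\bm{x}}$ has trace $1$) then forces each column sum to equal $1$, fixing the top and bottom rows of the middle two columns once the middle $2\times 2$ block is determined.

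For the middle block, I would write each entry $T_{\bm{y},\bm{x}}$ as the coefficient of $\mathtt{T}_{\bm{y}}$ in the expansion of the averaged map applied to $\mathtt{T}_{\bm{x}}$, extracted via Hilbert--Schmidt pairing against the dual basis to $\{\mathtt{I},\mathtt{S}\}^{\otimes 2}$. The site-local Gram matrix is $G=\bigl(\begin{smallmatrix}1/4&1/4\\1/4&1\end{smallmatrix}\bigr)$ (from $\Tr\mathtt{I}^2=1/4$, $\Tr\mathtt{I}\mathtt{S}=1/4$, $\Tr\mathtt{S}^2=1$), and its inverse $G^{-1}$ is what produces the asymmetric ratio $4{:}1$ between rows $1$ and $4$. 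Each such pairing is invariant under left- and right-multiplication of $U$ by single-qubit unitaries, since the input $\mathtt{IS}$ is already symmetric under one-sided twirling and the output is twirled by the $V_1V_2$ average. Standard local-unitary parameterization of two-qubit gates (Refs.~\cite{zanardi2000entanglingpowerquantumevolutions,ware2023sharpphasetransitionlinear}) then leaves exactly two real invariants, which are identified with the entangling power $\alpha$ and the swapping power $\beta$. Matching definitions gives $T_{\mathtt{II},\mathtt{IS}}=4\alpha/5$, $T_{\mathtt{SS},\mathtt{IS}}=\alpha/5$, $T_{\mathtt{SI},\mathtt{IS}}=\beta$, and $T_{\mathtt{IS},\mathtt{IS}}=1-\alpha-\beta$ from column sums; the $\mathtt{SI}$ column is obtained by exchange of the two site labels.

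Finally, the parameter ranges $\alpha\in[0,10/9]$, $\beta\in[-\alpha/5,1-4\alpha/5]$, and the quadratic constraint $\beta+\alpha/5\leq(\beta+\alpha/2)^2$ come from the requirement that $\alpha,\beta$ arise from an actual two-qubit unitary: up to constants, they are squared magnitudes of Haar-invariant combinations of $|\Tr U|^2$ and $|\Tr(US)|^2$, and Cauchy--Schwarz together with the elementary bounds $|\Tr U|,|\Tr(US)|\leq 4$ carves out the stated region, with the quadratic inequality encoding the joint moment constraint; I would import this derivation from Ref.~\cite{ware2023sharpphasetransitionlinear}. The Haar case $\alpha=1,\beta=0$ is a direct Weingarten evaluation on the two-site symmetric block, reproducing the matrix $T_{\mathrm{Haar}}$ stated in Sec.~\ref{ss:unified}. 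I expect the main technical obstacle to be the careful bookkeeping of the Gram-matrix normalization when dualizing, since this is what pins down the precise $4\alpha/5$ versus $\alpha/5$ ratio and is where a factor-of-$2$ or factor-of-$4$ error is easiest to make; once that is handled, the remainder is standard local-unitary invariant theory.
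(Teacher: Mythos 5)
Your proposal is correct and follows essentially the same route as the paper, which likewise obtains the first and last columns from invariance of $\mathtt{II}$ and $\mathtt{SS}$, infers the remaining entries by expanding $\mathcal{U}\otimes\mathcal{U}$ applied to the reduced two-copy state in the basis $\{\mathtt{I},\mathtt{S}\}^{2}$ (the calculation ``analogous to'' the single-qubit Haar average), and defers the identification with the entangling and swapping powers and the allowed region for $(\alpha,\beta)$ to Ref.~\cite{ware2023sharpphasetransitionlinear}. Your explicit dual-basis/Gram-matrix bookkeeping and the column-sum argument from trace preservation are just a more detailed rendering of the same sketch the paper gives.
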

These parameters can be derived from a calculation analogous to \cref{eq_statmech_span_i_s}.
Suppose that the gate acts on qubits $i,j$.
We expand $\mathcal{U}( \Tr_{[n] \backslash \{i,j\}} \bar{\rho}^{t})$ in the basis\footnote{Note that it suffices to only consider the reduced density matrices supported on the qubits the gate acts on.} $\{\mathtt{I}, \mathtt{S}\}^2$ and use linearity to infer the parameters in $T$.

We will restrict our attention to transfer matrices $T$ that have nonnegative entries.
For the two-qubit gate transfer matrix in \cref{eq_twoqubit_transfermat}, this corresponds to requiring $\beta \geq 0$, $\alpha \leq 1-\beta$.
Recall that the case of Haar-random two-qubit gates is covered by setting $\beta = 0$, $\alpha = 1$ and is included in this analysis.

\subsection{Dynamics of the stat-mech model}
We now interpret the dynamics of the two-copy average state as a random walk over bitstrings.

First, we calculate the two-copy average state of any product initial state {$\overline{\rho}_0$}, where we average over a single layer of single-qubit gates on all qubits:
\begin{align}
 \bar{\rho} = \left(a\mathtt{I} + b\mathtt{S}\right)^{\otimes n},
\end{align}
where
$a + b = 1$ and $\Tr \left(a\mathtt{I} + b\mathtt{S}\right)\cdot S = 1$, giving us $\frac{a}{2} + 2b = 1$.
Therefore, $a= \frac{2}{3}, b=\frac{1}{3}$, meaning 
\begin{align}
\bar{\rho} =\left( \frac{2}{3} \mathtt{I} + \frac{1}{3} \mathtt{S} \right)^n.
\end{align}
This two-copy state $\bar{\rho}$ can be interpreted as a probability distribution on strings $\bm{x} \in \{0,1\}^n$ by taking $\Pr[\bm{x}] = \prod_{i=1}^n \left( \frac{2}{3}  \right)^{1-\bm{x}_i} \left( \frac{1}{3} \right)^{\bm{x}_i }$.
Denote this distribution $\mathcal{X}_0$.

\begin{claim}
There is a one-to-one correspondence between a two-copy average state $\bar{\rho}^t$ and its associated distribution $\mathcal{X}^t$ over $n$-bit strings.
\end{claim}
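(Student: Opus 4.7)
The plan is to reduce the claim to the elementary fact that the tensor products $\{\mathtt{T}_{\bm{x}}\}_{\bm{x}\in\{0,1\}^n}$ form a linearly independent set in the operator space of $n$ copies of a 2-copy system, so that the coefficients $c_{\bm{x}}$ appearing in the expansion $\bar{\rho}^t = \sum_{\bm{x}} c_{\bm{x}} \mathtt{T}_{\bm{x}}$ are uniquely determined. Once uniqueness is established, the map $\bar{\rho}^t \mapsto \mathcal{X}^t$ given by $\mathcal{X}^t(\bm{x}) = c_{\bm{x}}$ is manifestly invertible via $\mathcal{X}^t \mapsto \sum_{\bm{x}} \mathcal{X}^t(\bm{x}) \mathtt{T}_{\bm{x}}$, and the forward and inverse maps are well-defined by the consistency constraint $\sum_{\bm{x}} c_{\bm{x}} = 1$ together with $\Tr \mathtt{T}_{\bm{x}} = 1$ for every $\bm{x}$.

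First, I would check that $\{\mathtt{I}, \mathtt{S}\} = \{I/4, S/2\}$ is linearly independent as a two-element subset of operators on $\mathbb{C}^2\otimes\mathbb{C}^2$: any identity $a\mathtt{I} + b\mathtt{S} = 0$ applied to $\ket{01}$ yields $(a/4)\ket{01} + (b/2)\ket{10} = 0$, which forces $a = b = 0$. Next, because the tensor product of linearly independent sets is itself linearly independent, the $2^n$ operators $\mathtt{T}_{\bm{x}} = \bigotimes_{j=1}^n \mathtt{I}^{1-\bm{x}_j}\mathtt{S}^{\bm{x}_j}$ are linearly independent, and therefore constitute a basis for their span. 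The locally scrambling assumption, combined with the earlier computation showing that single-qubit Haar averaging projects tensor products into $\mathrm{span}\{I,S\}$ on each site, ensures that $\bar{\rho}^t$ actually lies in this span, so the expansion exists and its coefficients $c_{\bm{x}}$ are unique.

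Finally, I would verify the normalization compatibility: since $\Tr\mathtt{I} = \Tr(I/4) = 1$ and $\Tr\mathtt{S} = \Tr(S/2) = 1$, each $\mathtt{T}_{\bm{x}}$ has unit trace, and $\Tr\bar{\rho}^t = \mathbb{E}[(\Tr \rho^t)^2] = 1$ forces $\sum_{\bm{x}} c_{\bm{x}} = 1$. Thus the image of the map lies in the affine hyperplane of (quasi-)distributions on $\{0,1\}^n$, and the correspondence is a bijection onto this hyperplane. There is no real obstacle here; the statement is a linear-algebraic change of coordinates. The only subtlety worth flagging is that, without further positivity arguments on the channels appearing in $\mathcal{C}(\btheta)$, the $c_{\bm{x}}$ need not be nonnegative, which is why the authors qualify $\mathcal{X}^t$ as a quasi-probability distribution; this does not affect the bijection.
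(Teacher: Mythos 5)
Your proof is correct, but it takes a genuinely different route from the paper's. You establish the bijection by pure linear algebra: $\mathtt{I}$ and $\mathtt{S}$ are linearly independent on a single site, hence the $2^n$ tensor products $\mathtt{T}_{\bm{x}}$ form a basis of their span, hence the coefficients $c_{\bm{x}}$ in $\bar{\rho}^t = \sum_{\bm{x}} c_{\bm{x}}\mathtt{T}_{\bm{x}}$ are unique and the map $\bar{\rho}^t \leftrightarrow (c_{\bm{x}})_{\bm{x}}$ is invertible on the affine hyperplane $\sum_{\bm{x}} c_{\bm{x}} = 1$. This directly addresses the literal wording of the claim and is the cleaner argument for it. The paper instead argues dynamically: it writes $p^t = T_t \cdots T_2 T_1 p^0$ for the sequence of transfer matrices, observes that $p^0$ (the product initial state, $(\tfrac{2}{3},\tfrac{1}{3})^{\otimes n}$) has nonnegative entries, that the $T_j$ considered have nonnegative entries, and that stochasticity preserves $\sum_{\bm{x}} p^t_{\bm{x}} = 1$ inductively. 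What the paper's argument buys — and what your proof deliberately sets aside as the ``quasi-probability'' caveat — is the stronger conclusion that for the circuit class under consideration the coefficients are in fact \emph{nonnegative}, so that $\mathcal{X}^t$ is a genuine probability distribution and the evolution can be interpreted as a classical random walk over $\{\mathtt{I},\mathtt{S}\}^n$. That interpretation is what all the subsequent lower bounds (conditioning arguments, \cref{lem:swapmassafterff}, \cref{lem:swapmass_remain}) rely on, so while your bijection argument is sound and arguably more faithful to the statement as written, you should be aware that the surrounding development needs the positivity that the paper's transfer-matrix induction supplies and your argument does not.
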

This claim follows because we can write the vector of probabilities $p^t_{\bm{x}} := \Pr_{\mathcal{X}^t}[\bm{x}]$ as
\begin{align}
 p^t_{\bm{x}} = \left(T_t \ldots T_2 T_1 p^0\right)_{\bm{x}}
\end{align}
for a sequence of appropriate transfer matrices corresponding to that of the sequence of operations $\mathcal{U}_1,\ldots \mathcal{U}_t$.
The transfer matrices have nonnegative entries, and so does $\bm{p}^0$, implying $\bm{p}^t$ has nonnegative entries as well.
Furthermore,
\begin{align}
 \sum_{\bm{x}} {p}^t_{\bm{x}} = \sum_{\bm{x},\bm{y}} {(T_t)}_{\bm{x},\bm{y}} {p}^{t-1}_{\bm{y}}
 = \sum_{\bm{y}} {p}^{t-1}_{\bm{y}}
\end{align}
since the matrix $T_{t}$ is stochastic.
Proceeding inductively, we get $\sum_{\bm{x}} {p}^t_{\bm{x}} = \sum_{\bm{x}} {p}^0_{\bm{x}} = 1$.
Therefore, ${p}^t $ is a probability distribution.
From this claim, we can view the dynamics of the two-copy average state as a random walk over bitstring configurations $\bm{x} \in \{\mathtt{I}, \mathtt{S}\}^n$.
The statistical properties of the classical random walk entirely determine the dynamics of all second moment quantities in $\rho$, such as the behavior of the fidelity, the linear cross-entropy metric, and the collision probability.

For the rest of this section, we will illustrate the physics of the model by considering Haar-random two-qubit gates as an example.
However, recall that the analysis also holds for arbitrary two-qubit gates as long as they are followed by random single-qubit gates from a 2-design.
We will state a fact about the steady state of the two-copy average:
\begin{claim}[Convergence to global Haar average \cite{dalzell2022randomquantumcircuitsanticoncentrate}]
For a sufficiently well-connected architecture, in the limit $t\to \infty$, the two-copy average state $\bar{\rho}^t$ converges to
\begin{align}
 \lim_{t\to \infty} \bar{\rho}^t = \bar{\rho}_H = \frac{2^n}{2^n+1} \mathtt{II\ldots I} + \frac{1}{2^n+1} \mathtt{SS\ldots S}. \label{eq_haar_equilibrium}
\end{align}
\end{claim}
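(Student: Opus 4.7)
My approach is to identify $\bar{\rho}_H$ as the unique stationary state of the local transfer-matrix dynamics on an appropriately connected architecture, and then invoke standard Markov chain convergence. First I would verify that $\bar{\rho}_H$ actually \emph{is} a fixed point. The key observation is that $\bar{\rho}_H$ coincides with the global Haar twirl $\mathbb{E}_{U\sim\mathcal{H}}[U^{\otimes 2}(\cdot)(U^\dag)^{\otimes 2}]$ applied to any pure initial state $\rho_0^{\otimes 2}$, which by Schur--Weyl is the maximally mixed state on the symmetric subspace of $(\mathbb{C}^{2^n})^{\otimes 2}$, namely $\Pi_{\mathrm{sym}}/d_{\mathrm{sym}} = (I + S_n)/(2^n(2^n+1))$. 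Using $\mathtt{II\ldots I} = I/4^n$ and $\mathtt{SS\ldots S} = S_n/2^n$, this rearranges exactly to the stated mixture. Because $\bar{\rho}_H$ is invariant under conjugation by any local unitary (it lies in the commutant of $U^{\otimes 2}$ for all $U$), each local two-qubit transfer matrix $T$ fixes it.

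Next I would prove uniqueness of this fixed point inside the probability simplex over $\{\mathtt{I},\mathtt{S}\}^n$. Inspecting $T_{\mathrm{Haar}}$ in \eqref{eq_twoqubit_transfermat} with $\alpha=1,\beta=0$ shows that its only fixed points are the pure $\mathtt{II}$ and $\mathtt{SS}$ configurations, and that any mass on $\mathtt{IS}$ or $\mathtt{SI}$ is strictly redistributed onto $\{\mathtt{II},\mathtt{SS}\}$ at every application. For any stationary distribution $\mathcal{X}^*$ of the full circuit dynamics, the two-qubit marginal on every edge of the architectural graph $\mathsf{G}$ on which gates act must therefore be supported on $\{00,11\}$. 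Assuming $\mathsf{G}$ is connected and every edge is visited infinitely often by the gate schedule --- this is the formal content of ``sufficiently well-connected'' --- I would iterate this edge-by-edge constraint to conclude that $\mathcal{X}^*$ is supported only on the two uniform strings $\bm{0}^n$ and $\bm{1}^n$. The two remaining weights are then pinned down by the two linear conditions $\Tr \bar\rho^* = 1$ and $\Tr \bar\rho^* \cdot S_n = \Tr \rho_0^2 = 1$, which reproduces $a = 2^n/(2^n+1)$ and $b = 1/(2^n+1)$.

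Finally I would argue convergence. Since the composed dynamics restricted to the relevant invariant subspace is a stochastic map with a unique stationary distribution, and since the transfer matrices in \eqref{eq_twoqubit_transfermat} with $\alpha > 0$ strictly contract toward $\{\mathtt{II},\mathtt{SS}\}$ on each application, I would show that the second-largest eigenvalue of the composition over one ``sweep'' of the architecture is strictly less than $1$. Standard Perron--Frobenius / Markov chain convergence then yields $\bar\rho^t \to \bar\rho_H$ as $t \to \infty$, with an exponential rate controlled by this spectral gap.

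\textbf{Main obstacle.} The qualitative statement is a direct consequence of uniqueness of the stationary state; the genuine work is in making ``sufficiently well-connected'' quantitative and in ruling out spurious stationary distributions supported on non-uniform bitstrings with special structure (e.g.\ strings $\bm{x}$ whose support coincides with one side of a bipartition that is never bridged by a gate in the schedule). I would handle this by demanding that the gate schedule induces a strongly connected update graph on $[n]$, so that the only bipartition-respecting fixed configurations are the two uniform strings. Translating the local contraction of $T$ into a global spectral gap bound --- and hence into an explicit mixing time --- is the other place where quantitative care is needed, but this is well-trodden territory for random-circuit stat-mech analyses~\cite{dalzell2022randomquantumcircuitsanticoncentrate,hunter-jones2019unitarydesignsstatisticalmechanics}.
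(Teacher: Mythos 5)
The paper does not actually prove this claim: it is imported verbatim from Ref.~\cite{dalzell2022randomquantumcircuitsanticoncentrate} and stated as a known fact, with only the remark that $\bar{\rho}_H$ and $\mathtt{II\ldots I}$ are the two fixed points under random local unitaries. Your proposal therefore supplies an argument the paper omits, and its skeleton is sound: the Schur--Weyl identification of $\bar{\rho}_H$ with $\Pi_{\mathrm{sym}}/d_{\mathrm{sym}}=(I+S_n)/(2^n(2^n+1))$ checks out against the normalizations $\mathtt{I}=I/4$, $\mathtt{S}=S/2$; the observation that $T_{\mathrm{Haar}}$ annihilates all mass on $\mathtt{IS}$ and $\mathtt{SI}$ forces any stationary marginal on a gated edge onto $\{00,11\}$; connectivity then restricts the support to $\bm{0}^n$ and $\bm{1}^n$; and the two weights are correctly pinned by $\Tr\bar\rho^*=1$ together with the conserved purity $\Tr\bar\rho^t S_n=\Tr\rho_0^2=1$, which indeed gives $a=2^n/(2^n+1)$.

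The one point you should repair is the Perron--Frobenius framing in the last step. The chain on $\{\mathtt{I},\mathtt{S}\}^n$ is \emph{not} irreducible and does not have a unique stationary distribution: both $\bm{0}^n$ and $\bm{1}^n$ are absorbing (the first and last columns of $T$ in \cref{eq_twoqubit_transfermat} are $e_1$ and $e_4$), so every convex combination $\lambda\,\delta_{\bm{0}^n}+(1-\lambda)\,\delta_{\bm{1}^n}$ is stationary --- this is exactly the paper's remark that there are two fixed points. The correct statement is that of an absorbing Markov chain: under the well-connectedness hypothesis every non-uniform configuration is transient (the substochastic restriction of one sweep to the transient states has spectral radius strictly below $1$, which is where your local-contraction argument belongs), so all mass is eventually absorbed, and the split between the two absorbing states is determined not by ergodicity but by the conserved functional $\sum_{\bm{x}}p_{\bm{x}}\,2^{2\abs{\bm{x}}-n}=1$ that you already wrote down. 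You are implicitly using this, but as stated the appeal to ``a unique stationary distribution'' is false and should be replaced by the absorption-probability argument. With that correction the proof is complete and matches the route taken in the cited reference.
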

This is the same two-copy average as that of a global Haar-random state on $n$ qubits.
This fact gives barren plateaus for deep random quantum circuits on any architecture, since one can often show for certain architectures that the convergence is exponentially fast in the depth.
Morally speaking, for unitary random circuits, any second moment quantity such as the variance of the loss function, is close to that of global Haar-random unitaries for sufficient depth.
The global Haar-average state $\bar{\rho}_H$ and the identity state $\mathtt{II\ldots I}$ are the two fixed points under random local unitaries.
Note also that this claim is not strongly dependent on the ensemble of gates used---as long as the distribution over single-qubit gates forms a two-design (so that the stat-mech formalism applies) and the transfer matrices of the two-qubit gates have nonzero entangling power $\alpha \geq 0$, the fixed point is $\bar{\rho}_H$.

\begin{claim}[Noisy update rules]
In the presence of noise, which we model as local stochastic noise acting on every qubit after every layer of gates, there is an additional stat-mech rule.
We have~\cite{dalzell2021randomquantumcircuitstransform}
\begin{align}
T_\mathrm{noise} =
\begin{pmatrix}
1 & \gamma \\
0 & 1 - \gamma
\end{pmatrix}, \label{eq_unital}
\end{align}
where $\gamma$ is proportional to the average infidelity of the noise channel.
For nonunital single-qubit noise channels, we have \cite{ware2023sharpphasetransitionlinear}
\begin{align}
T_\mathrm{noise} =
\begin{pmatrix}
1 - \delta & \gamma \\
\delta & 1 - \gamma
\end{pmatrix}, \label{eq_nonunital}
\end{align}
for parameters $\delta \leq \gamma$ related to the nonunitality and the nonunitarity of the channel.
\end{claim}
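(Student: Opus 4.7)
My plan is to derive $T_\mathrm{noise}$ by the same recipe used for the two-qubit unitary transfer matrix in \cref{eq_twoqubit_transfermat}: for each basis element $\mathtt{T}\in\{\mathtt{I},\mathtt{S}\}$, expand $\mathtt{T}$ in the two-copy Pauli basis, apply $\mathcal{N}^{\otimes 2}$ componentwise, and project back onto $\mathrm{span}\{\mathtt{I},\mathtt{S}\}$ using the single-qubit Haar twirling identity~\cref{eq_singleqbthaar}. A key simplification is that the single-qubit Haar twirl annihilates all ``cross'' terms $Q\otimes Q'$ with $\mathrm{Tr}(Q)=\mathrm{Tr}(Q')=\mathrm{Tr}(QQ')=0$, so only the diagonal pieces $Q\otimes Q$ survive and one ends up with combinations of $|\vec t|^2$ and $\sum_{P\ne I}|\vec m_P|^2$, where $\vec t$ and the columns $\vec m_P$ are the standard Bloch parameterization $\mathcal{N}(I)=I+\vec t\cdot\vec\sigma$ and $\mathcal{N}(P)=\vec m_P\cdot\vec\sigma$ for $P\ne I$.

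\textbf{Unital case.} Unitality gives $\mathcal{N}(I/2)=I/2$, hence $\mathcal{N}^{\otimes 2}(\mathtt{I})=\mathtt{I}$, so the first column of $T_\mathrm{noise}$ is $(1,0)^\top$. For the $\mathtt{S}$ column, substitute $S=\tfrac12\sum_P P\otimes P$ to get
\begin{equation*}
\mathcal{N}^{\otimes 2}(S/2)=\tfrac14\sum_P \mathcal{N}(P)\otimes\mathcal{N}(P),
\end{equation*}
and evaluate each term by twirling with \cref{eq_singleqbthaar}: the $P=I$ term gives $I$, and each $P\ne I$ term contributes $|\vec m_P|^2(-\tfrac13 I + \tfrac23 S)$. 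Rewriting in the normalized basis yields $\overline{\mathcal{N}^{\otimes 2}(\mathtt{S})}=\gamma\,\mathtt{I}+(1-\gamma)\,\mathtt{S}$ with $1-\gamma=\tfrac13\sum_{P\ne I}|\vec m_P|^2$. The quantity $1-\gamma$ is, up to normalization, the unitarity of $\mathcal{N}$, which is proportional to $1-r(\mathcal{N})$ to leading order in small noise, matching the claimed interpretation of $\gamma$.

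\textbf{Nonunital case.} The only change is that now $\mathcal{N}(I/2)=I/2+\vec t\cdot\vec\sigma/2$, so the $\mathtt{I}$-column acquires a nontrivial entry. Twirling $(\mathcal{N}(I))^{\otimes 2}=(I+\vec t\cdot\vec\sigma)^{\otimes 2}$ via \cref{eq_singleqbthaar} (using $\mathrm{Tr}A=2$, $\mathrm{Tr}A^2=2+2|\vec t|^2$) gives $\tfrac{3-|\vec t|^2}{3}I+\tfrac{2|\vec t|^2}{3}S$, i.e.\ $\overline{\mathcal{N}^{\otimes 2}(\mathtt{I})}=(1-\delta)\,\mathtt{I}+\delta\,\mathtt{S}$ with $\delta=|\vec t|^2/3$. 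The $P=I$ term in the $\mathtt{S}$-column computation similarly acquires a $|\vec t|^2$-contribution, leading to $\gamma=1-(|\vec t|^2+\sum_{P\ne I}|\vec m_P|^2)/3$, and the column sums equal $1$ so $T_\mathrm{noise}$ is stochastic. The side condition $\delta\le\gamma$ rearranges to $2|\vec t|^2+\sum_{P\ne I}|\vec m_P|^2\le 3$, which is a consequence of the Fujiwara--Algoet complete-positivity inequalities on the Bloch parameters $(\vec t,M)$ of any single-qubit CPTP map.

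\textbf{Main obstacle.} The whole calculation is linear algebra on a $2\times 2$ matrix, so the only real work is careful bookkeeping of the single-qubit Pauli expansion and the vanishing of the cross terms $Q\otimes Q'$ under the Haar twirl; a minor point is recognizing that the $\vec t\cdot\vec\sigma$ piece of $\mathcal{N}(I)$ contributes to \emph{both} $\delta$ and $\gamma$, which is what couples the two parameters through the CPTP inequality. The more conceptual task is relating the algebraic quantities $\delta=|\vec t|^2/3$ and $1-\gamma=(|\vec t|^2+\sum_{P\ne I}|\vec m_P|^2)/3$ to the usual channel-theoretic notions of nonunitality, unitarity, and average infidelity, so that the claim's informal interpretation of $\gamma$ and $\delta$ is the one an experimentalist would naturally measure.
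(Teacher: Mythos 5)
Your derivation is correct and follows exactly the recipe the paper itself prescribes for deriving transfer matrices (expand the image of each basis element $\mathtt{I},\mathtt{S}$ in the two-copy Pauli basis and project back with the single-qubit twirl of \cref{eq_singleqbthaar}, analogously to \cref{eq_statmech_span_i_s}); the paper does not spell this particular computation out but defers to Refs.~\cite{dalzell2021randomquantumcircuitstransform,ware2023sharpphasetransitionlinear}, and your column entries $\delta=\lvert\vec t\rvert^2/3$ and $1-\gamma=\bigl(\lvert\vec t\rvert^2+\sum_{P\neq I}\lvert\vec m_P\rvert^2\bigr)/3$ reproduce the stated matrices, including the stochasticity of the columns. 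The one step deserving slightly more care is the closing inequality $\delta\leq\gamma$, i.e.\ $2\lvert\vec t\rvert^2+\sum_{P\neq I}\lvert\vec m_P\rvert^2\leq 3$: it does hold for every single-qubit CPTP map (for instance because the left-hand side is a convex quadratic in the channel and is at most $3$ on the rank-$\leq 2$ extreme points), but the Fujiwara--Algoet conditions as usually stated apply to the canonical diagonal form, so a one-line justification covering general $(\vec t,M)$ would make the argument airtight.
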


\subsection{Connection between the stat-mech model and barren plateaus}
Assume the loss function is given by $L(\btheta) = \Tr \rho(\btheta) H$, where $H = \sum_{i} h_i$ is a Hermitian observable and the individual terms $h_i$ of $H$, are operators in $\mathrm{Herm}(n)$ of locality at most $k$, i.e., can be written as $h_i = \tilde{h}_{A}\otimes I_{A^c}$ for a subset $A$ of size at most $k$.
We can assume without loss of generality that $\Tr H = 0$.
We are usually interested in terms of constant locality $k=O(1)$ and constant norm, since for these observables, we can estimate $\Tr \rho h_i$ to additive error $\epsilon$ using $O(1/\epsilon^2)$ copies of $\rho$ with high probability.

When studying barren plateaus, we are interested in the typical variance of the loss function when initializing the parameters $\btheta$ randomly from a distribution $\mathcal{D}_p$.

\begin{lemma}[Variance in terms of marginal distribution] \label{lem_ham_variance}
Consider a $k$-local Hamiltonian $H = \sum_{\bm{\alpha} \in \mathbb{P}_n} c_{\bm{\alpha}} \bm{\alpha}$
and a parameterized dynamic quantum circuit $\mathcal{C}$ taking in parameters $\btheta \in \Theta$.
For a probability distribution $\mathcal{D}_p$ over $\Theta$ the set of circuit parameters such that each component $\btheta_i$ of $\btheta$ controls the parameter of a single operation in $\mathcal{C}$ and is invariant under single-qubit random gates from a 2-design inserted after every gate, the variance of the loss function $L = \Tr \rho(\theta) H $ is given by
\begin{align}
 \Var_{\btheta \sim \mathcal{D}_p}{L} = \sum_{\bm{\alpha}} c_{\bm{\alpha}}^2 \Pr_{\mathcal{X}_d}[\bm{x}_{\supp (\bm{\alpha})} = {11\ldots 1}_{\supp (\bm{\alpha})}].
\end{align}
\end{lemma}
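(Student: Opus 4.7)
The plan is to reduce the variance calculation to a trace on the two-copy average state, and then use the symmetric-subspace decomposition of that state together with straightforward Pauli trace identities to isolate the marginal probability.

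First, I would handle the mean. Since $\mathcal{D}_p$ is invariant under inserting single-qubit gates from a 2-design (hence also a 1-design) after every gate, the one-copy average state on any qubit is maximally mixed, so $\mathbb{E}_{\btheta}[\rho(\btheta)]=I/2^n$. Under the assumption $\Tr H=0$ from earlier in the text this forces $\mathbb{E}_{\btheta}[L]=0$, so $\Var_{\btheta}[L]=\mathbb{E}_{\btheta}[L^2]$. Next I would apply the replica trick to rewrite
\begin{equation}
\mathbb{E}_{\btheta}[L^2]=\Tr\!\left(\bar\rho\, H^{\otimes 2}\right)=\sum_{\bm\alpha,\bm\beta}c_{\bm\alpha}c_{\bm\beta}\,\Tr\!\left(\bar\rho\,\bm\alpha\otimes\bm\beta\right),
\end{equation}
where $\bar\rho=\mathbb{E}_{\btheta}[\rho(\btheta)^{\otimes 2}]$. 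Because the distribution is locally scrambling, the same argument that gives \eqref{eq_statmech_span_i_s} applied qubitwise projects $\bar\rho$ into the symmetric subspace $\mathrm{span}\{\mathtt{I},\mathtt{S}\}^{\otimes n}$, so I may write $\bar\rho=\sum_{\bm x\in\{0,1\}^n}c_{\bm x}\,\mathtt{T}_{\bm x}$ with $\{c_{\bm x}\}=\mathcal{X}_d$.

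The next step is the site-by-site Pauli trace computation. Using the trace-$1$ normalization together with the single-qubit identities $\Tr(\sigma_a\sigma_b)=2\delta_{ab}$ and $\Tr(\sigma_a)=2\delta_{a,I}$, I obtain for each site $j$
\begin{equation}
\Tr\!\left(\mathtt{T}_{\bm x_j}\,\alpha_j\otimes\beta_j\right)=
\begin{cases}
\delta_{\alpha_j,I}\delta_{\beta_j,I} & \text{if }\bm x_j=0,\\
\delta_{\alpha_j,\beta_j} & \text{if }\bm x_j=1,
\end{cases}
\end{equation}
which are the $n=1$ versions of the relations already noted around \eqref{eq_product_paulis}. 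Taking the tensor product over all sites shows that $\Tr(\mathtt{T}_{\bm x}\,\bm\alpha\otimes\bm\beta)$ vanishes unless $\bm\alpha=\bm\beta$, and in that case equals $\prod_{j\in\supp(\bm\alpha)}[\bm x_j=1]$, i.e., the indicator of $\bm x_{\supp(\bm\alpha)}=\mathbf{1}_{\supp(\bm\alpha)}$.

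Substituting back collapses the double Pauli sum to a single one and identifies the remaining weight as a marginal of $\mathcal{X}_d$:
\begin{equation}
\mathbb{E}_{\btheta}[L^2]=\sum_{\bm\alpha}c_{\bm\alpha}^2\sum_{\bm x:\bm x_{\supp(\bm\alpha)}=\mathbf{1}}c_{\bm x}=\sum_{\bm\alpha}c_{\bm\alpha}^2\,\Pr_{\mathcal{X}_d}\!\bigl[\bm x_{\supp(\bm\alpha)}=\mathbf{1}_{\supp(\bm\alpha)}\bigr],
\end{equation}
which together with $\mathbb{E}_{\btheta}[L]=0$ yields the claim. The only delicate step is justifying that the locally scrambling assumption alone suffices to place $\bar\rho$ in the $\{\mathtt{I},\mathtt{S}\}^{\otimes n}$ span without requiring the two-qubit gates themselves to be drawn from a $2$-design; I would handle this by formally inserting an auxiliary layer of single-qubit Haar gates at the output of the circuit (which by invariance does not change $\mathcal{D}_p$) and invoking \eqref{eq_singleqbthaar} qubit by qubit. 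Everything after that is bookkeeping.
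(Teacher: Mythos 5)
Your proof is correct and follows essentially the same route as the paper's: replica trick, a virtual final layer of Haar single-qubit gates to place $\bar\rho$ in $\mathrm{span}\{\mathtt{I},\mathtt{S}\}^{\otimes n}$, vanishing of the cross terms $\bm\alpha\neq\bm\beta$, and identification of the surviving trace with a marginal of $\mathcal{X}_d$. The only difference is cosmetic: you evaluate $\Tr(\mathtt{T}_{\bm x}\,\bm\alpha\otimes\bm\beta)$ site by site in the Schr\"odinger picture, whereas the paper first Heisenberg-averages the observable to $(4/3)^{\abs{\bm\alpha}}(\mathtt{S}-\mathtt{I})^{\supp(\bm\alpha)}$ and then applies the orthogonality relations --- the same bookkeeping in dual form.
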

In the RHS, the quantity can be understood as the probability, when drawing a bit string $\bm{x}$ from the distribution over bitstrings $\mathcal{X}_d$ at time $d$, that all of the entries of the $\bm{x}$ corresponding to qubits on which ${\bm{\alpha}}$ is supported are equal to 1.

\begin{proof}
We have 
\begin{align}
 \mathbb{E}_{\btheta \sim \mathcal{D}_p} \Var{L} &= \mathbb{E}_{\btheta \sim \mathcal{D}_p} \left[  L(\btheta)^2 \right] - \left( \mathbb{E}_{\btheta \sim \mathcal{D}_p} \left[  L(\btheta) \right] \right)^2 \\
 &= \mathbb{E}_{\mathcal{B}}\left[  (\Tr \rho H)^2 \right] -  \left( \mathbb{E}_{\mathcal{B}}\left[\Tr \rho H\right] \right)^2.
\end{align}
We can rewrite the Hamiltonian in terms of a traceless part $H'$ and a part proportional to the identity $\lambda I$.
The latter portion does not contribute to the variance, so we focus on the traceless portion (and denote it $H$ for the rest of the proof).
The second term yields
\begin{align}
\left( \mathbb{E}_{\mathcal{B}}\left[\Tr \rho H\right] \right)^2 &= \left( \Tr \mathbb{E}_{\mathcal{B}}\left[ \rho H\right] \right)^2 \\
& = \left( \Tr \mathbb{E}_{\mathcal{B}}\left[ \rho \right] H \right)^2 \\
& = \left( \Tr \frac{H}{2^n} \right)^2 \\
& = 0
\end{align}
for any distribution with a final layer of single-qubit gates from a one-design.
We now focus on the quantity $ \mathbb{E}_{\mathcal{B}} \left[(\Tr \rho H)^2\right] $.
This quantity can be expressed in terms of the average two-copy state
\begin{align}
\mathbb{E}_{\mathcal{B}}\left[(\Tr \rho H)^2 \right] = \mathbb{E}_{\mathcal{B}}\left[\Tr \rho^{\otimes 2} \cdot H^{\otimes 2} \right] = \Tr \bar{\rho} \cdot H^{\otimes 2}
\end{align}
as seen before.
Since the two-qubit gates are from a locally scrambling ensemble, we assume that the single-qubit gates are independently drawn according to the Haar measure $\mathcal{H}$.
The two calculations coincide because we are computing a second moment quantity over the ensemble.
We find that it is often convenient to insert an additional (virtual) layer of random single-qubit unitaries $V_n\sim \mathcal{H}$ at the end of the circuit and Heisenberg-evolve the output observable:

\begin{align}
& \mathbb{E}_{\mathcal{B}}\left[\Tr \rho^{\otimes 2} \cdot H^{\otimes 2} \right] \nonumber  \\
& = \mathbb{E}_{\mathcal{B}} \mathbb{E}_{V_n \sim \mathcal{H}} \left[\Tr \rho^{\otimes 2} (V_n\otimes V_n)^\dag H^{\otimes 2} (V_n \otimes V_n) \right]. 
\end{align}
We now perform the Haar-average over each single-qubit gate $V_{n_k}$ in $V_n=\bigotimes_k V_{n_k}$.
Recall from \cref{eq_singleqbtpauli} that a single-qubit non-identity Pauli averages to:
\begin{align}
\mathbb{E}_{V_{nk} \sim \mathcal{H}} [{V_{nk}^{\otimes 2}}^\dag \bm{\alpha}^{\otimes 2} {V_{nk}^{\otimes 2}} ] &=
\frac{4}{3} (\mathtt{S} - \mathtt{I}).
\end{align} 
When generalizing this to the $n$-qubit Haar-average over $V_n$, we obtain
\begin{align}
\mathbb{E}_{V_{n} \sim \mathcal{H}^n} \left[ (V_n\otimes V_n)^\dag \bm{\alpha}^{\otimes 2} (V_n \otimes V_n) \right] = \nonumber \\
= \left(\frac{4}{3}\right)^{\abs{\bm{\alpha}}} \mathtt{(S-I)}^{\supp(\bm{\alpha})},
\end{align}
where we have defined $O^B$ for an operator $O$ and a set $B$ as $\otimes_{j \in B} O_j \otimes I_{j^c}$.
Continuing to calculate $\mathbb{E}_{\mathcal{B}}\left[\Tr \rho^{\otimes 2} H^{\otimes 2} \right]$, we have
\begin{align}
\mathbb{E}_{\mathcal{B}}\left[\Tr \rho^{\otimes 2} H^{\otimes 2} \right] &= \sum_{\bm{\alpha}, \bm{\beta} \in \mathbb{P}_n } c_{\bm{\alpha}} c_{\bm{\beta}} \mathbb{E}_{\mathcal{B}}\left[\Tr \rho^{\otimes 2} \cdot \bm{\alpha} \otimes \bm{\beta} \right] \\
& = \sum_{\bm{\alpha}} c_{\bm{\alpha}}^2 \mathbb{E}_{\mathcal{B}}\left[\Tr \rho^{\otimes 2} \cdot \bm{\alpha} \otimes \bm{\alpha} \right],
\end{align}
where we have used the relations in \cref{eq_product_paulis} that yield $\mathbb{E}_{V_n \sim \mathcal{H}}\left[\bm{\alpha} \otimes \bm{\beta} \right] = 0$.
Therefore
\begin{align}
\mathbb{E}_{\mathcal{B}}\left[\Tr \rho^{\otimes 2} H^{\otimes 2} \right] = \sum_{\bm{\alpha}}  \left(\frac{4}{3}\right)^{\abs{\bm{\alpha}}} c_{\bm{\alpha}}^2 \Tr \left[\bar{\rho} \mathtt{(S-I)}^{\supp(\bm{\alpha})}\right].
 \label{eq_var_general}
\end{align}
Now consider a particular term in \cref{eq_var_general}.
We obtain
\begin{align}
\mathbb{E}_{\mathcal{B}}\left[\Tr \rho^{\otimes 2} \bm{\alpha}^{\otimes 2} \right] &= \left(\frac{4}{3}\right)^{\abs{\bm{\alpha}}} \Tr \left[ \bar{\rho} \cdot \mathtt{(S-I)}^{\supp(\bm{\alpha})} \right] \\
& = \left(\frac{4}{3}\right)^{\abs{\bm{\alpha}}} \Tr_{A} \Tr_{A^c}\left[ \bar{\rho} \cdot \mathtt{(S-I)}^{\supp(\bm{\alpha})} \right],
\end{align}
where we have split the trace into one over the support of $\bm{\alpha}$, $A$, and another on its complement $A^c$.
Let us perform the trace over the complement first.
Let 
\begin{align}
\Tr_{A^c} \bar{\rho} = \sum_{\bm{x'} \in \{0,1\}^{n- \abs{\bm{\alpha}}} } p_{\bm{x'}}  \mathtt{T}^{\bm{x'}},
\end{align}
where the coefficients $p_{\bm{x'}}$ for any substring $\bm{x'} \in \{0,1\}^{\abs{A}}$ are obtained by marginalizing the probability distribution ${p}_{\bm{x}}$ over the bits outside $A$:
\begin{align}
 p_{\bm{x'}} = \sum_{\bm{x}: \bm{x}_{A} = \bm{x'} } p_{\bm{x}}.
\end{align}
Now, from the relations
\begin{align}
    \Tr \mathtt{I \cdot (S-I)} = 0; \quad \Tr \mathtt{S \cdot (S-I)} = \frac{3}{4}, \label{eq_orthogonality}
\end{align}
we conclude that each term ${\bm{x}}$ in $\Tr_{A^c} \bar{\rho}$ when multiplied with $\mathtt{(S-I)}$ gives a trace of 0 unless ${\bm{x}}= 11\ldots 1$, in which case it yields a trace of 1.
This yields a result relating the variance of a term in the loss function to the probability of ending with the all $\mathtt{SS\ldots S}_{\supp(\bm{\alpha})}$ string on the support of that term:
\begin{align}
\mathbb{E}_{\mathcal{B}}\left[\Tr \rho^{\otimes 2} \bm{\alpha}^{\otimes 2} \right] = p_{\bm{x'} =  11\ldots 1_{\abs{A}}} \,.
\end{align}
Putting everything together, we get
\begin{align}
 \Var_{\btheta \sim \mathcal{D}_p}{L} = \mathbb{E}_{\mathcal{B}}\left[\Tr \rho^{\otimes 2} H^{\otimes 2} \right] = \sum_{\bm{\alpha}} c_{\bm{\alpha}}^2 g_{\bm{x'} = {1\ldots1}_{\abs{\supp(\bm{\alpha})}}} \,.
\end{align}
\end{proof}

Note that this lemma is similar to that of Napp \cite{napp2022quantifyingbarrenplateauphenomenon} except for the fact that we assume arbitrary distributions over 2-qubit gates provided the single-qubit gates are randomly chosen from a 2-design; whereas Napp assumes that the 2-qubit gates are drawn Haar randomly.
It is also possible to generalize this result to qudits of dimension $q$, although for simplicity, we fix $q=2$ in this work.

\subsection{Stat-mech mapping rules for circuits with feedforward operations}
In this section, we derive the stat-mech rule for measurements with feedforward operations.
First, we recall our treatment of dynamic operations in the purified picture.
As an example, the simplest nontrivial dynamic operation is the following:
\begin{align}
\begin{quantikz}[row sep={1cm,between origins},transparent]
\qw & \meter{} & \setwiretype{c} & \gate{\text{If 1, } U_1} & \setwiretype{q}
\end{quantikz} \label{eq_basic_ff}
\end{align}
Here, the state is measured and is followed by a conditional gate $U_1$ if the measurement result was $1$ and $I$ otherwise.

It is often convenient to view these operations in a larger Hilbert space by introducing an ancilla to record the intermediate measurement results and converting the conditional statements into gates that condition on the ancilla.
We call this the purified picture.
Concretely, the purified picture of the above feedforward operation is:
\begin{align}
\begin{quantikz}[row sep={1cm,between origins},transparent]
\lstick{$\ket{0}$}  & \targ{}   & \ctrl{1}  & \meter{} & \ground{} \setwiretype{c} \\
\lstick{$\ket{\psi_0}$}   & \ctrl{-1} & \gate{U_1} &  & &
\end{quantikz},
\label{eq_feedforward_purified}
\end{align}
where we trace out the ancilla after measurement.
The ancilla measurement reveals the result of the mid-circuit measurement, but in this work we will consider observables of dynamic circuits where we average over the mid-circuit measurement results (hence, strictly speaking, the measurement on the ancilla is superfluous here).
We fix
\begin{align}
U_1 = \begin{pmatrix}
\cos {\varphi}e^{-i\phi} & -i \sin \varphi \\
- i \sin \varphi & \cos{\varphi} e^{i \phi}
\end{pmatrix}. 
\end{align}

\begin{lemma}[Transfer matrix for feedforward operation]
For the single-qubit feedforward gadget shown in \cref{eq_basic_ff}, the transfer matrix is
\begin{align}
 T_F =
 \begin{pmatrix}
 1 - \frac{\sin^2 \varphi}{3} & \frac{2}{3} \\
 \frac{\sin^2\varphi}{3} & \frac{1}{3}
 \end{pmatrix}. \label{eq_ff_transfermat}
\end{align}
\end{lemma}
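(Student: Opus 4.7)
The plan is to determine the two columns of $T_F$ by tracking how ``$\mathcal{F}\otimes\mathcal{F}$ followed by a single-qubit Haar average on both copies'' maps the basis elements $\mathtt{I}$ and $\mathtt{S}$ individually. The channel $\mathcal{F}$ has two Kraus operators, $K_0 = \ket{0}\bra{0}$ for the outcome-$0$ branch and $K_1 = U_1\ket{1}\bra{1}$ for the outcome-$1$ branch, so $\mathcal{F}(\rho) = K_0 \rho K_0^\dagger + K_1 \rho K_1^\dagger$.

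The central simplification is that one need not perform the single-qubit Haar average explicitly. Writing the output as $a'\mathtt{I} + b'\mathtt{S}$, the two coefficients are pinned down by the pair of linear functionals $\Tr[\,\cdot\,]$ and $\Tr[\,\cdot\, S\,]$, which evaluate to $a' + b'$ and $a'/2 + 2b'$ on the symmetric subspace (using $\Tr\mathtt{I} = \Tr\mathtt{S} = 1$, $\Tr[\mathtt{I}\,S] = 1/2$, $\Tr[\mathtt{S}\,S] = 2$). Both functionals commute with the Haar projection: the trace trivially, and $\Tr[\,\cdot\, S\,]$ because $S$ itself commutes with $V\otimes V$. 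So it suffices to evaluate these two traces on $(\mathcal{F}\otimes\mathcal{F})(\bar{\rho})$ for each $\bar{\rho}\in\{\mathtt{I},\mathtt{S}\}$.

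For $\bar{\rho} = \mathtt{I} = (I/2)\otimes(I/2)$ the output factorizes as $\rho_1 \otimes \rho_1$ with $\rho_1 := \mathcal{F}(I/2) = \tfrac{1}{2}\ket{0}\bra{0} + \tfrac{1}{2}U_1\ket{1}\bra{1}U_1^\dagger$. Trace preservation gives $a'+b' = 1$, and the standard identity $\Tr[(A\otimes B)\,S] = \Tr[AB]$ reduces the $S$-overlap to $\Tr[\rho_1^2]$; the only nontrivial cross terms in $\rho_1^2$ contribute $\tfrac{1}{2}\abs{\bra{0}U_1\ket{1}}^2 = \tfrac{1}{2}\sin^2\varphi$, so $a'/2 + 2b' = \tfrac{1}{2} + \tfrac{1}{2}\sin^2\varphi$. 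Solving yields $b' = \sin^2\varphi/3$ and $a' = 1 - \sin^2\varphi/3$, matching the first column. For $\bar{\rho} = \mathtt{S} = S/2$, the swap identity $(A\otimes B)\,S = S\,(B\otimes A)$ together with $S^2 = I$ collapses $\Tr[(\mathcal{F}\otimes\mathcal{F})(S/2)\cdot S]$ to $\tfrac{1}{2}\sum_{l,m}\abs{\Tr[K_l K_m^\dagger]}^2$. The orthogonality $\langle 0|1\rangle = 0$ kills the off-diagonal Kraus pairs, leaving only $(l,m)\in\{(0,0),(1,1)\}$, each contributing $1$, so the overlap evaluates to $1$. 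Combined with trace preservation this gives $a' = 2/3$ and $b' = 1/3$, matching the second column.

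No step presents a serious obstacle. The main care required is the swap bookkeeping in the $\mathtt{S}$ column---correctly pushing $S$ through Kraus tensor products via $(A\otimes B)\,S = S\,(B\otimes A)$ and then applying $S^2 = I$ to close the trace; once that manipulation is right, everything reduces to elementary inner products of $U_1$ matrix elements.
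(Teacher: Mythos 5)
Your proposal is correct, and each numerical step checks out: the Kraus operators $K_0=\ket{0}\bra{0}$, $K_1=U_1\ket{1}\bra{1}$ do represent the feedforward channel, the cross term $\abs{\bra{0}U_1\ket{1}}^2=\sin^2\varphi$ gives $\Tr\rho_1^2=\tfrac12+\tfrac12\sin^2\varphi$ and hence the first column, and the swap-trick evaluation $\tfrac12\sum_{l,m}\abs{\Tr K_lK_m^\dagger}^2=1$ gives the second. The underlying strategy is the same as the paper's---both pin down the output coefficients by evaluating the two functionals $\Tr[\cdot]$ and $\Tr[\cdot\,S]$, which are invariant under the single-qubit Haar twirl---but the execution differs meaningfully. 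The paper works in the purified picture: it introduces an ancilla per copy, explicitly evolves the four-qubit two-copy state through the CNOTs and controlled unitaries for a general input $aI+bS$, traces out the ancillas, and only then reads off the traces. You instead work directly with the channel's Kraus representation and collapse everything with the identities $\Tr[(A\otimes B)S]=\Tr[AB]$ and $(A\otimes B)S=S(B\otimes A)$. Your route is shorter, avoids the lengthy intermediate state, and the formula $\tfrac12\sum_{l,m}\abs{\Tr K_lK_m^\dagger}^2$ for the $\mathtt{S}$-column overlap generalizes immediately to any single-qubit channel; what the paper's purified derivation buys is a more direct correspondence with the hardware-level description of the gadget used elsewhere in the text.
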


\begin{proof}
We will work in the purified picture for simplicity, where the feedforward gadget is given by \cref{eq_feedforward_purified}.
We need to calculate
\begin{align}
\bar{\rho}^{t+1} = \mathbb{E}_{V_1}[(V_1 \otimes V_1) (\mathcal{F} \otimes \mathcal{F})(\bar{\rho}^t) (V_1 \otimes V_1)^\dag]
\end{align}
for Haar-random single-qubit gates $V_1$ and the feedforward channel $\mathcal{F}$.
The tensor network diagram for this calculation is:
\begin{widetext}
\begin{align}
\mathbb{E}_{V_1} \Tr_{1,2}
\begin{quantikz}[row sep={1cm,between origins},transparent]
\slice{step 3} & \slice{step 2}& \ctrl{2} & \slice{step 1} & \targ{}  &  \slice{step 0} & \gate{\ketbra{0}} \slice{step 0}  &  & \targ{} \slice{step 1} &   & \ctrl{2} \slice{step 2}& \qw &  \\
&& & \ctrl{2} &    & \targ{}  &    \gate{\ketbra{0}} & \targ{} &  & \ctrl{2} & & \qw & \\
& \gate{V_1} & \gate{U} & &\ctrl{-2}  & & \gate[2]{\bar{\rho}^t} & & \ctrl{-2} &  & \gate{U^\dag} & \gate{V_1^\dag} & \\
& \gate{V_1} & & \gate{U} &  & \ctrl{-2} &  &  \ctrl{-2} &  & \gate{U^\dag} &  & \gate{V_1^\dag} \slice{step 3} & 
\end{quantikz}.
\end{align}
\end{widetext}
Here time flows from the center outwards on either side, as denoted by the time slices.

Suppose that the initial 2-copy average state is $\bar{\rho}^t= aI +bS$, where $I$ and $S$ are now the unnormalised identity and SWAP operators.
Then, the state after step 1 when we apply the CNOTs from lines $3\to 1, 4\to 2$ is
\begin{align}
a \left( \ketbra{00} \otimes \ketbra{00} + \ketbra{01} \otimes \ketbra{01} + \right.\nonumber \\
\left.\ketbra{10} \otimes \ketbra{10} + \ketbra{11} \otimes \ketbra{11}\right) \nonumber \\
+ b  \left( \ketbra{00} \otimes \ketbra{00} + \ketbra{01}{10} \otimes \ketbra{01}{10} + \right.\nonumber \\
\left. \ketbra{10}{01} \otimes \ketbra{10}{01} + \ketbra{11} \otimes \ketbra{11}\right).
\end{align}
Next, applying the controlled unitary on both copies yields:
\begin{widetext}
\begin{align}
&(a+b) \ketbra{00} \otimes \ketbra{00} + (a+b) \ketbra{11} \otimes (U\otimes U) \ketbra{11} (U\otimes U)^\dag \nonumber \\
&+ a \ketbra{01} \otimes \left(\sin^2 \varphi \ketbra{00} - ie^{-i\phi}\sin \varphi \cos \varphi \ketbra{00}{01} + i e^{i\phi}\sin \varphi \cos \varphi \ketbra{01}{00} + \cos^2\varphi \ketbra{01} \right) \nonumber \\
&+ a \ketbra{10} \otimes \left(\sin^2 \varphi \ketbra{00} - ie^{-i\phi}\sin \varphi \cos \varphi \ketbra{00}{10} + i e^{i\phi}\sin \varphi \cos \varphi \ketbra{10}{00} + \cos^2\varphi \ketbra{10} \right) \nonumber \\
&+ b \ketbra{01}{10} \otimes (\ldots) + b \ketbra{10}{01} \otimes (\ldots),
\end{align}
\end{widetext}
where we have omitted the last two terms proportional to $b$ because they will vanish when we take the partial trace over the first two qubits.
After taking the partial trace, we will have

\begin{widetext}
\begin{align}
& (a+b) \ketbra{00} + (a+b) (U\otimes U) \ketbra{11} (U\otimes U)^\dag \nonumber \\
& + a \left(\sin^2 \varphi \ketbra{00} - ie^{-i\phi}\sin \varphi \cos \varphi \ketbra{00}{01} + i e^{i\phi}\sin \varphi \cos \varphi \ketbra{01}{00} + \cos^2\varphi \ketbra{01} \right) \nonumber \\
& + a \left(\sin^2 \varphi \ketbra{00} - ie^{-i\phi}\sin \varphi \cos \varphi \ketbra{00}{10} + i e^{i\phi}\sin \varphi \cos \varphi \ketbra{10}{00} + \cos^2\varphi \ketbra{10} \right)
\end{align}    
\end{widetext}
We now perform the average over $V_1$.
This gives us $\bar{\rho}^{t+1} = \alpha I + \beta S$, where the coefficients can be inferred from
\begin{align}
\Tr \bar{\rho}^{t+1} &= 4\alpha + 2\beta = 4a + 2b\\
\Tr \bar{\rho}^{t+1} S &= 2\alpha + 4\beta = 2a + 2b + 2a \sin^2 \varphi \\
\implies \alpha &= \frac{a(3-\sin^2\varphi) + b}{3}, \quad \beta = \frac{2a \sin^2\varphi + b}{3}.
\end{align}
This, in turn, shows that a measurement and feedforward operation leads to the rule for the stat mech model:
\begin{align}
\mathtt{I} \to & \frac{(3-\sin^2 \varphi)}{3} \mathtt{I} + \frac{\sin^2\varphi}{3} \mathtt{S} \\
\mathtt{S} \to & \frac{2}{3}\mathtt{I} + \frac{1}{3} \mathtt{S},
\end{align}
completing the proof.
\end{proof}

Let us examine some limits.
When $\varphi=0$, there is no effect of the feedforward unitary and the operation is equivalent to measuring and forgetting the result, or dephasing.
The stat mech rule would then be $\mathtt{I} \to \mathtt{I}; \mathtt{S} \to \frac{2}{3}\mathtt{I} + \frac{1}{3} \mathtt{S}$.
$\varphi = \pi/2$ corresponds to the case of controlled operation being a CNOT, when the whole gadget acts essentially as a reset.
In this case, both $\mathtt{I}$ and $\mathtt{S}$ map to the standard $\frac{2}{3}\mathtt{I} + \frac{1}{3} \mathtt{S}$ single-qubit average of a pure state.
Interestingly, there is no fundamental difference between the stat mech model for any $\varphi >0 $ and $\varphi = \frac{\pi}{2}$.

We can also study the fixed point of this single-qubit map for more intuition.
We equate $\alpha = a$, $\beta = b = \frac{1}{2} - 2a$ and obtain
\begin{align}
  a = \frac{1}{6 - 2\cos^2\varphi}, \quad b = \frac{\sin^2\varphi}{6-2\cos^2\varphi}.
\end{align}
This again shows that when the measurements only serve to dephase the state ($\varphi = 0$), the fixed point is $\mathtt{I}$, consistent with studies of unital noise \cite{deshpande2022tightboundsconvergencenoisy,ware2023sharpphasetransitionlinear}, whereas for any nonzero $\varphi$, the fixed point of the channel has a nonzero $\mathtt{S}$ component.

In sum, we have seen that from the point of view of the second moment and the stat-mech model, there is qualitatively no difference between some $\varphi>0$, e.g. $\varphi = \frac{\pi}{4}$, and $\varphi = \frac{\pi}{2}$, the case of resets.

\section{Physics of the feedforward stat-mech model}
In this Appendix, we study the dynamics of the stat-mech model in the presence of feedforward operations and derive our main result, a lower bound on the variance of the loss function.

We define here a few quantities that will be useful in our analysis.

\begin{defn}[Parameterized dynamic circuit]\label{def:parameterized_dynamic_circuit_app}
A parameterized dynamic circuit $\mathcal{C}$ of depth-$d$ on an architecture is a sequence of channels $\{\mathcal{U}_1(\btheta), \mathcal{U}_2(\btheta), \ldots \mathcal{U}_d(\btheta) \}$, where each channel $\mathcal{U}_j(\btheta): \mathrm{Dens}(n) \to \mathrm{Dens}(n)$ is a completely positive, trace preserving map on $n$ qubits that describes the operations in time step $j$.
The operations in each layer $j$ can be written as a composition of single- and two-qubit operations such that the two-qubit operations act on non-overlapping qubits and only connect qubits with an edge in the associated graph describing the circuit architecture.
The parameters $\btheta$ come from a set $\Theta \in \mathbb{R}^{p}$ for $p \in \poly(n)$.
The action of the entire circuit is described by the channel $\mathcal{C}(\btheta) = \mathcal{U}_d(\btheta) \circ \ldots \circ \mathcal{U}_1(\btheta)$.
\end{defn}
In the most general case in this definition, some operations in the circuit may depend on no parameters, and some parameters may influence multiple operations.
For simplicity, we focus on the case where each parameter $\theta$ is a rotation angle of either a single-qubit gate $e^{-i\theta \sigma}$ or a two-qubit gate $e^{-i\theta \sigma \otimes \sigma }$ for some Pauli matrix $\sigma$, and each parameter only controls a single gate.

We also need the idea of the effective channel up to time $t$:
\begin{defn}
The effective channel until time $t$ is given by $\mathcal{C}^t = \mathcal{U}_t(\btheta) \circ \ldots \circ \mathcal{U}_1(\btheta)$.
\end{defn}

With this in hand, we now define the ensemble of circuits we work with.

\begin{defn}[Ensemble of dynamic circuits]
Suppose there is a probability distribution $\mathcal{D}_p$ defined on the space of possible parameter values $\Theta$.
This distribution induces in a natural way a distribution over $\mathcal{C}(\btheta)$
by choosing $\btheta \sim \mathcal{D}_p$.
We denote this ensemble of dynamic circuits, and equivalently, channels, $\mathcal{B}$.
\end{defn}

We are interested in ensembles of dynamic ciruits $\mathcal{B}$ that have the \emph{local scrambling} property that the distribution over every operation is invariant under a single-qubit rotation chosen from a 2-design ensemble (as defined in Appendix \ref{app:definitions}).
For locally scrambling parameterized circuits, we may assume the parameterized circuit has the form $\mathcal{C} = \mathcal{V}_d(\btheta)\circ \mathcal{U}_d(\btheta) \circ \ldots  \mathcal{V}_1(\btheta) \circ \mathcal{U}_1(\btheta)$, where $\mathcal{V}_1(\btheta)(\rho) = \left(\prod_{i=1}^n V_i(\btheta)\right) \rho \left(\prod_{i=1}^n V_i(\btheta)\right)^\dag$ is a product of single-qubit unitaries $V_i \sim \mathcal{T}_2$ for a 2-design $\mathcal{T}_2$.

\begin{defn}[Backwards light cone]
Consider a parameterized dynamic circuit.
For a subregion $A \subset [n]$, define $G_d$ to be the set of operations $g$ in $\mathcal{U}_d$ that can potentially affect the subregion, i.e.\ satisfy $\supp(g) \cap A \neq \emptyset$.
Recursively define $G_{j-1} = \{g \in \mathcal{U}_{j-1}: \supp(g) \cap G_j \neq \emptyset \}$
The depth $k$-backwards light cone of $A$ is the set $L^b_k(A) := G_d \cup G_{d-1} \ldots G_{d-k}$.
Proceeding to depth 1, the set $L^b(A) := G_d \cup G_{d-1} \ldots G_1 $ is called the backwards light cone of the subregion $A$.
\end{defn}

\begin{defn}[Path]
A path in a circuit originating from time $t$ is a specification of space-time locations, i.e.\ a choice of a qubit $\ell(s)\  \forall \ s \in \{t, t+1, \ldots d\}$, such that for every time slice $s$, there is a gate or operation in $\mathcal{U}_{s} $ that connects $\ell(s)$ and $\ell(s+1)$, meaning that the operation is supported both on qubits $\ell(s)$ and $\ell(s+1)$.
We denote such a path using its sequence $\mathcal{P} := (\ell(t),\ell(t+1), \ldots\ell(d))$.
\end{defn}

\begin{defn}[Path length]
The length of a path $\mathcal{P} = (\ell(t),\ell(t+1),\ldots \ell(d))$ is the number of space-time locations $s$ such that in the associated circuit, the operation from $\mathcal{U}_s$ that acts on $\ell(s)$ and $\ell(s+1)$ is an entangling gate.
We denote the length of a path $\abs{\mathcal{P}}$.
\end{defn}

\begin{defn}[Worst-case feedforward distance]\label{def_ff_distance}
For any qubit $j$, consider all the paths arising from any feedforward operation at any time $t$ and ending in $j$.
Define the \emph{feedforward distance} of qubit $j$ to be the minimum path length of all these paths.
Now, consider the maximum feedforward distance out of all qubits $j \in [n]$.
We define this to be the worst-case feedforward distance of the parameterized dynamic circuit, denoted $f$.
More precisely,
\begin{align}
 f = \max_{j \in [n]} \min_{\ell(t): \text{feedforward} } \abs{(\ell(t), \ell(t+1), \ldots j)}.
\end{align}
\end{defn}

We have now established the tools and language we need to prove the claimed lower bound on the variance of the loss function.
Just as before, we are interested in the dynamics of the stat-mech model on bitstrings $\bm{x} \in \{0,1\}^n$.
The bitstrings often play the role of random variables, and the probability of sampling a string $\bm{x}$ at time $t$ is denoted $\Pr_{\mathcal{X}_t}[\bm{x}] = p^t_{\bm{x}}$.
We also are interested in conditional probabilities where the state (bitstring) at time $t-1$ is known.
These are denoted $\Pr_{\mathcal{X}_t}[\bm{x}^t | \bm{x}^{t-1}] = p^t_{\bm{x}| \bm{x}^{t-1} }$.
\begin{lemma}[Lower bound on probability mass on $\mathtt{S}$ after feedforward operation]
Suppose qubit $j$ undergoes a feedforward operation at time $t$.
Then, the probability that $\bm{x}_j = 1$ at time $t$, equivalently, the probability mass $\Tr_{j^c} \bar{\rho}^t \cdot \frac{4}{3} (\mathtt{S} - \mathtt{I})_{j}$, is lower bounded by $\frac{\sin^2 \varphi}{3}$. \label{lem:swapmassafterff}
\end{lemma}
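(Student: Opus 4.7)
The plan is to reduce this to a one-qubit calculation using the single-qubit transfer matrix $T_F$ derived in the previous lemma (Eq.~\eqref{eq_ff_transfermat}), and then verify the bound by elementary linear algebra.

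First I would argue locality: since $\mathcal{F}$ acts only on qubit $j$, the channel $\mathcal{F} \otimes \mathcal{F}$ on the two-copy system commutes with tracing out $j^c$. Together with the assumed locally scrambling ensemble (so that single-qubit Haar averages can be absorbed on either side of the feedforward gadget), this gives
\begin{align}
\Tr_{j^c}\bar{\rho}^t
= \Tr_{j^c}\mathbb{E}_{V_1}\!\left[(V_1\otimes V_1)(\mathcal{F}\otimes\mathcal{F})(\bar{\rho}^{t-1})(V_1\otimes V_1)^\dagger\right] ,
\end{align}
and the right-hand side depends only on the single-qubit marginal $\Tr_{j^c}\bar{\rho}^{t-1}$. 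Writing this pre-feedforward marginal in the symmetric basis as $a\,\mathtt{I}+b\,\mathtt{S}$ with $a+b=1$ (trace normalization) and $a,b\ge 0$ (nonnegativity of the associated probability distribution $\mathcal{X}_{t-1}$ restricted to $j$), the transfer matrix $T_F$ from Eq.~\eqref{eq_ff_transfermat} then gives the post-feedforward coefficients $a',b'$ via $(a',b')^\top = T_F (a,b)^\top$.

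The desired quantity is exactly $b'$, since the identity $\tfrac{4}{3}\Tr[(a'\mathtt{I}+b'\mathtt{S})(\mathtt{S}-\mathtt{I})] = b'$ (a short check using $\Tr\mathtt{I}=\Tr\mathtt{S}=1$, $\Tr\mathtt{I}^2=\tfrac14$, $\Tr\mathtt{I}\mathtt{S}=\tfrac14$, $\Tr\mathtt{S}^2=1$) recovers the $\mathtt{S}$-component. From $T_F$,
\begin{align}
b' = \frac{\sin^2\varphi}{3}\,a + \frac{1}{3}\,b
    = \frac{\sin^2\varphi}{3} + \frac{\cos^2\varphi}{3}\,b,
\end{align}
where we substituted $a=1-b$. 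Since $b\ge 0$ and $\cos^2\varphi\ge 0$, this immediately gives $b' \ge \sin^2\varphi/3$, proving the claim.

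I do not expect a substantive obstacle: once the transfer-matrix picture is in place, the bound is a one-line convex combination argument. The only point that deserves a careful sentence is the reduction to a single-qubit update, i.e.\ justifying that a feedforward operation on qubit $j$ acts as $T_F$ on the marginal irrespective of correlations with $j^c$; this is precisely what the locality of $\mathcal{F}$ together with linearity of the partial trace delivers, and the locally scrambling assumption guarantees that the marginal indeed lies in $\mathrm{span}\{\mathtt{I},\mathtt{S}\}$ so that $T_F$ applies.
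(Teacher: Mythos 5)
Your proof is correct and follows essentially the same route as the paper's: identify $\tfrac{4}{3}\Tr_{j^c}\bar{\rho}^t\cdot(\mathtt{S}-\mathtt{I})_j$ with the $\mathtt{S}$-coefficient via the orthogonality relations, then read the bound off the second row of $T_F$, since both entries $\tfrac{\sin^2\varphi}{3}$ and $\tfrac{1}{3}$ are at least $\tfrac{\sin^2\varphi}{3}$ and the input coefficients form a nonnegative convex combination. Your explicit computation $b'=\tfrac{\sin^2\varphi}{3}+\tfrac{\cos^2\varphi}{3}b$ is just a slightly more worked-out version of the paper's one-line observation.
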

\begin{proof}
If $\Tr_{j^c} \bar{\rho}^t = a \mathtt{I} + b \mathtt{S} $, we would like a lower bound on the probability $b$ of seeing an $\mathtt{S}$ after the feedforward operation.
From the orthogonality relations in \cref{eq_orthogonality}, we get
\begin{align}
 \Tr_{j^c} \bar{\rho}^t \cdot (\mathtt{S} - \mathtt{I})_{j} = a \cdot 0 +  \frac{3}{4} b.
\end{align}
This establishes that the quantity $\frac{4}{3} \Tr_{j^c} \bar{\rho}^t \cdot (\mathtt{S} - \mathtt{I})_{j}$ is the probability mass we desire.
As for the lower bound, it follows straightforwardly from observing that \cref{eq_ff_transfermat} has an entry corresponding to $\mathtt{S}$ of either $\frac{\sin^2\varphi}{3}$ or $\frac{1}{3}$.
Therefore, we get
\begin{align}
 \Pr_{\mathcal{X}_t} [\bm{x}_j = 1] = \sum_{\bm{x}: \bm{x}_j = 1} {p}^t_{\bm{x}} \geq \frac{\sin^2\varphi}{3},
\end{align}
where the sum $\sum_{\bm{x}: \bm{x}_j = 1}$ plays the role of marginalizing the probability distribution ${p}^t$ over $j^c$.
\end{proof}
While this proof is self-evident, this simple fact is crucial.
We are able to make a strong statement by giving a constant lower bound on the probability of an event, independent of all other gates in the circuit and independent of the time at which the feedforward operation takes place. 
Without intermediate measurements and feedforward operations, we typically do not have a good lower bound on the probability mass on any string or sequence.
Particularly, as seen earlier, for unitary circuits the associated distribution quickly converges to that given by \cref{eq_haar_equilibrium}.
This in turn means, for large $t$, 
\begin{align}
 p^t_{\bm{x}_j=1} \approx \frac{1}{2^n+1},
\end{align}
which is exponentially small.

\begin{lemma}[Lower bound on probability mass on $\mathtt{S}$ surviving after two-qubit gate]
Suppose we apply a two-qubit gate on qubits $j, j'$ at time $t$ with transfer matrix parameters $\alpha$ and $\beta$ following \cref{eq_twoqubit_transfermat}.
The conditional probability $\Pr_{\mathcal{X}_t} [\bm{x}^t_{j} = 1| \bm{x}^{t-1}_{j} = 1 \text{ or } \bm{x}^{t-1}_{j'} = 1] $ is lower bounded by $\frac{\alpha}{5}$. \label{lem:swapmass_remain}
\end{lemma}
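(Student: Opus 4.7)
The plan is to read the conditional probability off the transfer matrix in \cref{eq_twoqubit_transfermat} column by column. Label the two-qubit basis operators $\mathtt{II}, \mathtt{IS}, \mathtt{SI}, \mathtt{SS}$ by the bit pairs $(x_j, x_{j'}) \in \{00, 01, 10, 11\}$. The conditioning event $A =$ ``$\bm{x}^{t-1}_j = 1$ or $\bm{x}^{t-1}_{j'} = 1$'' then corresponds to the pre-gate configuration lying in $\{\mathtt{IS}, \mathtt{SI}, \mathtt{SS}\}$, while the target event $\bm{x}^{t}_j = 1$ corresponds to the post-gate configuration lying in $\{\mathtt{SI}, \mathtt{SS}\}$.

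For each of the three conditioning inputs, the probability of outputting $x_j = 1$ is simply the sum of the $\mathtt{SI}$- and $\mathtt{SS}$-entries in the corresponding column of $T$: input $\mathtt{IS}$ gives $\beta + \alpha/5$; input $\mathtt{SI}$ gives $(1-\alpha-\beta) + \alpha/5 = 1 - 4\alpha/5 - \beta$; and input $\mathtt{SS}$ gives $1$. I would then lower bound each of these three quantities by $\alpha/5$ using the nonnegativity constraints $\beta \geq 0$ and $\alpha \leq 1 - \beta$ (which together imply $\alpha \leq 1$): the first follows from $\beta \geq 0$; the second from $\beta \leq 1 - \alpha$, which rearranges to $1 - 4\alpha/5 - \beta \geq \alpha/5$; and the third from $\alpha \leq 1$.

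Finally, writing the conditional probability as a weighted average,
\begin{align}
\Pr_{\mathcal{X}_t}[\bm{x}^t_j = 1 \mid A] = \frac{\sum_{s \in \{\mathtt{IS}, \mathtt{SI}, \mathtt{SS}\}} p^{t-1}_s \, \Pr[\bm{x}^t_j = 1 \mid s]}{p^{t-1}_{\mathtt{IS}} + p^{t-1}_{\mathtt{SI}} + p^{t-1}_{\mathtt{SS}}},
\end{align}
and substituting the uniform lower bound $\alpha/5$ into each factor $\Pr[\bm{x}^t_j = 1 \mid s]$ in the numerator yields the claim. The one subtlety worth flagging is that the $\mathtt{IS}$ and $\mathtt{SI}$ columns play asymmetric roles, because we condition on the qubit $j$ specifically and not on the symmetric event that some output bit is $\mathtt{S}$; but since the two per-input bounds happen to coincide at $\alpha/5$, this asymmetry does not cause any trouble. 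I do not expect any step to be difficult: the lemma is essentially an arithmetic check on one column-sum pattern of the $4 \times 4$ transfer matrix together with the allowed range of $\alpha, \beta$.
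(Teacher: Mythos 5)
Your proposal is correct and follows essentially the same route as the paper: restrict to the last three columns of the transfer matrix in \cref{eq_twoqubit_transfermat}, sum the rows corresponding to $\bm{x}^t_j = 1$ to get $\beta + \alpha/5$, $1 - 4\alpha/5 - \beta$, and $1$, and lower bound the minimum by $\alpha/5$ using $\beta \geq 0$ and $\beta \leq 1-\alpha$. Your explicit weighted-average step is a slightly more careful write-up of the same argument, and the per-column values and constraints all check out against the paper.
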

\begin{proof}
The proof of this lemma is largely similar to the previous.
The main difference is that we are examining the conditional probability of there being an $\mathtt{S}$ on qubit $j$ at time $t$, with the promise that either qubit $j$ or $j'$ had an $\mathtt{S}$ operator at time $t-1$.
In this setting, we can restrict our attention to the columns of the transfer matrix \cref{eq_twoqubit_transfermat} corresponding to the initial states $\mathtt{IS}_{jj'}, \mathtt{SI}_{jj'}, \mathtt{SS}_{jj'}$, meaning the last three columns of
\begin{align}
T =
\begin{pmatrix}
1 & \frac{4\alpha}{5} & \frac{4\alpha}{5} & 0 \\
0 & 1-\alpha - \beta & \beta & 0 \\
0 & \beta & 1-\alpha  - \beta & 0 \\
0 & \frac{\alpha}{5} & \frac{\alpha}{5} & 1
\end{pmatrix}.
\end{align}
Therefore, 
\begin{align}
 \Pr_{\mathcal{X}_t} [\bm{x}^t_{j} = 1| \bm{x}^{t-1}_{j} = 1 \text{ or } \bm{x}^{t-1}_{j'} = 1] \geq \nonumber \\ 
 \min \left\{\beta + \frac{\alpha}{5}, 1- \frac{4\alpha}{5} - \beta, 1 \right\}.
\end{align}
Since we have restricted our attention to transfer matrices with nonnegative entries, $\beta \geq 0$ and $\beta \leq 1-\alpha$.
The minimum of the three entries is $\geq \frac{\alpha}{5}$.
\end{proof}
Informally, \Cref{lem:swapmass_remain} states that an $\mathtt{S}$ string on a qubit $j$ has a probability at least $\frac{\alpha}{5}$ of remaining at $j$.
By symmetry, we also have the probability that an $\mathtt{S}$ string on a qubit $j$ moves or spreads to $j'$ after a two-qubit gate on $j,j'$:
\begin{lemma}[Lower bound on probability of $\mathtt{S}$ hopping or spreading]
Suppose we apply a two-qubit gate on qubits $j, j'$ at time $t$ with transfer matrix parameters $\alpha$ and $\beta$ following \cref{eq_twoqubit_transfermat}.
The conditional probability $\Pr_{\mathcal{X}_t} [\bm{x}^t_{j'} = 1| \bm{x}^{t-1}_{j} = 1 \text{ or } \bm{x}^{t-1}_{j'} = 1] $ is lower bounded by $\frac{\alpha}{5}$. \label{lem:swapmass_move}
\end{lemma}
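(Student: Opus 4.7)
The plan is to reduce this statement to exactly the same style of case analysis carried out in the proof of \Cref{lem:swapmass_remain}, only tracking whether an $\mathtt{S}$ appears at $j'$ rather than at $j$. Concretely, since we condition on $\bm{x}^{t-1}_j=1$ or $\bm{x}^{t-1}_{j'}=1$, only the three columns of the transfer matrix in \cref{eq_twoqubit_transfermat} indexed by $\mathtt{IS}_{jj'}, \mathtt{SI}_{jj'}, \mathtt{SS}_{jj'}$ are relevant, and the event $\bm{x}^t_{j'}=1$ corresponds to ending in rows $\mathtt{IS}_{jj'}$ or $\mathtt{SS}_{jj'}$. So the conditional probability is at least the minimum over the three starting columns of the sum of the two corresponding entries.

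Reading these sums off of the transfer matrix, I would then just collect the three cases
\begin{align}
&\Pr[\bm{x}^t_{j'}=1\mid \mathtt{IS}^{t-1}] = (1-\alpha-\beta)+\tfrac{\alpha}{5} = 1-\tfrac{4\alpha}{5}-\beta,\\
&\Pr[\bm{x}^t_{j'}=1\mid \mathtt{SI}^{t-1}] = \beta+\tfrac{\alpha}{5},\\
&\Pr[\bm{x}^t_{j'}=1\mid \mathtt{SS}^{t-1}] = 0+1 = 1.
\end{align}
The desired lower bound then follows by showing that each of these three quantities is at least $\alpha/5$.

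For the second and third this is immediate from the standing assumptions $\beta\ge 0$ and $\alpha\in[0,1]$. The only case requiring any input is the first: using the standing restriction to nonnegative transfer-matrix entries, $\beta \le 1-\alpha$, and so $1-\tfrac{4\alpha}{5}-\beta \ge 1-\tfrac{4\alpha}{5}-(1-\alpha)=\tfrac{\alpha}{5}$, as needed. Taking the minimum over the three cases concludes the argument.

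There is no real obstacle here: the lemma is essentially the mirror image of \Cref{lem:swapmass_remain} under the $j\leftrightarrow j'$ symmetry of the two-qubit transfer matrix (the only columns and rows that are asymmetric under this swap are the first and the last, which either lie outside the conditioning event or map trivially to themselves). The only thing to be careful about is making explicit the convention $\mathtt{IS}_{jj'} = \mathtt{I}_j\otimes \mathtt{S}_{j'}$ when reading off entries of \cref{eq_twoqubit_transfermat}, and confirming that the same nonnegativity constraints on $\alpha,\beta$ used in \Cref{lem:swapmass_remain} suffice again here.
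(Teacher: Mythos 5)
Your proposal is correct and matches the paper's argument in substance: the paper proves this lemma by noting that the transfer matrix in \cref{eq_twoqubit_transfermat} is symmetric under exchanging $j \leftrightarrow j'$, so the bound reduces to \cref{lem:swapmass_remain}, whereas you explicitly redo the column-by-column computation (obtaining the same three quantities $1-\tfrac{4\alpha}{5}-\beta$, $\beta+\tfrac{\alpha}{5}$, and $1$, just permuted) and also observe the symmetry at the end. Your use of $\beta \ge 0$ and $\beta \le 1-\alpha$ from the nonnegativity restriction is exactly what the paper relies on, so there is no gap.
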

\begin{proof}
This proof is almost identical and follows from the fact that
\begin{align}
 & \Pr_{\mathcal{X}_t} [\bm{x}^t_{j'} = 1| \bm{x}^{t-1}_{j} = 1 \text{ or } \bm{x}^{t-1}_{j'} = 1] = \nonumber\\
 & \Pr_{\mathcal{X}_t} [\bm{x}^t_{j} = 1| \bm{x}^{t-1}_{j} = 1 \text{ or } \bm{x}^{t-1}_{j'} = 1]
\end{align}
since the transfer matrix is symmetric under the operation of exchanging the qubits $j \leftrightarrow j'$.
\end{proof}
We will now define some useful concepts that let us reason about paths from feedforward operations to the end of the circuit.

\begin{defn}[Circuit-compatible sequence]
We call a sequence of bitstrings $\bm{x}^t, \bm{x}^{t+1}, \ldots \bm{x}^{t'}$ a circuit-compatible sequence of a dynamic parameterized circuit $\mathcal{C}$ if they satisfy the consistency condition that the string $\bm{x}^s$ is obtainable in principle from the string at time $s-1$, namely $\bm{x}^{s-1}$ from the sequence of operations in $\mathcal{U}_s$.
\end{defn}
The phrase ``obtainable in principle'' means that the probability $\Pr_{\mathcal{X}_s} [\bm{x}^s | \bm{x}^{s-1}]$ is nonzero.

\begin{defn}[SWAP-active sequence]
We call a circuit-compatible sequence \emph{SWAP-active} with respect to a given path if, for all space-time locations $\ell(s)$ specified by the path, we have $\bm{x}^s_{\ell(s)} = 1$.
\end{defn}

Informally speaking, a path identifies a potential route from a space-time location with an $\mathtt{S}$ operator to an $\mathtt{S}$ operator at the end of the circuit supported on a potentially different location, and we are interested in SWAP-active sequences with respect to this path.
Specifically, we would like to lower bound the probability that an $\mathtt{S}$ operator survives at a given location $l$ at the end of the circuit.
This depends on the path length of a path starting from the initial state or a nearby feedforward operation and ending at the given location $l$.

\begin{lemma}
Consider a qubit at site $l$.
The probability that the final bitstring contains an $\mathtt{S}$ at site $l$, i.e.\ $\Pr_{\mathcal{X}_d}[\bm{x}^d_l = 1] $, is lower bounded by 
\begin{align}
\frac{\sin^2 \varphi}{3} \left( \frac{\alpha}{5} \right)^\abs{(\ell(t), \ell(t+1),\ldots \ell(d))}
\end{align}
for any path starting from a feedforward operation at $\ell(t)$ and ending at the site of interest $\ell(d)=l$. \label{lem_singlequbit_bound}
\end{lemma}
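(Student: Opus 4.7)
The plan is to lower bound $\Pr_{\mathcal{X}_d}[\bm{x}^d_l = 1]$ by the probability of the SWAP-active sequence along the given path. Since $\{\bm{x}^d_l = 1\}$ contains the joint event $E := \bigcap_{s=t}^d \{\bm{x}^s_{\ell(s)} = 1\}$, one has $\Pr[\bm{x}^d_l = 1] \geq \Pr[E]$. Applying the chain rule,
\begin{align*}
\Pr[E] = \Pr[\bm{x}^t_{\ell(t)} = 1] \prod_{s=t+1}^{d} \Pr\bigl[\bm{x}^s_{\ell(s)} = 1 \mid \bm{x}^r_{\ell(r)} = 1 \text{ for all } t \leq r \leq s-1\bigr].
\end{align*}
The strategy is then to bound each factor separately using the transfer-matrix lemmas already established.

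\textbf{Bounding the factors.} The first factor is at least $\sin^2\varphi/3$ by \Cref{lem:swapmassafterff}, since a feedforward operation acts on $\ell(t)$ at time $t$. For each subsequent conditional factor, the local Markov structure of the stat-mech evolution lets us reduce the conditional probability given the entire past to a worst-case single-step bound of the form $\min_{\bm{x}^{s-1}:\,\bm{x}^{s-1}_{\ell(s-1)}=1}\Pr[\bm{x}^s_{\ell(s)}=1\mid\bm{x}^{s-1}]$ via the tower rule. When the transition is through an entangling two-qubit gate on $\{\ell(s-1),\ell(s)\}$, this minimum is at least $\alpha/5$ by \Cref{lem:swapmass_remain} if $\ell(s)=\ell(s-1)$ and by \Cref{lem:swapmass_move} otherwise; such entangling transitions account for exactly $|\mathcal{P}|$ of the factors. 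For the remaining transitions (single-qubit scramblers, trivial identity steps, or feedforward operations on $\ell(s)=\ell(s-1)$), direct inspection of the corresponding transfer matrices yields factors of $1$ or at least $1/3$ (the latter from \Cref{eq_ff_transfermat}), both of which are $\geq \alpha/5$ when $\alpha \leq 1$, and so cannot decrease the product below the claimed bound.

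\textbf{Main obstacle.} The central technical step is the reduction of the fully-conditioned probability to a worst-case single-step transfer-matrix bound. This hinges on the locality of the stat-mech update rule: conditioning on $\bm{x}^{s-1}_{\ell(s-1)} = 1$ while marginalizing against the inherited distribution on the remaining qubits at time $s-1$ yields a convex combination of transfer-matrix probabilities, each lower bounded uniformly by $\alpha/5$ (which works because \Cref{lem:swapmass_remain,lem:swapmass_move} are stated in a form independent of the rest of the state). With this reduction in place, multiplying the bounds for all factors yields $\Pr[E] \geq (\sin^2\varphi/3)(\alpha/5)^{|\mathcal{P}|}$, establishing the lemma.
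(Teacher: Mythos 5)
Your proof is correct and follows essentially the same route as the paper's: seed the $\mathtt{S}$ mass at the feedforward operation via \cref{lem:swapmassafterff} and propagate it along the path using the single-step conditional bounds of \cref{lem:swapmass_remain,lem:swapmass_move}, with only the entangling steps costing a factor of $\alpha/5$. Your version is in fact slightly more careful than the paper's, since you chain through the joint SWAP-active event with the chain rule and justify the reduction of the fully-conditioned probability to the worst-case single-step transfer-matrix bound, a point the paper's proof glosses over.
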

\begin{proof}
Consider a path $(\ell(t), \ell(t+1), \ldots\ell(d))$ starting after a feedforward operation at qubit $\ell(t)$ at time $t$.
This operation results in $\bm{x}^{t}_{\ell(t)} = 1$ with probability at least $\frac{\sin^2 \varphi}{3} $ from \cref{lem:swapmassafterff}.
Once we get a handle on this probability mass, we can lower bound the probability mass of the event $\bm{x}^{t+1}_{\ell(t+1)} = 1$ using the conditional probability lower bounds in \cref{lem:swapmass_remain,lem:swapmass_move}.
If there is an entangling gate between qubits $\ell(t)$ and $\ell(t+1)$ at time $t+1$, then
\begin{align}
 \Pr[\bm{x}^{t+1}_{\ell(t+1)} = 1 | \bm{x}^{t}_{\ell(t)} = 1] \geq \frac{\alpha}{5}.
\end{align}
Otherwise, if the qubit is idle and $\ell(t+1) = \ell(t)$, the probability mass is unaffected.
This step holds inductively at arbitrary time slice $s$:
\begin{align}
 \Pr[\bm{x}^{s+1}_{\ell(s+1)} = 1 | \bm{x}^{s}_{\ell(s)} = 1] \geq \frac{\alpha}{5}.
\end{align}
Chaining together everything, we get
\begin{align}
 \Pr[\bm{x}^{d}_{\ell(d)} = 1 ] \geq \Pr[\bm{x}^{t}_{\ell(t)} = 1] \times \left( \frac{\alpha}{5} \right)^{\abs{(\ell(t),\ell(t+1),\ldots \ell(d))}},
\end{align}
where we have counted the number of times a two-qubit gate is encountered along the path, which is precisely how we have defined the path length.
In order to optimize the lower bound, we look for paths to the closest feedforward or initialization operation, using the distance measure we have defined.
The intuition we have made rigorous is that qubit initializations and measurement and feedforward operations are ``sources'' of $\mathtt{S}$ probability mass.
The path between the source and the ``sink'' (the final destination at $l$) is a ``leaky pipe'' where the probability of some mass surviving can decay exponentially in the length of the pipe.
Thus, in order to get a good flow rate to the sink, we try and optimize the plumbing so that sources and sinks are as close to each other as possible.
\end{proof}

We have seen how the probability of a single $\mathtt{S}$ surviving at the end of the circuit at one location is related to the feedforward distance.
Let us also study the case $\Pr[\bm{x}^{d}_{A} = 11\ldots 1]$, the probability of ending up with the all $\mathtt{S}$ operator on a region $A$.

\begin{lemma}
For a region $A$, the probability that the final bitstring contains $\mathtt{S\ldots S}$ in the entire region $A$, i.e.\ $\Pr_{\mathcal{X}_d}[\bm{x}^d_l = 1 \forall l \in A]$, is lower bounded by 
\begin{align}
\left( \frac{\sin^2 \varphi}{3} \right)^{\abs{A}} \left( \frac{\alpha}{5} \right)^{\sum_{i \in A} \abs{(\ell(t_i),\ell(t_i+1),\ldots,i)}}
\end{align}
for any set of paths starting from some feedforward operation $t_i$ and ending at sites in A $\ell(d_i)=i$. \label{lem_multiqubit_bound}
\end{lemma}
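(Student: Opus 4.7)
The plan is to reduce the multi-qubit bound to a time-slicing argument that combines the Markov structure of the stat-mech chain with the single-step conditional estimates from \cref{lem:swapmassafterff,lem:swapmass_remain,lem:swapmass_move}. For each $i \in A$, fix a specific path $P_i = (\ell(t_i), \ell(t_i+1), \ldots, \ell(d) = i)$ attaining the claimed bound, and define the strong joint event
\[
E := \bigcap_{i \in A} \bigcap_{s = t_i}^{d} \{\bm{x}^s_{P_i(s)} = 1\}.
\]
Since $E$ implies $\bm{x}^d_l = 1$ for all $l \in A$, it is enough to lower bound $\Pr_{\mathcal{X}_d}[E]$; everything else will proceed on this smaller event.

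First I would decompose $\Pr[E]$ by the chain rule in time, writing $\Pr[E] = \prod_{s} \Pr[F_s \mid F_{<s}]$, where $F_s$ collects the path-constraints $\{\bm{x}^s_{P_i(s)}=1\}$ holding at time $s$ and $F_{<s}$ is the conjunction of such constraints at all earlier times. The key structural observation is that the transition $\bm{x}^{s-1}\to\bm{x}^s$ in the stat-mech chain factorizes as a tensor product over the operations in layer $\mathcal{U}_s$, since those operations act on non-overlapping qubits and the intervening single-qubit 2-design gates do not move probability mass between $\mathtt{I}$ and $\mathtt{S}$. Consequently each $\Pr[F_s \mid F_{<s}]$ splits into one contribution per operation at time $s$.

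Next I would bound the contribution from each relevant operation separately. Every feedforward operation initiating a path at time $t_i$ contributes a factor of at least $\sin^2\varphi/3$ by \cref{lem:swapmassafterff}, which is an \emph{unconditional} lower bound and therefore survives any conditioning on the past. Every 2-qubit gate at time $s$ that is traversed by some path $P_i$---meaning it transports the required $\mathtt{S}$ from $\ell(s-1)$ to $\ell(s)$---contributes a factor of at least $\alpha/5$ per path usage via \cref{lem:swapmass_remain} (when the path idles on one of the two qubits) or \cref{lem:swapmass_move} (when the path hops across). When a path passes through a qubit that is not acted on at time $s$, the required event is preserved deterministically with factor $1$. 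Multiplying over all layers and all paths yields precisely the claimed $(\sin^2\varphi/3)^{|A|}(\alpha/5)^{\sum_i |P_i|}$.

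The main obstacle I anticipate is justifying the factorized bound rigorously at space-time locations where two or more paths meet at the same 2-qubit gate. In such cases the true conditional probability is typically \emph{larger} than the naive product of per-path factors---for instance, an $\mathtt{SS}$ input to any 2-qubit gate stays $\mathtt{SS}$ with probability one, so counting two factors of $\alpha/5$ is a loss---but overcounting only weakens the lower bound, which is acceptable. To make this precise I would marginalize the transfer matrix \cref{eq_twoqubit_transfermat} over the qubit not constrained by a given path and inspect the relevant rows to verify that every conditional probability we need is a convex combination of entries each bounded below by $\alpha/5$, exactly in the spirit of \cref{lem:swapmass_remain,lem:swapmass_move}. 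With this bookkeeping settled, the time-sliced chain-rule product directly delivers the stated estimate.
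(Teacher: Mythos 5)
Your proposal is correct and follows essentially the same route as the paper: fix one path per qubit of $A$, lower bound the probability of the SWAP-active joint event via the chain rule over time slices, and verify from the two-qubit transfer matrix that wherever two paths meet at a gate the conditional probability is still at least $\left(\frac{\alpha}{5}\right)^2$ --- this is precisely the case analysis (six configurations of intersecting SWAP-active sequences) that the paper carries out. The one small imprecision is your claim that the feedforward bound ``survives any conditioning'' \emph{because} it is unconditional; an unconditional bound does not in general imply the conditional one, and the correct justification --- which your own transfer-matrix inspection would supply --- is that both entries in the $\mathtt{S}$-output row of the feedforward transfer matrix are at least $\frac{\sin^2\varphi}{3}$, so the bound holds conditioned on any past configuration.
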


\begin{proof}
For every qubit in $A$, consider the path with shortest path length to a nearby feedforward operation in the circuit $\mathcal{C}$.
For two distinct qubits, it is possible that the shortest paths overlap significantly and begin at the same feedforward operation.

\begin{figure}
\includegraphics[width=0.95\columnwidth]{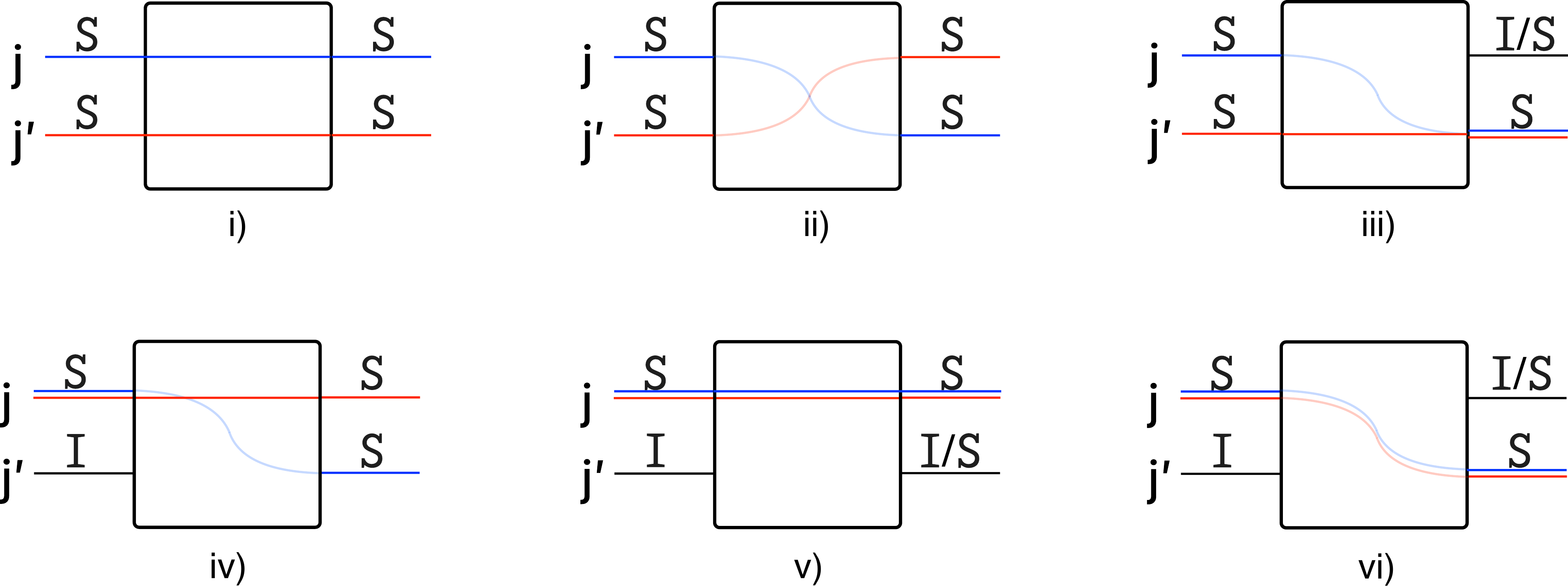}
\caption{Configurations in which two SWAP-active paths can interact at a two-qubit gate, up to permutations blue $\leftrightarrow$ red and permutations at the input. The labels on the left depict the operators at the input and the ones on the right the operator labels at the output.
The blue and red lines denote SWAP-active sequences corresponding to two different paths ending in qubits $i, i' \in A$ (not depicted).
We are only interested in the blue/red lines on SWAP-active sequences ending in $\mathtt{S}$ operators, while the operators on the black lines can be $\mathtt{I}$ or $\mathtt{S}$.
}
\label{fig:intersectingcases}
\end{figure}

One may apply \cref{lem_singlequbit_bound} to bound the mass of each $\mathtt{S}_{i\in A}$ separately.
This is despite the fact that the events are not independent---multiple paths from the same source could contribute to the event of seeing an $\mathtt{S\ldots S}_A$ string at the output.
As we show below, this can be handled since the product of the individual lower bounds are a lower bound on the combined event.

Since at every gate, at most two SWAP-active paths can interesect, it suffices to consider the case of two SWAP-active paths at a time.
We consider the six cases depicted in \cref{fig:intersectingcases}.

In cases i) and ii), the operators $\mathtt{SS}_{jj'}$ map to $ \mathtt{SS}_{jj'}$.
The probability of this event is $\Pr_{\mathcal{X}_{s+1}}[\bm{x}^{s+1}_{jj'} = 11 | \bm{x}^{s}_{jj'} = 11]= 1 \geq \left( \frac{\alpha}{5} \right)^2$.
Next for case iii), the event $\mathtt{SS}_{jj'} \to  \star_j \mathtt{S}_{j'}$ again has probability $\Pr_{\mathcal{X}_{s+1}}[\bm{x}^{s+1}_{j'} = 1 | \bm{x}^{s}_{jj'} = 11]= 1 \geq \left( \frac{\alpha}{5} \right)^2$.
In cases iv)--vi), the input is $\mathtt{SI}_{jj'}$.
For case iv), where the output is $\mathtt{SS}_{jj'}$, the probability is $\Pr_{\mathcal{X}_{s+1}}[\bm{x}^{s+1}_{jj'} = 11 | \bm{x}^{s}_{jj'} = 10]= \frac{\alpha}{5} \geq \left( \frac{\alpha}{5} \right)^2$.
The probability for case v) is $\Pr_{\mathcal{X}_{s+1}}[\bm{x}^{s+1}_{j} = 1 | \bm{x}^{s}_{jj'} = 10]= \frac{\alpha}{5} + 1-\alpha - \beta \geq \left( \frac{\alpha}{5} \right)^2$, and for case vi), $\Pr_{\mathcal{X}_{s+1}}[\bm{x}^{s+1}_{j'} = 1 | \bm{x}^{s}_{jj'} = 10]= \frac{\alpha}{5} + \beta \geq \left( \frac{\alpha}{5} \right)^2$.
Thus, we have seen in all cases that the conditional probabilities are lower bounded by $\left( \frac{\alpha}{5} \right)^2$, which is the lower bound we would have assigned for the two SWAP-active sequenecs if we ignored the interaction between them.

Therefore, we can lower bound the probability of the entire event by assuming conditional independence, giving
\begin{align}
 \left( \frac{\sin^2 \varphi}{3} \right)^{n_t} \left( \frac{\alpha}{5} \right)^{\sum_{i \in A} \abs{(\ell(t_i), \ell(t_i+1), \ldots,i)} },
\end{align}
where $n_t$ is the number of sources of $\mathtt{S}$ strings.
In the worst case, each $\mathtt{S}$ string comes from a different source giving $n_t = \abs{A}$ the locality of $A$.
The entire lower bound is now
\begin{align}
 \Pr_{\mathcal{X}_d}[\bm{x}^d_l = 1 \ \forall \ l \in A] \geq \left( \frac{\sin^2 \varphi}{3} \right)^{\abs{A}} \left( \frac{\alpha}{5} \right)^{ \sum_{i \in A} \abs{(\ell(t_i),\ell(t_i+1), \ldots,i)} }.
\end{align}
The quantity in the exponent is obtained by independently choosing a path for every qubit in $A$.
\end{proof}

We are ready to prove our main result on the absence of barren plateaus.
\begin{thm} \label{apx_thm:noiseless_lower_bound}
Consider a $k$-local Hermitian observable $H = \sum_{\bm{\alpha}} c_{\bm{\alpha}} \bm{\alpha}$ and a parameterized dynamic quantum circuit $\mathcal{C}$ with a distribution $\mathcal{D}_p$ over parameters $\btheta \in \Theta$, which satisfy the following properties:
\begin{enumerate}
\item Every component $\btheta_i\in\btheta$ parameterizes only a single operation in $\mathcal{C}$.
\item The distribution is such that $\mathcal{C}$ is locally scrambling.
\item The distribution over two-qubit gates has a transfer matrix of the form in \cref{eq_twoqubit_transfermat}.
\item $\mathcal{C}$ has constant worst-case feedforward distance $f$. 
\end{enumerate}Then, the variance of the loss function $L = \Tr \rho(\theta) H $ is lower bounded by
\begin{align}
 \Var_{\btheta \sim \mathcal{D}_p}{L} \geq \sum_{\bm{\alpha}} c_{\bm{\alpha}}^2 \left( \frac{\sin^2 \varphi}{3} \right)^{\abs{\bm{\alpha}}}  \left( \frac{\alpha}{5} \right)^{kf}.
\end{align}
\end{thm}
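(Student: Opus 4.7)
The plan is to assemble the three lemmas already proved in the appendix: the variance decomposition in \cref{lem_ham_variance}, the multi-qubit probability lower bound \cref{lem_multiqubit_bound}, and the definition of the worst-case feedforward distance \cref{def_ff_distance}. First, I would apply \cref{lem_ham_variance} to expand the variance in the Pauli basis, writing
\begin{equation}
\Var_{\btheta \sim \mathcal{D}_p}{L} = \sum_{\bm{\alpha}} c_{\bm{\alpha}}^2 \Pr_{\mathcal{X}_d}\!\left[\bm{x}^d_{\supp(\bm{\alpha})} = 11\ldots 1\right].
\end{equation}
The hypotheses needed for this lemma (independent parameterization of each operation, local scrambling) are exactly those listed in the assumptions of the theorem, so the lemma applies term by term.

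Second, I would fix an arbitrary Pauli term $\bm{\alpha}$ with support $A = \supp(\bm{\alpha})$ of size $|\bm{\alpha}| \leq k$, and apply \cref{lem_multiqubit_bound} to lower bound the probability that the final stat-mech bitstring is all-$\mathtt{S}$ on $A$. For each qubit $i \in A$, I would choose a path from $i$ back through its backward light cone to the closest feedforward operation. Crucially, by the worst-case feedforward distance hypothesis ($f < \infty$ constant), each such path has length at most $f$. Hence
\begin{equation}
\sum_{i \in A}\lvert \mathcal{P}_i\rvert \;\leq\; |A|\cdot f \;\leq\; k f.
\end{equation}
Since $\alpha/5 \in [0,1]$, raising it to a smaller exponent only increases the value, so
\begin{equation}
\left(\frac{\alpha}{5}\right)^{\sum_{i\in A}\lvert\mathcal{P}_i\rvert} \geq \left(\frac{\alpha}{5}\right)^{kf}.
\end{equation}
Combining with \cref{lem_multiqubit_bound} yields the per-term bound $\Pr_{\mathcal{X}_d}[\bm{x}^d_A = 11\ldots 1] \geq \left(\frac{\sin^2\varphi}{3}\right)^{|\bm{\alpha}|}\left(\frac{\alpha}{5}\right)^{kf}$, and summing over $\bm{\alpha}$ with the nonnegative weights $c_{\bm{\alpha}}^2$ gives the theorem.

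The conceptual heart of the argument has already been handled in the proof of \cref{lem_multiqubit_bound}, where interacting SWAP-active sequences must be carefully analyzed to show the independent product lower bound survives gate collisions (the six cases in \Cref{fig:intersectingcases}). For the present theorem the only subtlety to double-check is that the worst-case feedforward distance is invoked pointwise per qubit in $A$, so one can pick a short path for each $i\in A$ independently; the potential overlap or intersection of these paths is absorbed into the earlier lemma and does not cause any extra loss in the exponent. With those pieces in place, the proof of the theorem is a straightforward assembly.
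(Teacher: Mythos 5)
Your proposal is correct and follows essentially the same route as the paper's own proof: both decompose the variance via \cref{lem_ham_variance}, bound each Pauli term via \cref{lem_multiqubit_bound} with per-qubit shortest paths of length at most $f$, and use $\sum_{i\in A}|\mathcal{P}_i|\leq kf$ together with $\alpha/5\leq 1$ to obtain the stated exponent. Your explicit remark that path overlaps are already absorbed into \cref{lem_multiqubit_bound} is a correct reading of where the real work lies.
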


\begin{proof}
Consider a circuit architecture such that the worst-case feedforward distance, $f$ is upper bounded by a constant.
Assume the loss function is $\sum_{\bm{\alpha} \in \mathbb{P}_n} c_{\bm{\alpha}} \bm{\alpha}$ expanded in the Pauli basis.
We may ignore the identity term since it does not contribute to the variance.
Since the maximum locality of the Hamiltonian is $k$, so too is the maximum weight of any Pauli.
Since the worst-case feedforward distance is the maximum feedforward distance among any qubit at the output, in the worst case, the sum $\sum_{i \in A} \abs{(\ell(t_i),\ell(t_i+1), \ldots,i)}$ is at most $f \times k$.
The rest follows from \cref{lem_ham_variance,lem_multiqubit_bound}.
\end{proof}
Further fixing an architecture where $f$ is constant and $\varphi = \pi/2$, we get a lower bound on the variance of the form
\begin{align}
 \sum_{\bm{\alpha}} c_{\bm{\alpha}}^2 \cdot \Omega(1) = \Omega(\norm{H}_{HS}^2),
\end{align}
where $\norm{H}_{HS} := \sqrt{{\Tr H^2}} $ is the Hilbert-Schmidt norm of the operator $H$.

We can also prove the robustness of this result to unital noise:
\begin{thm}
Consider a $k$-local Hermitian observable $H = \sum_{\bm{\alpha}} c_{\bm{\alpha}} \bm{\alpha}$, and a parameterized dynamic quantum circuit $\mathcal{C}$ with a distribution $\mathcal{D}_p$ over parameters $\btheta \in \Theta$, which satisfy the same  conditions as in \cref{apx_thm:noiseless_lower_bound}, along with the additional condition:
\begin{enumerate}
\item[5.] After every two-qubit operation, there is a local noise channel (which can be nonunital in general) acting on every qubit, with a transfer matrix given in \cref{eq_nonunital}, with parameters $\delta \leq \gamma \leq \frac{1}{2}$.
\end{enumerate}
Then, the variance of the loss function $L = \Tr \rho(\theta) H $ is lower bounded by
\begin{align}
 \Var_{\btheta \sim \mathcal{D}_p}{L} \geq \sum_{\bm{\alpha}} c_{\bm{\alpha}}^2 \left( \frac{\sin^2 \varphi}{3} \right)^{\abs{\bm{\alpha}}}  \left(\frac{\alpha}{5}(1-\gamma-\delta) + \delta \right)^{kf}.
\end{align}
\end{thm}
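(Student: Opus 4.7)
The plan is to mirror the proof of Theorem~\ref{apx_thm:noiseless_lower_bound}, replacing each key probabilistic inequality by its noisy counterpart. The noise channel, with transfer matrix given by \cref{eq_nonunital}, acts independently on each output wire after each two-qubit gate. The crucial structural fact is that the assumption $\delta\le\gamma\le 1/2$ implies $1-\gamma-\delta\ge 0$, so the single-qubit marginal map $p\mapsto p(1-\gamma)+(1-p)\delta = p(1-\gamma-\delta)+\delta$ is monotone non-decreasing in~$p$. This is the only monotonicity we need.

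First I would prove a noisy analogue of \cref{lem:swapmass_remain,lem:swapmass_move}. Let $p$ be the conditional probability that $\bm{x}^{s+1}_j=1$ immediately after the two-qubit gate, given that $\bm{x}^s_j=1$ or $\bm{x}^s_{j'}=1$. The noiseless lemma gives $p\ge\alpha/5$. Applying the single-qubit noise map on qubit~$j$, the probability becomes $p(1-\gamma-\delta)+\delta\ge (\alpha/5)(1-\gamma-\delta)+\delta =: \alpha'/5$, by the monotonicity above. The same bound holds with $j,j'$ exchanged. Since feedforward operations are \emph{not} followed by noise in the hypothesis, the bound $\sin^2\varphi/3$ from \cref{lem:swapmassafterff} is unchanged.

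Next I would lift this to the path statement of \cref{lem_singlequbit_bound}: for any single site~$l$, chain together the feedforward source estimate $\sin^2\varphi/3$ with $\abs{\mathcal P}$ applications of the new $\alpha'/5$ conditional bound along a shortest path $\mathcal P$ from a feedforward source to~$l$. For the multi-path version of \cref{lem_multiqubit_bound}, re-examine the six intersection configurations of \cref{fig:intersectingcases}, now with the product noise $T_{\mathrm{noise}}\otimes T_{\mathrm{noise}}$ acting after the two-qubit gate. In each configuration one writes the joint conditional probability of both tracked qubits being in state $\mathtt S$ as $\sum_{a,b}P_{\mathrm{gate}}(a,b)\,P_N(1\mid a)\,P_N(1\mid b)$ and lower bounds this by $(\alpha'/5)^2$. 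The key inequality is $(1-\gamma)^2\ge (\alpha'/5)^2$, equivalent to $1-\gamma\ge (\alpha/5)(1-\gamma-\delta)+\delta$, which rearranges to $(4/5)(1-\gamma-\delta)\ge 0$ and thus holds; the other cases are handled analogously using that the cross-term $\delta(1-\gamma)$ and the pure-noise term $\delta^2$ are both non-negative, so the nonunital contribution can only increase the probability of two simultaneous $\mathtt S$'s.

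Putting everything together via \cref{lem_ham_variance}, for each Pauli $\bm\alpha$ of weight at most $k$ there is a family of paths, one per site in $\supp(\bm\alpha)$, whose total length is at most $kf$; the multi-path lemma then yields
\begin{align}
\Pr_{\mathcal X_d}\bigl[\bm x_{\supp(\bm\alpha)}=1\cdots 1\bigr]\ \ge\ \Bigl(\tfrac{\sin^2\varphi}{3}\Bigr)^{\abs{\bm\alpha}}\Bigl(\tfrac{\alpha}{5}(1-\gamma-\delta)+\delta\Bigr)^{kf},
\end{align}
which upon summing $c_{\bm\alpha}^2$ against this estimate gives the claimed bound. The main technical obstacle is the third step: the nonunital term $\delta$ creates off-diagonal entries coupling $\mathtt I\to\mathtt S$ that did not appear in the noiseless proof, and one must verify case-by-case that these terms do not break the product-form lower bound used to decouple interacting SWAP-active paths. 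Fortunately this is a win rather than a loss, because additional $\mathtt S$ mass only helps, provided one bookkeeps the marginals rather than the joint distribution.
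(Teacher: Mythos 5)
Your proposal follows essentially the same route as the paper: replace the noiseless conditional bound $\alpha/5$ by $\alpha'/5=(\alpha/5)(1-\gamma-\delta)+\delta$ using $1-\gamma-\delta\ge 0$, keep the feedforward source bound $\sin^2\varphi/3$ unchanged, re-verify the six intersection configurations against $(\alpha'/5)^2$, and conclude via \cref{lem_ham_variance}; your decomposition $\sum_{a,b}P_{\mathrm{gate}}(a,b)P_N(1\mid a)P_N(1\mid b)$ reproduces exactly the entries of the paper's composed transfer matrix. The only loose point is the $\mathtt{SI}\to\mathtt{SS}$ configuration, where the heuristic that ``the nonunital term only helps'' is not by itself a proof (both sides of the inequality grow with $\delta$); the paper closes this by explicitly factoring the difference $\delta(1-\gamma)+\tfrac{\alpha}{5}(1-\gamma-\delta)(1-\gamma-4\delta)-(\alpha'/5)^2$ as $(1-\gamma-\delta)\bigl[(1-\gamma)\tfrac{\alpha}{5}(1-\tfrac{\alpha}{5})+\delta(5-\tfrac{\alpha}{5})(1-\tfrac{\alpha}{5})\bigr]\ge 0$, a short computation you would need to supply.
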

\begin{proof}
We observe that under general nonunital noise, the combined transfer matrix of the two-qubit gate and the noise channel is 
\begin{widetext}
\begin{align}
\begin{pmatrix}
(1 - \delta)^2 & 
\gamma(1 - \delta) + \frac{4}{5} \alpha (1 - \gamma - \delta)\left(1 - \frac{\gamma}{4} - \delta\right) & 
\gamma(1 - \delta) + \frac{4}{5} \alpha (1 - \gamma - \delta)\left(1 - \frac{\gamma}{4} - \delta\right) & 
\gamma^2 \\
\delta (1 - \delta)& 
(1 - \gamma)(1 - \delta) - (1 - \gamma - \delta) \left[\beta + \alpha \left(1 - \frac{\gamma}{5} - \frac{4\delta}{5}\right)  \right] & 
\gamma\delta + (1 - \gamma - \delta) \left[\beta + \frac{\alpha}{5} (\gamma + 4\delta)\right] & 
\gamma(1 - \gamma) \\
\delta(1 - \delta) & 
\gamma\delta + (1 - \gamma - \delta) \left[\beta + \frac{\alpha}{5} (\gamma + 4\delta)\right] & 
(1 - \gamma)(1 - \delta) - (1 - \gamma - \delta) \left[\beta + \alpha \left(1 - \frac{\gamma}{5} - \frac{4\delta}{5}\right)  \right] &
\gamma(1 - \gamma) \\
\delta^2 & 
\delta(1 - \gamma) + \frac{\alpha}{5} (1 - \gamma - \delta)(1 - \gamma - 4\delta) & 
\delta(1 - \gamma) + \frac{\alpha}{5} (1 - \gamma - \delta)(1 - \gamma - 4\delta) & 
(1 - \gamma)^2
\end{pmatrix},
\end{align}
\end{widetext}
where $\gamma$ is the nonunitarity of the noise channel and $\delta$ the nonunitality.
This transfer matrix is obtained simply by composing the two-qubit transfer matrix in \cref{eq_twoqubit_transfermat} with the transfer matrix of the local noise channels \cref{eq_nonunital} applied to each qubit.

We observe that the appropriate part of the proof in \cref{lem_singlequbit_bound,lem_multiqubit_bound} is the lower bound on the probability that an $\mathtt{S}$ operator survives.
For this we consider
\begin{widetext}
\begin{align}
 &\Pr_{\mathcal{X}_t} [\bm{x}^t_{j} = 1| \bm{x}^{t-1}_{j} = 1 \text{ or } \bm{x}^{t-1}_{j'} = 1] \geq 
  \min \left\{\gamma\delta + (1 - \gamma - \delta) \left[\beta + \frac{\alpha}{5} (\gamma + 4\delta)\right] + \delta(1 - \gamma) + \frac{\alpha}{5} (1 - \gamma - \delta)(1 - \gamma - 4\delta), \right. \nonumber \\
& \left. (1 - \gamma)(1 - \delta) - (1 - \gamma - \delta) \left[\beta + \alpha \left(1 - \frac{\gamma}{5} - \frac{4\delta}{5}\right)  \right] + \delta(1 - \gamma) + \frac{\alpha}{5} (1 - \gamma - \delta)(1 - \gamma - 4\delta), (1-\gamma)^2  + \gamma(1-\gamma) \right\}. \nonumber \\
 = &  \min \left\{\delta + (1 - \gamma - \delta)\left(\beta + \frac{\alpha}{5}\right), 1-\gamma -(1-\gamma-\delta)\left( \beta + \frac{4\alpha}{5} \right), 1-\gamma \right\} \\
 \geq  & \frac{\alpha}{5}(1-\gamma -\delta) + \delta.
\end{align}
\end{widetext}
Define this last quantity to be $\alpha'/5$, so that $\alpha' = \alpha (1-\gamma-\delta) + 5 \delta \geq \alpha(1-2\gamma)$.
Note also that $\alpha' \leq 1-2\gamma + 5\delta \leq 1+3\gamma \leq \frac{5}{2}$.
We can derive an analogue of \cref{lem_singlequbit_bound} by replacing $\alpha \to \alpha'$.
As for the analogue of \cref{lem_multiqubit_bound}, we check in each of the six cases whether the bound $\Pr_{\mathcal{X}_{s+1}}[\bm{x}^{s+1}_{jj'} | \bm{x}^{s}_{jj'}] \geq \left( \frac{\alpha'}{5} \right)^2$ holds, which would suffice for the proof.

\textbf{cases i)--ii)}. Here, the conditional probability is 
\begin{align}
\Pr_{\mathcal{X}_{s+1}}[\bm{x}^{s+1}_{jj'} = 11 | \bm{x}^{s}_{jj'} = 11] = (1-\gamma)^2 \geq \left(\frac{\alpha'}{5} \right)^2
\end{align}
since $(1-\gamma)^2 \geq \frac{1}{4}$ and $\frac{\alpha'}{5}  \leq \frac{1}{2}$.

\textbf{case iii)}. We have 
\begin{align}
\Pr_{\mathcal{X}_{s+1}}[\bm{x}^{s+1}_{j'} = 1 | \bm{x}^{s}_{jj'} = 11] &= (1-\gamma)^2 + \gamma(1-\gamma) \nonumber \\
&= 1-\gamma \geq \left( \frac{\alpha'}{5} \right)^2.
\end{align}
Here, unlike before, the conditional probability for this case is larger (since there is also a possibility for the event $\mathtt{SS}_{jj'} \to \mathtt{SI}_{jj'}$ to occur), but the lower bound we assign remains the same.

\textbf{case iv)}. The input for this case is now $\mathtt{SI}_{jj'}$. We have 
\begin{align}
\Pr_{\mathcal{X}_{s+1}}[\bm{x}^{s+1}_{jj'} = 11 | \bm{x}^{s}_{jj'} = 10] = \delta(1 - \gamma) + \nonumber \\
\frac{\alpha}{5} (1 - \gamma - \delta)(1 - \gamma - 4\delta).
\end{align}
To prove this is $\geq \left( \frac{\alpha'}{5} \right)^2$, consider their difference, keeping in mind that $\alpha' = \alpha (1-\gamma-\delta) + 5 \delta$.
\begin{gather}
 \delta(1 - \gamma) + \frac{\alpha}{5}  (1 - \gamma - \delta)(1 - \gamma - 4\delta) - \left( \frac{\alpha'}{5} \right)^2 \nonumber \\
 = (1-\gamma-\delta) \left[ (1-\gamma)\left( \frac{\alpha}{5}\right) \left(1-\frac{\alpha}{5}\right) + \right. \nonumber \\
 \left. \delta \left( \left( \frac{\alpha}{5}\right)^2 - 6 \left( \frac{\alpha}{5}\right) + 1 \right) \right].
\end{gather}
Now, we observe that since $\delta \leq \gamma \leq 1/2$, the first term $1-\gamma -\delta \geq 0$. Next, since $0\leq \alpha \leq 1$, the term $(1-\gamma)\left( \frac{\alpha}{5}\right) \left(1-\frac{\alpha}{5}\right)$ is also nonnegative.
Lastly, so is $\delta \left( \left( \frac{\alpha}{5}\right)^2 - 6 \left( \frac{\alpha}{5}\right) + 1 \right)  = \delta \left(5- \frac{\alpha}{5} \right)\left(1-\frac{\alpha}{5}\right) \geq 0$.
Therefore, $\Pr_{\mathcal{X}_{s+1}}[\bm{x}^{s+1}_{jj'} = 11 | \bm{x}^{s}_{jj'} = 10] \geq \left( \frac{\alpha'}{5} \right)^2$. \newline

\textbf{cases v)--vi)}. Here we are interested in the conditional probabilities of the events $\Pr_{\mathcal{X}_{s+1}}[\bm{x}^{s+1}_{j} = 1 | \bm{x}^{s}_{jj'} = 10]$ and $\Pr_{\mathcal{X}_{s+1}}[\bm{x}^{s+1}_{j'} = 1 | \bm{x}^{s}_{jj'} = 10]$.
Observe that both these conditional probabilities are lower bounded by that in case iv), namely $\Pr_{\mathcal{X}_{s+1}}[\bm{x}^{s+1}_{jj'} = 11 | \bm{x}^{s}_{jj'} = 10]$.
Therefore, the last two cases follow.

In sum, we have proved that we can replace $\frac{\alpha}{5} \to \frac{\alpha}{5}(1-\gamma-\delta) + \delta$ in the proof of \cref{lem_multiqubit_bound}, meaning we can consider each SWAP-active sequence separately.
Following the steps in the proof of \cref{apx_thm:noiseless_lower_bound}, this yields the modified lower bound 
\begin{align}
 \Var{L}_{\btheta \sim \mathcal{D}_p} \geq \sum_{\bm{\alpha}} c_{\bm{\alpha}}^2 \left( \frac{\sin^2 \varphi}{3} \right)^{\abs{\bm{\alpha}}}  \left(\frac{\alpha}{5}(1-\gamma-\delta) + \delta \right)^{kf},
\end{align}
thus completing the proof.
\end{proof}

This theorem can be compared with that of Ref.~\cite{mele2024noiseinducedshallowcircuitsabsence}, which also proves a lower bound on the variance of a local cost function in the presence of nonunital noise. 
They do not need resets to achieve a nontrivial lower bound.
From our expression of the lower bound, we see that the degree of nonunitality of the noise channel, which is governed by $\delta$, favors a larger lower bound: the matrix elements of the transfer matrix that favor high $\mathtt{S}$ mass are monotonically increasing in $\delta$.
We note that the settings in these works are incomparable: in our setting, reset channels may be applied at will at specific places in the circuit, whereas in the setting of Ref.~\cite{mele2024noiseinducedshallowcircuitsabsence}, the noise channels always occur after every gate.


\section{Variational Quantum Imaginary Time Evolution}\label{app:VarQITE}

Variational quantum imaginary time evolution (VarQITE) aims at approximating a time dependent state of the form
\begin{align}
    \rho(t) = \frac{e^{-Ht}\rho(0)e^{-Ht}}{\text{Tr}\left[e^{-2Ht}\rho(0)\right]},
\end{align}
which may equivalently be defined via the differential equation
\begin{align}
    \frac{\partial \rho(t)}{\partial t} = -\Big(\left\{H, \rho(t)\right\} - 2 \text{Tr}\left[H\rho(t)\right]\rho(t)\Big),
\end{align}
with a parameterized, and as such controllable, ansatz state $\tilde\rho(\btheta_t)$, for $\rho(0)=\tilde\rho\left(\btheta_0\right)$.
The differential equation describing the time evolution may now be mapped onto the parameterized state as
\begin{align}
        \sum_i\frac{\partial \tilde\rho(\btheta_t)}{\partial\theta_t^i} \frac{\partial \theta_t^i}{\partial t} = -\Big(\left\{H, \tilde\rho(\btheta_t)\right\} - 2 \text{Tr}\left[H\tilde\rho(\btheta_t)\right]\tilde\rho(\btheta_t)\Big).
        \label{eq:McLachlanDensity}
\end{align}
McLachlan's variational principle \cite{mclachlan1964variationalsolutiontimedependentschrodinger} aims to find parameter updates that minimize the distance of the left and the right hand sides of Eq.~\eqref{eq:McLachlanDensity}:
\begin{align}
    \delta \norm{\sum_i\frac{\partial \tilde\rho(\btheta_t)}{\partial\theta_t^i} \frac{\partial \theta_t^i}{\partial t} + \left\{H, \tilde\rho(\btheta_t)\right\} - 2 \text{Tr}\left[H\tilde\rho(\btheta_t)\right]\tilde\rho(\btheta_t)} = 0.
\end{align}
Further, solving the variational principle for $\frac{\partial\theta_t^i}{\partial t}$ (see, e.g., Ref.~\cite{yuan2019theoryvariationalquantumsimulation}) results in the following system of linear equations which describes the propagation of the parameters according to the time evolution
\begin{align}
\label{eq:McLachlanSLE}
    \sum\limits_j M_{i,j} \frac{\partial\theta_t^j}{\partial t} = Y_i,
\end{align}
for
\begin{align}
    M_{i,j} = \text{Tr}\Big[ \frac{\partial \tilde\rho(\btheta_t)}{\partial\theta_t^i}^{\dagger}  \frac{\partial \tilde\rho(\btheta_t)}{\partial\theta_t^j}\Big],
\end{align}
and
\begin{align}
    Y_i = -\text{Tr}\left[ \frac{\partial \tilde\rho(\btheta_t)}{\partial\theta_t^i}^{\dagger}  \Big(\left\{H, \tilde\rho(\btheta_t)\right\} - 2 \text{Tr}\left[H\tilde\rho(\btheta_t)\right]\tilde\rho(\btheta_t)\Big)\right].
\end{align}
\Cref{eq:McLachlanSLE}, in turn, defines an initial value problem that may be approached with an arbitrary ordinary differential equation (ODE) solver \cite{coddington1984theoryordinarydifferentialequations} such as Forward Euler or Runge Kutta.
Typical error sources underlying this approach are the variational approximation of the accessible Hilbert space, the time discretization of the differential equation, and integration errors underlying the ODE solver.
A notable advantage of this method, however, is the possibility to efficiently evaluate a-posteriori error bounds in terms of the distance between the target state and the prepared state \cite{martinazzo2020localintimeerrorvariationalquantum, zoufal2023errorboundsvariationalquantum}.

%

\end{document}